\newtheorem{theorem}{Theorem}[section]
\newtheorem{lemma}[theorem]{Lemma}
\newtheorem{corollary}[theorem]{Corollary}
\newtheorem{proposition}[theorem]{Proposition}
\newtheorem{definition}[theorem]{Definition}
\newtheorem{remark}[theorem]{Remark}
\newtheorem{assumption}[theorem]{Assumption}
\crefname{lemma}{lemma}{lemmas}
\Crefname{Lemma}{Lemma}{Lemmas}
\crefname{ineq}{inequality}{inequalities}
\Crefname{Ineq}{Inequality}{Inequalities}
\crefname{definition}{definition}{definitions}
\Crefname{definition}{Definition}{Definitions}
\crefname{prop}{proposition}{propositions}
\Crefname{Prop}{Proposition}{Propositions}
\crefname{section}{Section}{Sections}
\crefname{subsubsubsection}{Section}{Sections}
\crefname{remark}{Remark}{Remarks}
\crefname{figure}{Figure}{Figures}
\crefname{table}{Table}{Tables}
\crefname{theorem}{Theorem}{Theorems}
\Crefname{theorem}{Theorem}{Theorems}
\crefname{algo}{Algorithm}{Algorithms}
\DeclareMathOperator*{\Ex}{\mathbb{E}}
\DeclareMathOperator*{\E}{\mathbb{E}}
\DeclareMathOperator*{\argmax}{\arg\!\max}
\newcommand{\Halmos}{\hfill\ensuremath{\square}}
\definecolor{lightgray}{RGB}{240,240,240}
\newmdenv[
  backgroundcolor=lightgray,
  linecolor=gray,
  linewidth=1pt,
  roundcorner=5pt,
  skipabove=10pt,
  skipbelow=10pt,
  innertopmargin=8pt,
  innerbottommargin=8pt,
  leftmargin=0pt,
  rightmargin=0pt
]{mdbox}
\newcommand{\BR}{\operatorname{BR}}
\newcommand{\br}{\mathbb{R}}
\newcommand{\bs}{\mathbb{S}}
\providecommand{\vs}{}
\renewcommand{\vs}{\mathbf{s}}
\newcommand{\vcs}{\mathbf{S}}
\newcommand{\calH}{\mathcal{H}}
\newcommand{\calI}{\mathcal{I}}
\newcommand{\vp}{\mathbf{p}}
\newcommand{\Vst}{V^{\star}}
\newcommand{\Vstr}{V^{\star}}
\newcommand{\yst}{y^\star}
\newcommand{\calF}{\mathcal{F}}
\newcommand{\Secref}[1]{\hyperref[#1]{Section \ref*{#1}}}
\newcommand{\Appref}[1]{\hyperref[#1]{Appendix \ref*{#1}}}
\newcommand{\calerr}{\mathrm{CalErr}}
\newcommand{\numP}{m}
\newcommand{\numA}{k}
\newcommand{\actionP}{\calA_P}
\newcommand{\actionA}{\calA_A}
\newcommand{\unif}{\operatorname{Unif}}
\newcommand{\bbR}{\mathbb{R}}
\newcommand{\Umax}{U_{\max}}
\newcommand\numberthis{\addtocounter{equation}{1}\tag{\theequation}}
\newcommand{\xhdr}[1]{\vspace{0mm} \noindent{\bf #1}}
\newcommand{\cF}{\mathcal{F}}
\newcommand{\cH}{\mathcal{H}}
\newcommand{\cE}{\mathcal{E}}
\newcommand{\calS}{\mathcal{S}}
\newcommand{\nrepeat}{l}
\newcommand{\ntrials}{\Phi}
\newcommand{\polytope}{P}
\newcommand{\epscal}{\varepsilon_{\text{cal}}}
\newcommand{\diam}[1]{ \texttt{diam}\!\left({#1}\right)}
\newcommand{\epochlen}{M}
\newcommand{\squishlist}{
 \begin{list}{$\bullet$}
  { \setlength{\itemsep}{0pt}
     \setlength{\parsep}{3pt}
     \setlength{\topsep}{3pt}
     \setlength{\partopsep}{0pt}
     \setlength{\leftmargin}{1.5em}
     \setlength{\labelwidth}{1em}
     \setlength{\labelsep}{0.5em} } }
\newcommand{\squishlisttwo}{
 \begin{list}{$\bullet$}
  { \setlength{\itemsep}{0pt}
    \setlength{\parsep}{0pt}
    \setlength{	opsep}{0pt}
    \setlength{\partopsep}{0pt}
    \setlength{\leftmargin}{2em}
    \setlength{\labelwidth}{1.5em}
    \setlength{\labelsep}{0.5em} } }
\newcommand{\squishend}{
  \end{list}  }
\newcommand{\reg}{\mathrm{Reg}}
\newcommand{\swapreg}{\mathrm{SwapReg}}
\newcommand{\x}{\mathbf{x}}
\newcommand{\ball}{\mathbf{B}}
\newcommand{\Pone}{{\bf (P1)}}
\newcommand{\Ptwo}{{\bf (P2)}}
\newcommand{\Cone}{{\bf (C1)}}
\newcommand{\Ctwo}{{\bf (C2)}}
\newcommand{\calA}{\mathcal{A}}
\newcommand{\cG}{\mathcal{G}}
\providecommand{\h}{}
\renewcommand{\h}{\mathbf{h}}
\newcommand{\bp}{\mathbf{p}}
\newcommand{\hst}{\mathbf{h}^{\star}}
\newcommand{\eps}{\varepsilon}
\newcommand{\adanormalhedge}{\textsc{AdaNormalHedge}}
\newcommand{\lazygdwog}{\textsc{LazyGDwoG}}
\newcommand{\approxmem}{\textsc{ApproxMem}}
\newcommand{\indicator}[1]{\mathbf{1}\{#1\}}
\newcommand{\calG}{\mathcal{G}}
\newcommand{\stepa}[1]{\overset{\rm (a)}{#1}}
\newcommand{\stepb}[1]{\overset{\rm (b)}{#1}}
\newcommand{\stepc}[1]{\overset{\rm (c)}{#1}}
\newcommand{\stepd}[1]{\overset{\rm (d)}{#1}}
\newcommand{\Ninit}{N_{\text{init}}}
\newcommand{\vol}{\text{Volume}}
\newcommand{\epsopt}{\varepsilon_{\text{opt}}}
\newcommand{\epsrobust}{\varepsilon_{\text{robust}}}
\newcommand{\Alg}{\mathscr{A}}
\begin{document}

\title{Calibrated Stackelberg Games:\\ Learning Optimal Commitments Against Calibrated Agents}
\author{
Nika Haghtalab\thanks{University of California, Berkeley, \texttt{nika@berkeley.edu}}\quad
Chara Podimata\thanks{Massachusetts Institute of Technology, \texttt{podimata@mit.edu}}\quad
Kunhe Yang\thanks{University of California, Berkeley, \texttt{kunheyang@berkeley.edu}}
}
\date{}

\maketitle

\allowdisplaybreaks
\begin{abstract}
We introduce \emph{Calibrated Stackelberg Games (CSGs)}, a generalization of the standard Stackelberg Games (SGs) framework.
In CSGs, a principal repeatedly interacts with an agent who (contrary to standard SGs) does not have direct access to the principal's action but instead best-responds to \emph{calibrated forecasts} about it.
This framework provides a powerful and realistic modeling tool that goes beyond assuming that agents use ad hoc and highly specified algorithms for interacting in strategic settings and instead builds on statistical foundations of forecasts and calibration.
We show that in CSGs, despite both the principal and the agent having less information than in standard SGs, the principal's optimal utility remains upper and lower bounded by the Stackelberg value of the one-shot game, in both \emph{finite} and \emph{continuous} settings.

Alongside CSGs, we develop stronger notions of calibration and corresponding algorithms that address two central challenges for calibration in game-theoretic environments.
First, achieving point-wise calibration typically incurs an error that scales exponentially with the dimension of the strategy space.
Second, the principal's convergence rate in CSGs depends critically on the adaptivity of the agent's calibration algorithm.
To address these challenges, we establish a meaningful, efficiently achievable relaxation of calibration based on conditioning on best-response regions.
This yields the first notion of calibration in games with a statistical rate that only depends on the number of agents' actions rather than the dimension of the principal's strategy space and that leads to no-swap regret for the agent.
We further develop adaptive calibration algorithms for the agents that provide fine-grained, any-time calibration guarantees against adversarial sequences, enabling the principal to achieve faster convergence in CSGs.
\end{abstract}

\section{Introduction}\label{sec:intro}

Stackelberg games (SGs) are a canonical model for strategic principal-agent interactions, considering a principal (or ``leader'') that commits to a strategy $\h$ and an agent (or ``follower'') who observes this strategy and best responds by taking action $\BR(\h)$. These games are inspired by real-world applications such as economic policy design (where a tax policymaker establishes
rules for triggering audits before taxes are filed) and defense (where a principal allocates security resources to high-risk targets before vulnerabilities are exploited) and many more, see e.g.,~\citep{balcan2015commitment, GreenSGs,Tambe12,hardt2016strategic, dong2018strategic, chen2020learning,conitzer2006computing,xu2021dual}. By anticipating the agent's best-response, a principal who knows the agent's payoff function can calculate the \emph{optimal Stackelberg strategy} guaranteeing her the optimal utility $\Vst$, which is called the Stackelberg value. 
In recent years, \emph{repeated} SGs have gained popularity in addressing settings where the agent's payoff function is \emph{unknown} to the principal, but instead needs to be learned from repeated interactions. In this setting, the principal, who can only observe the agents' actions, aims to deploy a sequence of strategies $\h_1, \dots, \h_T$ over $T$ rounds whose average payoff is at least as good as $\Vst$, i.e., the value of her optimal strategy had she known the agent's payoffs in advance.

Repeated SGs are often studied under relatively strict assumptions on the agent's knowledge and algorithmic behavior. Examples include requiring the agent to best respond per round using $y_t = \BR(\h_t)$~\citep{balcan2015commitment,dong2018strategic}, necessitating the agent to precisely know the principal's strategy at all times (e.g., the attacker must anticipate the exact probabilistic allocation of the defender's security resources), or employing one of many online optimization algorithms whose every detail (down to the learning step size) can significantly impact the principal's utility~\citep{zrnic2021leads}.

In this paper, instead of working with restrictive or often unrealistic assumptions on the agent's knowledge and behavior, we build on foundational decision-theoretic concepts, such as \emph{forecasts} and \emph{calibration}~\citep{dawid1982well,foster1997calibrated,foster1998asymptotic}. In practice, while agents may not observe the principal's true strategies $\h_t$, they can form \emph{calibrated forecasts} --- a notion of consistency in beliefs about $\h_t$ --- to which they then best respond.
{From an uncertainty quantification point of view, calibrated forecasts provide a principled way of dealing with uncertainty of the environment so that agents can still perform well in its presence.}
Indeed, such a decision-theoretic perspective on game dynamics led to seminal results on converging to correlated and Nash equilibria in simultaneous multi-player games~\citep{foster1997calibrated,kakade2008deterministic}. Our work brings the perspective of calibrated forecasts to principal-agent games.

\subsection{Calibrated Stackelberg Games}

In this paper, we introduce \emph{Calibrated Stackelberg Games (CSGs)}---a class of games that strictly generalizes standard Stackelberg games. In a CSG, a principal and an agent repeatedly interact over $T$ time steps. As in standard repeated SGs, in each round the principal plays a strategy $\h_t$ from her strategy set.

Where CSGs depart from the standard model is in how the agent may respond. The agent, \emph{without observing $\h_t$}, forms a \emph{prediction} $\bp_t$ about the principal’s strategy. The agent then best responds to $\bp_t$ by playing $y_t \in \BR(\bp_t)$. Both players observe some information about each other’s deployed strategies and receive their corresponding utilities. These games are called ``calibrated'' because the sequence of predictions $\bp_{1:T}$ made by the agent is required to be \emph{calibrated}\footnote{See Def.~\ref{def:calibration-adaptive-CSG} for details.} with respect to the strategies actually played by the principal $\h_{1:T}$. 

Importantly, \emph{CSGs} directly generalize the standard model of repeated SGs, {capturing settings in which the agent has uncertainty about the principal’s chosen action or strategy}.

\subsubsection{Characterizing Principal's Optimal Utility.}~
In standard Stackelberg games, $\Vst$ characterizes the optimal utility of the principal.  
As a natural analog, we ask:

\xhdr{Q1. \emph{What characterizes the principal's optimal utility in CSGs?}}

Unlike in repeated Stackelberg games---where the agent’s behavior is fully specified as a best response---in CSGs the only requirement is that the agent’s predictions be \emph{calibrated}. Calibration is a common property satisfied by many forecasting procedures, rather than a specification of any particular algorithm the agent must follow. Thus, it is not even clear a priori whether calibration constrains the agent’s behavior sufficiently for the principal’s utility to be meaningfully analyzable or her optimal utility to be characterized.

Despite this lack of structural assumptions on the agent, we show that the principal’s optimal utility in CSGs converges exactly to $\Vst$\,---\,nothing more, nothing less\,---\,in games with either \emph{finite} (\Cref{sec:learning-algos}) or \emph{continuous} (\Cref{sec:continuous}) action spaces. Our answer to \textbf{Q1} thus establishes that the principal can meaningfully converge to $\Vstr$, the value she could have achieved in the one-shot Stackelberg game if she had known the agent’s utility function.

\begin{theorem}[Informal version of \Cref{thm:upper-bound,thm:ETC-regr}]
\label{thm:informal}
Assume the agent is calibrated. Then, for \emph{any} algorithm the principal uses to select $\h_{1:T}$,
\[
\lim_{T \to \infty} \frac{1}{T}\sum_{t=1}^T U_P(\h_t, y_t) \;\le\; \Vst.
\]
Furthermore, there exists a learning algorithm for the principal for choosing strategies $\h_{1:T}$ such that
\[
\lim_{T \to \infty} \frac{1}{T}\sum_{t=1}^T U_P(\h_t, y_t) \ge \Vst.
\]
\end{theorem}

Note that $\Vst$ is a benchmark that gives both players more power: the principal knows the agent's utility and the agent observes the principal's strategy. 
We find it somewhat surprising then that the optimal achievable principal utility in CSGs, in which both players work with significantly less knowledge, converges to $\Vst$ exactly.

Our definition and results also immediately apply to two important subclasses of Stackelberg games: Stackelberg Security Games~\citep{Tambe12,balcan2015commitment,haghtalab2022learning} and strategic classification~\citep{dong2018strategic}. Consequently, we obtain the first learning guarantees against calibrated agents in both of these settings.

\subsubsection{Designing Efficient and Adaptive Calibration Algorithms.}~
Producing calibrated predictions is itself an interesting and highly non-trivial problem, especially in high-dimensional settings such as games. In particular, applying the classical approach of \citet{foster1998asymptotic} with a naive discretization leads to calibration error that grows exponentially with the dimension of the strategy space. This inefficiency not only impacts the calibration rate but also affects the convergence rate to $\Vst$. 
A second challenge concerns the \emph{adaptivity} of the agent’s calibration algorithm: the principal’s convergence rate in CSGs depends critically on how quickly the agent’s forecasting procedure can adjust to a changing environment, providing the type of calibration guarantees that are valid in any window of time.

\xhdr{Q2. \emph{Are there adaptive forecasting algorithms for the agent that achieve calibration at a polynomial dependence on the size of the games?}}

We answer {\bf Q2} by introducing a general approach for obtaining a fine-grained, anytime notion of calibration with polynomial dependence on the number of strategies. This approach is of independent interest. We then specialize it to the setting of calibration in games.

To address the efficiency challenge, we provide a meaningful and efficiently achievable relaxation of calibration in CSGs by conditioning on \emph{best-response regions} rather than individual points. 
This relaxation avoids the exponential dependence on the dimensionality of the principal's strategy space and yields the \emph{first} efficiently achievable relaxation of calibration in games with a statistical rate that depends only on the number of the agent's actions\footnote{{This also yields the first efficiently achievable calibration notion that guarantees the agent incurs no \emph{swap regret} \citep{blum2007external,hart2000simple} when making decisions by best responding to the calibrated forecasts. 
Since the preliminary version of this work, \citet{noarovhigh} have extended this perspective to handle arbitrary conditioning events beyond the best-response regions of two-player games.}}.

To address the adaptivity challenge, we introduce a new notion of calibration termed \emph{adaptive calibration}. 
This notion, inspired by \emph{adaptive regret bounds} in online learning,
requires that predictions $\bp_{s:t}$ be calibrated against the realized strategies $\h_{s:t}$ in any interval $[s, t]$, with rates that depend on $(t - s)$ rather than the overall horizon $T$. 
Adaptive calibration is a strengthening of the classical notion of calibration, which applies only to the full horizon $[1, T]$. 

Moreover, {as we show in \Cref{sec:non-adaptive},} adaptive calibration bounds (with or without the aforementioned relaxation meant to promote efficiency in the size of the game)
allow the principal to achieve a faster convergence rate in $T$ in repeated interactions.

We give a principled approach for attaining adaptive calibrated forecasting. 

\begin{theorem}[Informal version of \Cref{thm:exist-adaptive-calibration}]
There exists a parameter-free forecasting algorithm that achieves adaptive calibrated forecasting on any interval $[s, t]$ with rate 
$\widetilde{O}\left(\sqrt{1/(t-s)}\right)$.\footnote{See \Cref{thm:exist-adaptive-calibration} for the dependence of this rate on the dimensionality of the problem.}
\end{theorem}

Our technique builds on insights from the multi-calibration and multi-objective literature~\citep{haghtalab2023unifying}, which connect calibration to no-regret and best-response dynamics, as well as from the sleeping-experts framework that has long been a staple of the online learning literature~\citep{blum1997empirical,freund1997using}. 
In particular, the perspective of \citet{haghtalab2023unifying} allows us to frame the design of \emph{adaptive calibration} methods as the design of a pair of no-regret and best-response algorithms in a problem with objectives that measure the predictions bias in every appropriate level-set of prediction and time sub-interval.
This, in turn, enables us to draw inspiration from adaptive regret bounds more broadly, such as the seminal AdaNormalHedge algorithm of \citet{luo2015achieving}, which ensures that a no-regret algorithm’s performance is valid on every interval $[s, t]$. 
The multi-objective framework of \citet{haghtalab2023unifying} then lifts these adaptive no-regret bounds to obtain adaptive calibrated forecasting guarantees in our setting.

Together, these results allow us to achieve calibration in games that are not only computationally tractable but also provide stronger adaptive guarantees.

\subsection{Related work}

\paragraph{Repeated Stackelberg games.} 
Learning optimal Stackelberg strategies has been studied in the offline~\citep{conitzer2006computing} and the online setting, where only instantaneous best-responses are observable (i.e., no access to a best-response oracle). Key applications include Stackelberg Security Games (e.g., ~\citep{blum2014learning,balcan2015commitment,peng2019learning,xu2016playing}) and strategic classification (e.g.,~\cite{dong2018strategic,chen2020learning,ahmadi2021strategic,ahmadi2023fundamental}). There is a line of work that treats repeated Stackelberg games as meta-game where both players choose game-playing algorithms as their strategies, and studies optimal strategies for infinite~\citep{zuo2015optimal} or finite~\citep{arunachaleswaran2022efficient} horizons. Other works consider learning in the presence of non-myopic agents that best respond by maximizing discounted utilities~\citep{amin2013learning,haghtalab2022learning,abernethy2019learning}.
The main distinction to our work is that in our setting, the agents have only calibrated forecasts regarding the principal's strategies rather than full knowledge of them, which has been the central assumption in learning in Stackelberg games thus far.
{
Notably, our work is technically closest to that of \citet{haghtalab2022learning}, and we build on their result on optimizing the principal's utility using approximate best-response queries. Conceptually, however, our work takes a diverging perspective of assuming the agent has only minimal knowledge of the principal's strategies and use forecasting algorithms to deal with their uncertainty, whereas \citet{haghtalab2022learning} assumes the agent has full knowledge of the principal's long-term learning algorithm and their primary goal is to enable the principal to learn in the presence of non-myopic agent behavior.
}

\paragraph{Stackelberg games beyond best responses.}

Recent works have studied variants of repeated Stackelberg games with different agent strategic behaviors beyond best responding.
A prominent line of research focuses on agents employing various forms of no-regret learning, and studies how the principal’s cumulative utility compares to the one-shot Stackelberg value.
Prior works have studied agents using mean-based learning algorithms~\citep{braverman2018selling}, gradient descent~\citep{fiez2019convergence,fiez2020implicit}, no-external regret algorithms~\citep{braverman2018selling,deng2019strategizing,zrnic2021leads}, no-internal (swap)
regret algorithms~\citep{deng2019strategizing,mansour2022strategizing}, and no-counterfactual internal regret algorithms~\citep{camara2020mechanisms}.
{Our upper bound on the principal’s utility against a calibrated agent is closely related to the results of \citet{deng2019strategizing,mansour2022strategizing} showing that the principal's average reward cannot exceed the Stackelberg value when agents have no swap regret. While our result uses the calibration property directly, we also elaborate on its relationship to swap regret in \cref{app:swap-regret}.
}
Another research direction assumes agents approximately best respond due to uncertainty in the principal's strategy~\citep{blum2014lazy,an2012security,muthukumar2019robust} or their own~\citep{letchford2009learning,kiekintveld2013security,kroer2018robust} and study \emph{robust Stackelberg equilibria}~\citep{pita2010robust,gan2025robust} to optimize the principal's strategy against approximate best responding agents. {Most work in this literature assumes that the principal has either exact or noisy access to the agent’s utility function, with the exception of~\citet{zrnic2021leads}, who consider the special case of strategic classification in which the principal observes only the agent’s responses.}
The core differences of the aforementioned works to our framework are that (1) we work in an online learning setting where the principal does not have initial knowledge about the agent's utility function and has to learn from their behaviors; (2) we do not assume a specific agent algorithm but focus on properties of agent beliefs that are shared by many algorithms.

\paragraph{Calibration and learning dynamics in games.} The study of calibration, introduced by~\citet{dawid1982well}, dates back to seminal work by \citet{foster1998asymptotic,hart2022calibrated} that showed the existence of asymptotic online calibration algorithms against any adversarial sequence of events.
Applying calibration to game dynamics, \citet{foster1997calibrated} introduced the concept of \emph{calibrated learning}, which refers to a player best responding to calibrated forecasts of others' actions. They demonstrated that the game dynamics of all players performing calibrated learning converge to the set of correlated equilibria. This is complemented by the results of \cite{kakade2008deterministic,foster2018smooth,foster2021forecast} who showed that \emph{smooth and continuous} variants of calibrated learning dynamics converge to Nash equilibrium.
Our work differs from the above works by studying game dynamics that converge to the Stackelberg equilibrium, where only the follower (agent) performs calibrated learning, while the principal can observe the agents' responses.

To the best of our knowledge, our work is the first to characterize relaxed notions of calibrations for assisting an agent with no-(swap) regret decision-making in repeated games, 
with a statistical rate that only depends on the number of the agent's actions rather than the dimension of the principal's strategy space.

Since the publication of the preliminary version of this paper, this has become an important research direction in the literature, and there has been rapid progress both in studying calibration for repeated decision-making and in understanding calibrated learning as a behavioral assumption for multi-agent interactions.
On the calibration front, \citet{noarov2023high,noarovhigh} develop efficient algorithms for \emph{online multicalibration}, ensuring that any downstream agent who best responds to the forecasts achieves vanishing (swap) regret, which generalize our best response correspondence to more general class of events.  
A recent line of work focuses on developing efficiently achievable calibration measures that simultaneously guarantees no regret decision-making for all downstream agents who best respond to the calibrated predictions according to their own utility functions, where \citet{kleinberg2023u,luo2024optimal} focuses on minimizing external regret, and
\citet{roth2024forecasting,hu2024predict} focuses on minimizing swap regret. Recent advances also provide efficient algorithms for high-dimensional calibration~\cite{peng2025high} or swap regret minimization~\cite{peng2024fast,dagan2023external}, where they achieve improved dependency on the dimension (number of actions) but with a weaker dependency on the time horizon.

There has also been recent developments in understanding repeated interactions in games under different behavioral models together with the benchmark for measuring \emph{performance} in different learning dynamics. The concurrent work of \citet{brown2023learning} studies agents that are $\Phi$-regret-minimizing; in their work, they provide an algorithm that learns the Stackelberg equilibria in unknown games against no-adaptive-$\Phi$-regret agents, but they did not specify the convergence rates of said algorithm. Note that $\Phi$-no-regret is a more general behavioral model than calibrated best-response, and hence, we expect our convergence rates to be better {and more specialized to calibrated agents}. \citet{collina2024efficient} considers an extended setting where the utilities in each round are state-dependent and the state is unknown to both the principal and the agent, and they show that calibrated forecasts of the state can be used to remove the common prior assumption against no counterfactual-internal-regret agents. Another line of work investigates the principal's benchmarks and learning algorithms against agents that are long-term rational in choosing their adaptive strategies. \citet{ananthakrishnan2024knowledge} investigate whether the principal can overcome information asymmetry by analyzing equilibria of the meta-game between game-playing algorithms. \citet{arunachaleswaran2024pareto,arunachaleswaran2025learning} study the principal's learning algorithms against unknown agents that are Pareto-optimal or utility-maximizing in a Bayesian setting.

\paragraph{Adaptivity and sleeping experts.}
The notion of adaptive calibration that we introduce in \Cref{sec:model} is related to notion of adaptivity of regret bounds in online learning~\citep{luo2015achieving,daniely2015strongly,jun2017improved}.
Our design of adaptively calibrated forecasting algorithms builds on the \emph{multi-objective learning} perspective of online (multi-)calibration~\citep{lee2022online,haghtalab2023unifying} and the powerful tool of \emph{sleeping experts}~\citep{blum1997empirical,freund1997using,luo2015achieving} which have proven useful in various applications such as fairness~\citep{blum2020advancing}. {These works are methodologically related to our Section~\ref{sec:algos-adaptive-calibration} algorithms', although we \emph{do} need to \emph{carefully} adapt them in order to achieve our notion of adaptive calibration.}

\section{Model \& Preliminaries}\label[section]{sec:model}

We begin this section with some basic definitions about forecasts, calibration, and games, and then introduce the class of games that we study: \emph{Calibrated Stackelberg Games} (CSGs).
\subsection{Forecasts, Calibration, and Games}

\newcommand{\actionset}{A}
\newcommand{\forecastset}{C}
\paragraph{Adaptively Calibrated Forecasts.} 

We use $\actionset$ to denote the space of {outcomes} and $\forecastset\supseteq \actionset$ to denote the space of forecasts. 
A (stochastic) forecasting procedure $\sigma$ {is an online procedure that takes any adversarial  sequence of outcomes $\h_t\in\actionset$ for $t\in [T]$, and on round $t$ outputs (possibly at random) forecast $\bp_t\in \forecastset$, based on outcomes and forecasts $\h_\tau,\bp_\tau$, for $\tau\in [t-1]$.}

To define calibrated forecasts, let us first introduce the notion of \emph{binning functions}.

\begin{definition}[Binning~\citep{foster2021forecast}]
    \label[definition]{def:binning}
    We call a set $\Pi = \{w_i\}_{i \in [n]}$ a \emph{binning function}, if each $w_i: \forecastset \to [0,1]$ maps forecasts to real values in $[0,1]$, and for all $\bp\in\forecastset$ we have $\sum_{i\in[n]}w_i(\bp)=1$.
\end{definition}

{With the above binning functions, we define the adaptive calibration error with respect to $\Pi$ as follows. At a high level, conditioned on any bin, the calibration error measures the difference between the expected forecasts that fall in that bin and the corresponding expected outcome.
\begin{definition}[$\Pi$-Adaptive Calibration Error]
    \label[definition]{def:calibration-error}
    For any time interval $[s,t]$, let $\bp_{s:t}$ be the sequence of forecasts and $\h_{s:t}$ be the sequence of outcomes.
    For a given binning $\Pi=\{w_i\}_{i\in [n]}$ with size $n$, and $\ \forall i\in [n]$, define the $\Pi$-adaptive calibration error as
    \begin{align}
        &\calerr_i\left(\h_{s:t},\bp_{s:t}\right)\triangleq
            \frac{n_{[s,t]}(i)}{t-s}\cdot\left\|\bar{\bp}_{[s,t]}(i)-\bar{\h}_{[s,t]}(i)\right\|_{\infty},
            \label{eq:def-calibration-error}
    \end{align}
where during interval $[s,t]$, $n_{[s,t]}(i)\triangleq\sum_{\tau=s}^t w_i(\bp_\tau)$ is the effective number of times that the forecast belongs to bin $i$ (i.e., bin $i$ is activated), $\bar{\bp}_{[s,t]}(i) \triangleq\sum_{\tau=s}^t \frac{w_i(\bp_\tau)}{n_{[s,t]}(i)}\cdot \bp_\tau$ is the expected forecast that activates bin $i$, $\bar{\h}_{[s,t]}(i)\triangleq\sum_{\tau=s}^t \frac{w_i(\bp_\tau)}{n_{[s,t]}(i)}\cdot {\h_{\tau}}$ is the expected outcomes corresponding to bin $i$.
\end{definition}
We say that a forecasting procedure is adaptively calibrated if it achieves vanishing calibration error on any adversarial sequence of outcomes and any sub-interval of time steps. 
\begin{definition}[$(\eps, \Pi)$-Adaptively Calibrated Forecasts]
    \label{def:calibration-adaptive}
    
    A forecasting procedure $\sigma$ is $\eps$-\emph{adaptively calibrated} to binning $\Pi=\{w_i\}_{i\in [n]}$ with rate $r_{\delta}(\cdot)\in o(1)$, if for all adversarial sequences of actions $\h_1,\cdots,\h_T$, where $\h_t\in \actionset$, $\sigma$ outputs forecasts $\bp_t\in \forecastset$ for $t\in [T]$  such that with probability at least $1-\delta,$ we have that 
$\forall s,t$ such that $1\le s<t\le T$, and $\ \forall i\in [n]$:
    \begin{align*}        \calerr_i\left(\h_{s:t},\bp_{s:t}\right)\le r_\delta(t-s)+\eps.
    \end{align*}
\end{definition}
}

We remark that without adaptivity (i.e., instead of requiring calibration error to be small for any sub-interval $[s,t]$, only requiring it to be small on the entire interval $[1,T]$),~\Cref{def:calibration-error} is weaker than the standard definition of calibration (e.g., \citep{foster1998asymptotic}, listed for completeness in Appendix~\ref{app:calibration-std}) in two ways: (1) standard calibration takes each prediction $\bp\in\forecastset$ as an independent bin, thus having infinitely many binning functions: $w_{\bp}(\cdot)=\delta_{\bp}(\cdot)$. Instead, we only require calibration with respect to the predefined binning $\Pi$ which only contains a finite number of binning functions; (2) standard calibration cares about the summation over calibration error across bins, but we only consider the maximum error.

\paragraph{Stackelberg Games.}
A \emph{Stackelberg game} is defined as the tuple $(\calA_P, \calA_A, U_P, U_A)$, where $\calA_P$ and $\calA_A$ are the principal and the agent action spaces respectively, and $U_P: \calA_P \times \calA_A \to \bbR_+$ and $U_A: \calA_P \times \calA_A \to \bbR_+$ are the principal and the agent utility functions respectively. For ease of exposition, we work with \emph{finite} Stackelberg games (i.e., $|\calA_P| = \numP$ and $|\calA_A| = \numA$) and generalize our results to continuous games in Section~\ref{sec:continuous}. When the principal plays action $x \in \calA_P$ and the agent plays action $y \in \calA_A$, then the principal and the agent receive utilities $U_P(x,y)$ and $U_A(x,y)$ respectively. We also define the principal's \emph{strategy space} as the simplex over actions: $\calH_P = \Delta(\calA_P)$. For a strategy $\h \in \calH_P$, we oftentimes abuse notation slightly and write $U_P(\h, y) := \E_{x \sim \h} [U_P(x,y)]$. 

\emph{Repeated Stackelberg games} capture the \emph{repeated} interaction between a principal and an agent over $T$ rounds. What distinguishes Stackelberg games from other types of games is the inter-temporal relationship between the principal's action/strategy and the agent's response; specifically, the principal first commits to a strategy $\h_t \in \calH_P$ and the agent subsequently \emph{best-responds} to it with $y_t \in \calA_A$. Let $\vp_t \in \calF_P = \calH_P$ be the agent's \emph{belief} regarding the principal's strategy at round $t$. In standard Stackelberg games: $\vp_t = \h_t$, i.e., the agent has full knowledge of the principal's strategy. In this paper, we consider games where the agent does not in general know $\h_t$ when playing, but they only best-respond according to their belief $\vp_t$.
The agent's \emph{best-response} to \emph{belief} $\vp_t$ according to her underlying utility function $U_A$ is action $y_t \in \calA_A$ such that

\begin{equation}
    y_t \in \BR(\vp_t) \quad \text{where} \quad \BR(\vp_t) = \argmax_{y \in \calA_A} \E_{x \sim \vp_t} [U_A(x,y)].
    \label{eq:br-cal}
\end{equation}
We often overload notation and write $U_A(\vp,y) := \E_{x \sim \vp}[U_A(x,y)]$. Note that from Equation~\eqref{eq:br-cal}, the best-responses to $\vp_t$ form \emph{a set}. If this set is not a singleton, we use either a \emph{deterministic} or a \emph{randomized tie-breaking} rule. For the \emph{deterministic} tie-breaking rule, the agent breaks ties according to a predefined preference rule $\succ$ over the set of actions $\calA_A$. For the \emph{randomized} tie-breaking rule, the agent chooses $y_t$ by sampling from the set $\BR(\vp_t)$ uniformly at random, i.e., $y_t \sim \unif(\BR(\vp_t))$.

The \emph{Stackelberg value} of the game is the principal's optimal utility when the agent best responds: 
\[
V^\star=\max_{\h^\star\in\cH_P}\max_{y^\star\in\BR(\h^\star)} U_P(\h^\star,y^\star).
\]
In the above definition $\h^\star$ is referred to as the \emph{principal's optimal strategy.}

For an agent's action $y \in \calA_A$, we define the corresponding \emph{best-response polytope} $\polytope_y$ as the set of all of the agent's beliefs that induce $y$ as the agent's best-response, i.e., $\polytope_y = \{ \vp \in \calF_P: y \in \BR(\vp)\}$. We make the following standard assumption, which intuitively means that there are sufficiently many strategies that induce $y^\star$ as the agent's best-response.
\begin{assumption}[Regularity]
    \label{assumption:regularity}
    The principal's optimal strategy $\h^\star\in\Delta(\actionP)$ and the agent's optimal action $y^\star\in\BR(\h^\star)$ satisfy a \emph{regularity condition}: \emph{$\polytope_{y^\star}$ contains an $\ell_2$ ball of radius $\eta>0$}. 
\end{assumption}

\begin{algorithm}[htbp]
    \caption{Interaction Protocol for Calibrated Stackelberg Games (CSGs)}
    \label[algo]{alg:protocol}
    \DontPrintSemicolon
    \LinesNumbered
    \SetAlgoNoLine
         The principal plays strategy $\h_t \in \cH_P$.\;
         The agent \emph{without observing $\h_t$} forms a \emph{calibrated prediction} $\bp_t$ (Def.~\ref{def:calibration-adaptive-CSG}) about $\h_t$.\;
         The agent best responds to $\bp_t$ by playing $y_t \in \BR(\bp_t)$ (including tie-breaking).\;
         The principal observes $y_t$ and experiences utility $U_P(\h_t,y_t)$.\;
         The agent observes $\h_t$, or an action sampled from $\h_t$. 
        
\end{algorithm}

\subsection{Calibrated Stackelberg Games}
In CSGs (see Algorithm~\ref{alg:protocol} for the principal-agent interaction protocol,

the agent forms $(\eps, \Pi)$-adaptively calibrated forecasts as their beliefs $\vp_t$ regarding $\h_t$.
{Note that if the agent observes action $x_t\sim \h_t$ instead of the mixed strategy $\h_t$, then they can still calibrate to the sequence of $\h_t$ with an additional (vanishing) error term that comes from concentration inequalities.}

We first define binning functions that are especially appropriate for forecasts in games.
In CSGs, we define $\Pi$ based on whether $i$ is a best-response to the input calibrated forecast, i.e., $\forall \vp \in \calF_P$: 
    \begin{align*}
    w_i(\vp) &= \indicator{i \in \BR(\vp), i \succ j, \forall j \neq i} &\tag{for the \emph{deterministic} tie-breaking} \\
    {w_i}(\vp) &= \frac{\indicator{i \in \BR(\vp)}}{|\BR(\vp)|} &\tag{for the \emph{randomized} tie-breaking}
    \end{align*}
{Note that both binning functions meet the conditions of \Cref{def:binning}.

}
Applying \Cref{def:calibration-adaptive} for calibrated agent forecasts in CSGs we have the following:

\begin{definition}[$\eps$-Adaptively Calibrated Agent for CSGs]
    \label[definition]{def:calibration-adaptive-CSG}
The agent is called $\eps$-adaptively calibrated with rate $r_{\delta}(\cdot)\in o(1)$, if 
for any sequence of principal strategies $\h_1,\cdots,\h_T\in\cH_P$ the agent takes a sequence of actions $y_1, \dots, y_T$ that satisfy the following requirements: 
1) there is a sequence of forecasts $\bp_t\in \cF_P$ for $t\in [T]$, such that $y_t \in \BR(\bp_t)$, and 
2) forecasts $\bp_1, \dots, \bp_T$ are $\eps$-calibrated for binning $\Pi$ with rate $r_\delta(\cdot)$ with respect to the principal's strategies $\h_1,\cdots,\h_T$.

\end{definition}
We next review the fundamental constructs from \Cref{eq:def-calibration-error} and their intuitive meaning in this setting. $n_{[s,t]}(i)\triangleq\sum_{\tau \in [s,t]} w_i(\bp_\tau)$ is now the expected number of times that the forecast has induced action $i$ from the agent as their best response during interval $[s,t]$, $\bar{\bp}_{[s,t]}(i) \triangleq\sum_{\tau \in [s,t]} {w_i(\bp_\tau)}\cdot \bp_\tau/{n_{[s,t]}(i)}$ 
is the expected forecast that induces action $i$ from the agent as their best response during interval $[s,t]$, and $\bar{\h}_{[s,t]}(i) \triangleq\sum_{\tau \in [s,t]} {w_i(\bp_\tau)}\cdot {\h_{\tau}}/{n_{[s,t]}(i)}$ is the expected principal strategy that induces action $i$ from the agent as their best response during interval $[s,t]$. The requirement for an agent to be calibrated is quite mild, as the forecasts are binned only according to the best-response they induce.

\subsection{Background: Learning Stackelberg strategies from best responses}\label{subsec:BR-oracle}
In this section, we provide some background on computing the Stackelberg strategy from best response queries, which is a key ingredient in developing the principal's learning algorithm in \Cref{sec:learning-algos}. 

Recall that in a Stackelberg game, the principal's strategy space $\cH_P$ (the simplex over the principal's actions $\calA_P$) can be partitioned into \emph{best response polytopes} $\polytope_y$ for each agent action $y \in \calA_A$, where $\polytope_y$ contains all the principal's strategies that induce $y$ as the agent's best response. 
Inside each best response polytope $\polytope_y$, the principal's utility function $U_P(\h,\BR(\h))$ is linear in $\h$: $U_P(\h, y) = \sum_{x\in\calA_P} \h_x\cdot U_P(x, y)$. Therefore, when the boundaries of the polytopes $\{\polytope_y\}_{y\in\calA_A}$ are known, the Stackelberg strategy can be computed via a \emph{multiple LP approach}~\citep{conitzer2006computing}: for each action $y \in \calA_A$, solve $\h_y \in \argmax_{\h \in \polytope_y} U_P(\h, y)$ to find the optimal strategy $\h_y$ inside polytope $\polytope_y$, and then select the best among these candidates, i.e., choose $y^\star\in\argmax_{y\in\calA_A} U_P(\h_y, y)$ and commit to $\h^\star = \h_{y^\star}$.

When the agent's best response polytopes $\{\polytope_y\}_{y\in\calA_A}$ are unknown to the principal a priori, the principal can still compute a \emph{near-optimal Stackelberg strategy} through repeated interactions with a best response oracle --- an oracle that returns $\BR(\h)$ on each queried strategy $\h\in\cH_P$ and thus provides a membership oracle for the best response polytopes $\polytope_y$. Prior work~\cite{letchford2009learning,blum2014learning,blum2015learning} has developed efficient algorithms for this setting by leveraging techniques from convex optimization with membership queries~\citep{kalai2006simulated,lee2018efficient}. In these approaches, each subproblem of optimizing $U_P(\h, y)$ over $\polytope_y$ is solved using a membership oracle for $\polytope_y$, yielding an $\eps$-approximate Stackelberg strategy for the principal with probability $1-\delta$, using a number of best response queries that is polynomial in $\numP,\numA$ and logarithmic in $1/\eps$ and $1/\delta$. See \citet{blum2015learning} for more details on these results.

In CSGs, however, the principal lacks access to an exact best response oracle. In each round $t$, she only observes the agent's action $y_t$, which is a best response to the \emph{forecasted strategy} $\bp_t$ of $\h_t$. Note that $\bp_t$ might differ from $\h_t$ on a per-round basis, but the agent's calibration guarantees ensure that $\bp_t$ is consistent with $\h_t$ on a time-average basis. We address this challenge in \Cref{sec:learning-algos} by extending the results on learning Stackelberg strategies from best response queries to deal with \emph{approximate best response queries}~\citep{haghtalab2022learning}, and by robustifying the learned strategy against the agent's calibration error. We provide additional notations for deriving robust strategies in the next subsection.

\subsection{Robust strategies}
\begin{wrapfigure}{r}{0.3\textwidth}
    \centering
    \includegraphics[trim={35cm 14cm 15cm 11cm},clip,width=0.3\textwidth]{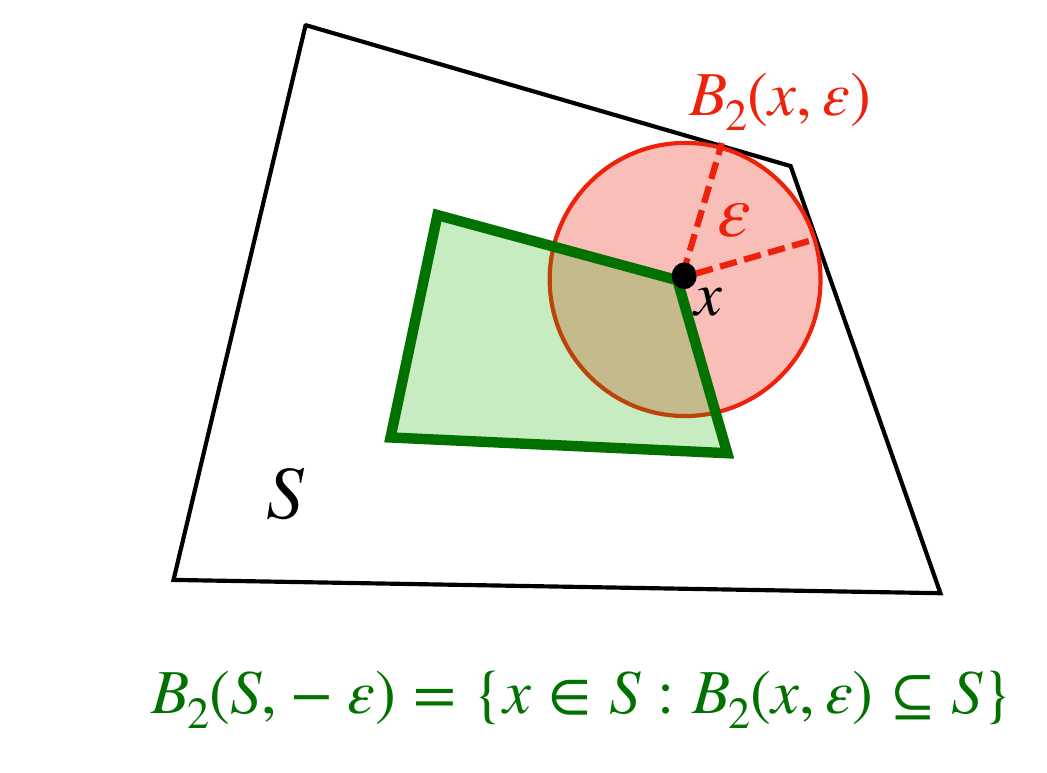}
    \caption{\footnotesize Conservative region $B_2(S,-\eps)$.}
    \label{fig:ball}
\end{wrapfigure}
In this section, we introduce additional notations that will be used for deriving robust strategies for the principal against the agent's calibration error.
Let $B_2(x,\eps)$ denote the ball of radius $\eps$ around $x$, i.e., $B_2(x,\eps) \triangleq \{x': \|x - x'\|_2 \leq \eps\}$. 
For a convex set $S \in \bbR^n$, we use $B_2(S, -\eps)$ to denote the set of all points in $S$ that are ``safely'' inside $S$, i.e., all the points $\x$ whose neighborhood of radius $\eps$ is contained in $S$: $B_2(S,-\eps)\triangleq \left\{x\in S:\ B_2(x,\eps)\subseteq S\right\}$. We call this last set, the $\eps$-\emph{conservative} of $S$. When $S$ is a best response polytope $\polytope_y$, we also use the shorthand $\polytope_y^{-\eps} \triangleq B_2(\polytope_y, -\eps)$ to denote the $\eps$-\emph{conservative} of $\polytope_y$.
See \Cref{fig:ball} for a pictorial illustration.

\section{Principal's learning algorithms}\label{sec:learning-algos}
    In this section, we study the relationship between the principal's Stackelberg value $\Vst$ and the best utility the principal can obtain from learning to play a sequence of strategies $\{\h_t\}_{t \in [T]}$ against calibrated agents, i.e., $\frac 1T \sum_{t \in [T]} U_P(\h_t, y_t)$. 
    The relationship between $\Vst$ and $\frac 1T \sum_{t \in [T]} U_P(\h_t, y_t)$ is \emph{not} a priori clear. 
    In the case of calibrated forecasts, the agents do not know the exact $\h_t$ when they choose their response. Instead, they base their decisions on the history of the principal's strategies. A principal then may be able to create historical patterns that lead the agents to worse actions, thus obtaining better utility himself.
    Indeed, several works have shown how historical patterns can afford the principal much better utility than $\Vst$ when the agents are no regret~\citep{braverman2018selling,deng2019strategizing}.
    Surprisingly, we show that this is not the case when the agents are calibrated; $\sum_{t \in [T]} U_P(\h_t, y_t)$ is upper bounded by $T \Vst$ and a term that is sublinear in $T$ and depends on the calibration parameters. A similar upper bound on the principal's utility was proved for no-swap-regret agents~\citep{deng2019strategizing}. While we prove the theorem directly for calibration, an alternative proof in App.~\ref{app:swap-regret} shows that calibration implies no-swap regret.

    \begin{theorem}
        \label{thm:upper-bound}
        Assume that the agent is $(\eps,\Pi)$-adaptively calibrated with rate $r_\delta(\cdot)$ and let $\Umax = \max_{\h \in \cH_P} \max_{y \in \calA_A} U_P(\h, y)$. Then, for any sequence $\{\h_t\}_{t \in [T]}$ for the principal's strategies in a repeated CSG, with probability at least $1 - 2 \delta$, the principal's utility is upper bounded as: 
        \[
        \frac{1}{T}\sum_{t \in [T]} U_P(\h_t, y_t) \leq \Vst + \alpha(U_{\max}, \numP, \numA, T, r_\delta, \delta, \eps),
        \]
        where $\alpha(U_{\max}, \numP, \numA, T, r_\delta, \delta, \eps) = U_{\max} \numP \numA  ( r_\delta(T) + \eps )$ when the agent uses \emph{deterministic} tie-breaking and $\alpha(U_{\max}, \numP, \numA, T, r_\delta, \delta, \eps) = U_{\max}  \numP \numA  \left( r_\delta(T) + \eps + \sqrt{\frac{\log (1/\delta)}{T}}\right)$ when the agent uses \emph{randomized} tie-breaking.
    \end{theorem}

    \proof{Proof of \Cref{thm:upper-bound}.}
    To simplify notation, we use $n_T(i) := n_{[0,T]}(i)$. When the agent follows deterministic tie-breaking, we have:
        \begin{align*}
            \sum_{t=1}^T U_P(\h_t,y_t)=&\sum_{i\in\actionA}\sum_{t=1}^T w_i(\bp_t) U_P(\h_t,i)\\
            =&\sum_{i\in\actionA} n_T(i)  U_P(\bar{\h}_T(i),i) &\tag{linearity of $U_P$ in the principal's strategy}\\
            =&\sum_{i\in\actionA} n_T(i)  \left(U_P \left(\bar{\bp}_T(i),i\right) + \left\langle U_P(\cdot,i),\bar{\h}_T(i) - \bar{\bp}_T(i)\right\rangle\right) &\tag{$\pm \sum_i U_P(\bar{\bp}_T(i),i)$}\\
            \stepa{\le} &\sum_{i\in\actionA} n_T(i) V^\star + \sum_{i\in\actionA} n_T(i)\left\|\bar{\bp}_T(i)-\bar{\h}_T(i)\right\|_{\infty}\cdot\|U_P(\cdot,i)\|_1\\
            \stepb{\le} &V^\star T+U_{\max}\numP \sum_{i\in\actionA} T\cdot\calerr_i(\h_{1:T},\bp_{1:T})\\
            \le &V^\star T+U_{\max}\cdot \numA \cdot \numP \cdot T \cdot (r_\delta(T)+\eps)
        \end{align*}
        where (a) is because $U_P(\bar{\bp}_T(i),i) \leq \Vst$ (since $i \in \BR(\bar{\bp}_T)$), the Hölder's inequality, and the fact that $\|\cdot \|_2 \leq \| \cdot \|_1$, and (b) is because of the definition of $\Umax$ and \Cref{def:calibration-adaptive}.
        The proof for the randomized tie-breaking setting has an extra term from Azuma-Hoeffding's inequality, similar to the proof of \Cref{lemma:no-swap-regret}.
    \Halmos

On the other hand, it may seem that because the agent's behavior is less specified when she uses calibrated forecasts (as opposed to full knowledge of the principal's $\h_t$), the principal may only be able to extract much less
utility compared to $\Vst$. Again, we show that this is not the case and that there exist algorithms for the principal such that the sequence of strategies $\{\h_t\}_{t \in [T]}$ is asymptotically approaching $\Vst$.

\begin{restatable}{theorem}{thmETCregr}\label[theorem]{thm:ETC-regr}
    There exists an {efficient} algorithm ({Algorithm~\ref{algo:explore-commit}}) for the principal in CSGs that achieves average utility: $\lim_{T \to \infty }\frac{1}{T}\sum_{t \in [T]} U_P(\h_t,y_t)\ge V^\star.$ 
    
    For finite $T$, if the agent's calibration rate is $r_\delta(t)=O(t^{-\beta})$, {and the best response polytope associated with the agent's optimal action contains a ball with radius $\eta$ (see Assumption~\ref{assumption:regularity})},
    then the algorithm can guarantee the principal's expected average utility to be at least: 
    \[
        \Ex\left[\frac{1}{T}\sum_{t \in [T]} U_P(\h_t,y_t)\right] \geq \Vst - O\left(
        \numA^{\frac{1}{14}}\,m\,\eta^{-\frac{13}{14}}\,T^{-\frac{1}{14\beta}}
        \;+\;\frac{k^{\frac{\beta}{14}}\,m^{\beta+\frac{1}{2}} }{\vol(\eta/2)}\eta^{-\frac{13\beta}{14}}\,T^{-\frac{1}{14}}\log T\right),
    \]
\end{restatable}

{where $\vol(\eta/2)$ stands for the volume of the ball of radius $\eta/2$.}

\begin{algorithm}[t!]
    \caption{Explore-Then-Commit}
    \label[algo]{algo:explore-commit}
    \DontPrintSemicolon
    \LinesNumbered
    \SetAlgoNoLine
    \KwIn{Target precision $\epsopt$, target robustness $\epsrobust$, time horizon $T$.}
    \textsc{Explore}: Find $\epsopt$-optimal, $\epsrobust$-robust strategy $\hat{\h} \in \cH_P$ using Algorithm~\ref{algo:principal-main}.\;
    \textsc{Commit}:
        Repeatedly play $\hat{\h}$ for the rest of the rounds.\;
    \end{algorithm}

\medskip

\paragraph{Proof and Technical Challenges Overview.} Algorithm~\ref{algo:explore-commit} is an explore-then-commit algorithm; it first estimates an \emph{appropriate} strategy for the principal $\hat{\h}$ (\textsc{Explore}), and then repeatedly plays it until the end (\textsc{Commit}). In the remainder of the section, we {walk the reader through} the main steps and challenges of the proof. Let $T_1, T_2$ denote the set of rounds that belong in the \textsc{Explore} and \textsc{Commit} phase respectively. 

To elaborate on the objectives of the \textsc{Explore} phase, let us first consider a setting with zero calibration error, where the agent's forecasting algorithm is perfectly and adaptively calibrated, leading to $y_t=\BR(\h_t)$ at every round. In this case, the task for the \textsc{Explore} phase simplifies to identifying a near-optimal strategy $\hat{\h}$ through best response oracles (see Section~\ref{subsec:BR-oracle}) that satisfies $U_P(\hat{\h},\BR(\hat{\h}))\ge \Vst-\epsopt$ for a predetermined $\epsopt$. We formalize this property in \Pone. Given that the agent is perfectly calibrated, in the \textsc{Commit} phase, the agent always plays $\hat{y}=\BR(\hat{\h})$, leading to an upper bound of $\epsopt|T_2|$ on the Stackelberg regret. Hence, for a perfectly calibrated agent, Algorithm~\ref{algo:explore-commit}'s regret is bounded by $\Vst|T_1|+\epsopt|T_2|$.
\begin{mdbox}
\begin{itemize}
    \item[\Pone] $U_P(\hat{\h}, \hat{y}) \geq \Vst - \epsopt$ for $\hat{y} \in \BR(\hat{\h})$, i.e., $(\hat{\h},\hat{y})$ is an approximate Stackelberg equilibrium.
\end{itemize}
\end{mdbox}
\vspace{-10pt}

Moving away from the idealized setting, we must account for possible discrepancies between $y_t$ and $\BR(\h_t)$ due to calibration error. This introduces two challenges: (i) An increased sample complexity $|T_1|$ in the \textsc{Explore} phase, given the necessity to learn a near-optimal strategy from \emph{noisy} agent responses; (ii) Potential agent deviations from the action $\hat{y}=\BR(\hat{\h})$ in the \textsc{Commit} phase due to miscalibrations in the agent's belief about the principal's action. 

To address the first challenge, we employ Algorithm~\ref{algo:principal-main}, which constructs an \emph{approximate} best response oracle by repeatedly interacting with a calibrated agent. {Algorithm~\ref{algo:principal-main} is rather involved, and we defer its more detailed discussion to after we have finished the general proof outline. For now, the reader should know that Algorithm~\ref{algo:principal-main} returns a tuple $(\hat{\h}, \hat{y})$ satisfying {\bf (P1)} in $|T_1|$ number of rounds.} {The precise number of rounds $|T_1|$ (which is determined later in \Cref{app:proof-etcreg}) is tuned to balance the exploration and exploitation trade-off.}

For the second challenge, we require our learned policy $\hat{\h}$ to be robust against inaccurate forecasts. This is reflected in condition \Ptwo, which necessitates the ball of radius $\epsrobust$ around $\hat{\h}$ to be fully contained in the polytope $\polytope_{\hat{y}}$. The critical insight from \Ptwo\ is: for any forecast $\bp_t$ that results in a best response $y_t=\BR(\bp_t)\neq\hat{y}$, there must be a minimum distance of $\epsrobust$ separating $\bp_t$ from $\hat{\h}$.

We formalize $\Ptwo$ below.
For appropriately tuned parameters $\epsopt$ and $\epsrobust$, the pair $(\hat{\h},\hat{y})$ returned by Algorithm~\ref{algo:principal-main} satisfies properties \Pone\ and \Ptwo:
\begin{mdbox}
    \begin{itemize}
\item[\Ptwo] $\hat{\h} \in B_2 (P_{\hat{y}}, -\epsrobust)$, i.e., $\hat{\h}$ lies \emph{robustly} within the best-response polytope for $\hat{y}$.
\end{itemize}
\end{mdbox}
\vspace{-10pt}

Given \Ptwo, the regret of the \textsc{Commit} phase can be decomposed to when $y_t = \hat{y}$, and when $y_t \neq \hat{y}$:
\begin{equation}\label{eq:bef-p1-p2}
    \sum_{t \in T_2} (\Vst - U_P(\hat{\h}, y_t) )=\sum_{t\in T_2:y_t=\hat{y}}(\Vst - U_P(\hat{\h}, y_t) )+\sum_{t\in T_2:y_t\neq\hat{y}}(\Vst - U_P(\hat{\h}, y_t) )
\end{equation}

When $y_t=\hat{y}$, the regret comes from the approximate best-response oracle guarantees; \Pone\ guarantees that $\Vst - U_P(\hat{\h}, \hat{y} ) \leq \epsopt$, so the first term is at most $\epsopt\cdot|T_2|$.

When $y_t\neq \hat{y}$, the regret is driven by the fact that the agent best-responds to calibrated forecasts of the principal's actions. Let $A = \actionA\setminus\{ \hat{y}\}$. 
For $i\in A$, the definition of binning function $w_i$ guarantees that the probability of playing action $i$ on forecast $\bp_t$ is exactly $w_i(\bp_t)$.
Based on this observation, the second term of \Cref{eq:bef-p1-p2} can be further bounded as
\begin{align*}
    \sum_{t \in T_2} \sum_{i \in A} w_i(\vp_t) (\Vst - U_P(\hat{\h}, i)) \leq \sum_{i \in A} \sum_{t \in T_2} w_i(\vp_t) \Vst=\sum_{i\in A}n_{T_2}(i) \Vst.\numberthis{\label{eq:bef-apply-p2}}
\end{align*}

Using \Cref{def:calibration-error} of the calibration error and properties of the $\ell_2$ and $\ell_{\infty}$ norms, we can further express $n_{T_2}(i)$ as follows
\begin{align*}
    n_{T_2}(i)= \frac{\calerr_i(\h_{T_2},\bp_{T_2})\cdot|T_2|}{\|\bar{\vp}_{T_2}(i) - \hat{\h}\|_\infty}\le
    \frac{\sqrt{m} \cdot \calerr_i(\h_{T_2},\bp_{T_2}) \cdot |T_2|}{\|\bar{\vp}_{T_2}(i) - \hat{\h}\|_2} \overset{\Ptwo}{\leq} \frac{\sqrt{m} \cdot r_{\delta}(|T_2|) \cdot |T_2|}{\epsrobust},
\end{align*}
where for the second inequality is: since $\hat{\h}$ lies in the $\epsrobust$-conservative of $\polytope_{\hat{y}}$, and $\bar{\vp}_{T_2}(i)$ belongs to a different and non-intersecting polytope $\polytope_i$, we know that $\|\bar{\vp}_{T_2}(i) - \hat{\h}\|_2\ge\epsrobust$. See \Cref{fig:robust} for a geometric interpretation. Finally, {bounding the sample complexity for the \textsc{Explore} phase (\Cref{lemma:principal-main-complexity}) and appropriately selecting $\epsopt,\epsrobust$ gives the result.}

\begin{figure}[htbp]
  \begin{center}
    \includegraphics[trim={20cm 9cm 26cm 15cm},clip,width=0.4\textwidth]{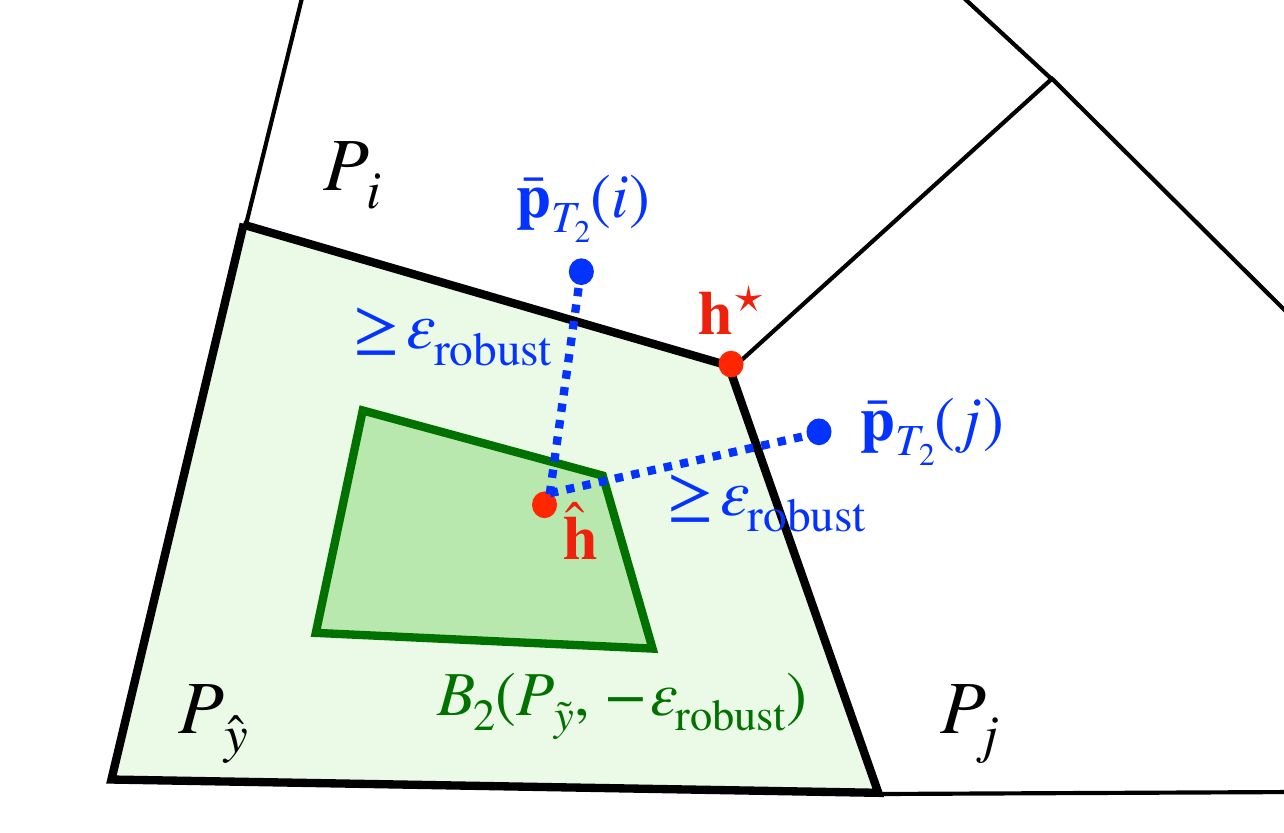}
    \end{center}
    \caption{Relationship between the robust policy $\hat{\h}$ and the average predictions $\bar{\mathbf{p}}_{T_2}(i)$: Given that $\hat{\h}$ is in the conservative region $B_2(P_{\hat{y}},-\epsrobust)$, any average prediction $\bar{\mathbf{p}}_{T_2}(i)$ that triggers action $i\neq\hat{y}$ during the \textsc{Commit} phase must fall outside of $P_{\hat{y}}$ and thus have a distance of at least $\epsrobust$ from $\hat{\h}$.}
  \label{fig:robust}
\end{figure}
 
\paragraph{Algorithm 3 Overview.} The proof sketch above hinges on being able to identify a strategy for the principal $\hat{\h}$ 

with properties \Pone \hspace{3pt}$\&$ \Ptwo. We outline below how this is done through Algorithm~\ref{algo:principal-main}, and defer the detailed analysis of its guarantees to \Cref{app:learning-algos}.

\begin{algorithm}[H]
    \caption{Principal's Learning Algorithm for the Optimal Commitment}
    \label[algo]{algo:principal-main}
    \DontPrintSemicolon
    \LinesNumbered
    \SetAlgoNoLine
    \SetAlgoNoEnd
    \KwIn{
        Target precision $\epsopt$, target robustness $\epsrobust$,

        regularity parameter $\eta$,
        Approximate membership oracle $\approxmem$ (Algorithm~\ref{algo:conservative-approx-MEM}).
    
    }
    \KwOut{A pair $(\hat{\h},\hat{y})$ such that $U_P(\hat{\h},\hat{y})\ge\Vst-\epsopt$ and $\hat{\h}\in B_2(\polytope_{\hat{y}},-\epsrobust)$.}
    \tcc{Initialization phase (\Cref{lemma:initialization-set}): building initialization set $\calI$ with pairs $(\h_i,y_i)$, where each $\h_i$ is well-centered in $\polytope_{y_i}$.}
    Originally, create initialization set $\calI \gets  \emptyset$.\;
    $\Ninit\gets 10\frac{\log T}{\vol(\eta/2)}$.\;
    \For{$i\in[\Ninit]$}{
        Sample a strategy $\h_i \in \cH_P$ uniformly at random.\;
        $y_i\gets\approxmem(\eps_1= \frac{\eta}{4}, \eps_2=\frac{\eta}{4\sqrt{\numP}} , \eps_3=\frac{1}{10T^2})$ with input $\h_i$\;
        
        \textbf{if} $y_i\neq\perp$, \textbf{then}  $\calI \gets \calI \cup (y_i,\h_i)$.
        \tcp{$\h_i$ lies robustly inside $\polytope_{y_i}$ with high probability.}
    }

    \tcc{Optimization phase: for each pair $(y_i,\h_i)\in\calI$, optimize the principal's utility $U_P(\cdot, y_i)$, using $\h_i$ as initial point}
    ${\calS}\gets\emptyset$.
        \tcp{certified robust solutions}
    \For{$(y_i,\h_i) \in \calI$}{ 
      Solve the following program using the \textsc{MembershipOpt} algorithm~\citep{haghtalab2022learning}, with 
      initial point $\h_i$
      and oracle $\approxmem(\eps_1= C\sqrt{\numP}\left(\frac{\eta\cdot\epsopt}{m}\right)^{13},\ \eps_2={\epsrobust}{\sqrt{\numP}},\ \eps_3=\frac{1}{10T^2})$:
      \[\max_{\h}U_P(\h,y_i),\quad\text{subject to }\h\in B_2(\polytope_{y_i},-\epsrobust).
      \]

        $\Tilde{\h}_{i}\gets$ solution returned by \textsc{MembershipOpt}.\;
        \tcc{Robustness check: ensure that the returned solution $\Tilde{\h}_{i}$ is robustly inside the best-response polytope for $y_i$.}
        Query $\approxmem(\eps_1= \frac{\epsrobust\sqrt{\numP}}{2} ,\ \eps_2=\frac{\epsrobust}{2},\ \eps_3=\frac{1}{10T^2} )$ with input $(\Tilde{\h}_{i},y_i)$.
        \textbf{if} \textsc{True}, \textbf{then}  add $(\Tilde{\h}_{i},y_i)$ to ${\calS}$; \textbf{else} discard the pair $(\Tilde{\h}_{i},y_i)$.
        }

        $(\hat{\h},\hat{y})\gets\arg\max_{(\Tilde{\h}_{i},y_i)\in{\calS}} U_P(\Tilde{\h}_{i},{y_i})$\;
        \textsc{Return} $(\hat{\h},\hat{y})$\;
        
    \end{algorithm}

Algorithm~\ref{algo:principal-main} is structured into three main steps: initialization, optimization, and robustness check.

\begin{enumerate}
    \item \textbf{Initialization}: The algorithm begins by constructing an \emph{initialization set} $\calI$. This set is composed of pairs $(\h, y)$, where each $\h$ is a strategy sampled uniformly at random from the principal's strategy space $\cH_P$. For each sampled strategy $\h$, the algorithm only uses it as initial point if it lies \emph{robustly} within the best response polytope for action $y$, ideally when $\h\in B_2(\polytope_y,-\frac{\eta}{2})$ for a given regularity parameter $\eta>0$. While the algorithm cannot confirm this with certainty, we are able to construct \emph{approximate best response oracles} (Algorithm~\ref{algo:conservative-approx-MEM}) to certify this property up to some approximation factor.
    If $\h$ turns out to be well-centered, the pair $(\h, y)$ is added to the initialization set $\calI$. Otherwise, the pair is discarded.

    \item \textbf{Optimization}: With the initialization set $\calI$ prepared, the algorithm proceeds to the optimization phase. For each pair $(\h_i, y_i)$ in $\calI$, the algorithm aims to maximize the principal's utility $U_P(\h, y_i)$ subject to $\h$ being robust, i.e., $\h\in B_2(\polytope_{y_i},-\epsrobust)$. Although the principal's utility function $U_P$ is known, the challenge lies in not knowing the boundaries of the conservative polytope $\polytope_{y_i}$, which depends on the agent's utility function and can only be learned through repeated interactions. To address this, we again use $\approxmem$ 
    as a membership oracle to the conservative polytope $\polytope_{y_i}$, and
    employ the \textsc{MembershipOpt} algorithm from \citep{haghtalab2022learning}, which is guaranteed to find near-optimal solutions that satisfy the robust constraint. However, this guarantee might no longer hold if the initial point is not enough well-centered, which we are not always able to detect. Therefore, we introduce the following robustness check step to verify that the solution is indeed robust.

    \item \textbf{Robustness Check}: After solving the optimization problems for all pairs in $\calI$, the algorithm performs a robustness check. This involves verifying that the solutions obtained lie robustly inside the corresponding best response polytope. If the robustness check passes, the pair $(\h_i, y_i)$ is added to the optimization set $\calS$. Otherwise, the pair is discarded. The final strategy is chosen as the one giving highest principal utility within the set $\calS$.
\end{enumerate}

As the above outline illustrates, the algorithm crucially relies on being able to construct approximate membership oracle to best response polytopes, which specifies whether a given strategy has large enough margin to the polytope boundaries.

We use $\approxmem$ (Algorithm~\ref{algo:conservative-approx-MEM} in \Cref{sec:approx-mem}) to achieve this.

Specifically, on input $\h\in\calH_P$, $\approxmem$ either asserts $\h \in B_2(\polytope_y, -\eps_2 - \frac{\eps_1}{\sqrt{\numP}})$ or $\h \notin B_2(\polytope_y, -\eps_2 - \eps_1)$ with probability at least $1 - \eps_3$. To do this, it samples $\ntrials$ points in proximity to $\h$ and plays each one repeatedly for $\nrepeat$ rounds, while registering the best-response action observed for each one of these. If the most frequent best-response for all $\h_\phi$ is $y$, then we can conclude with good probability that $\h$ lies robustly inside $\polytope_y$. See \Cref{lem:apx-memb-oracle} for more details.

Through the three main components, Algorithm~\ref{algo:principal-main} returns a tuple that satisfies both $\Pone$ and $\Ptwo$. We formally state the guarantee of the correctness and sample complexity of Algorithm~\ref{algo:principal-main} below, and defer its proof to \Cref{app:proof-principal-main-complexity}.

\begin{restatable}[Guarantee of Algorithm~\ref{algo:principal-main}]{lemma}{lemmaPrincipalMainComplexity}
    \label[lemma]{lemma:principal-main-complexity}
    Suppose $\epsrobust\le\frac{C}{\sqrt{\numP}}\cdot\left(\frac{\eta\cdot\epsopt}{\numP}\right)^{13} $ for the universal constant $C$ provided in \citep{haghtalab2022learning}, and the agent has calibration rate $r_\delta(\cdot)$. 
    Algorithm~\ref{algo:principal-main} terminates within 
    \begin{align*}
        O\left(\frac{\sqrt{\numP}}{\vol(\eta/2)} r_\delta^{-1}\left(\min\left\{\frac{\eta}{4\numA{\numP}},\frac{\epsrobust}{2\numA\sqrt{\numP}}\right\}\right)\log^2 T\right)
    \end{align*}
    rounds, and with probability at least $1-T^{-1}$, 
    returns a pair $(\hat{\h},\hat{y})$ that satisfies the following guarantees:
\begin{enumerate}
    \item[\Pone] $U_P(\hat{\h}, \hat{y}) \geq \Vst - \eps_{\text{opt}}$, i.e., $\hat{\h}$ is a near-optimal strategy.
    \item[\Ptwo] $\hat{\h} \in B_2 (P_{\hat{y}}, -\eps_{\text{robust}})$ , i.e., $\hat{\h}$ lies \emph{robustly} within the best-response polytope for $\hat{y}$.
\end{enumerate}
\end{restatable}

\section{Forecasting Algorithm for Adaptive Calibration}
\label{sec:algos-adaptive-calibration}

In this section, we examine whether there exist natural forecasting procedures that satisfy our \Cref{def:calibration-adaptive} about adaptively calibrated forecasts. We answer this question positively.

\begin{restatable}{theorem}{existadaptivecalibration}
\label{thm:exist-adaptive-calibration}
    For all $\eps>0$ and all binnings $\Pi=\{w_i:\br^{\numP}\to[0,1],i\in[\numA]\}$, there exists a {parameter-free} forecasting procedure that is $(\eps, \Pi)$-adaptively calibrated with rate $r_\delta(t)={O}\left(\sqrt{\log(\numA \numP t)/t}+\sqrt{\log(T/\delta)/t}\right)$. Moreover, when $\Pi$ is a continuous binning (i.e., each $w_i$ is continuous), there exists a forecasting procedure that is $(0, \Pi)$-adaptively calibrated with the same rate.
\end{restatable}

To prove the theorem, we use two main tools; the first one is a well-known algorithm of \citet{luo2015achieving} (\adanormalhedge) applied for online learning in the \emph{sleeping experts} problem (see Appendix~\ref{app:adaptive-calibration} for details). Roughly speaking, the \emph{sleeping experts} is a standard online learning problem with $T$ rounds and $N$ experts, where at each round $t$ there is only a subset of the experts being ``awake'' to be considered by the learner and report their predictions. Let $I_{t,i}$ be the binary variable indicating whether expert $i$ was awake at round $t$ ($I_{t,i} = 1$) or asleep ($I_{t,i}=0$). The interaction protocol between the learner and the adversary at each round $t$ is: (i) The learner observes which experts are awake, i.e., $\{I_{t,i}\}_{i \in [N]}$. (ii) The learner selects a probability distribution $\pi_t \in \Delta([N])$ supported on the set of \emph{active} experts $A_t \triangleq \{i: I_{t,i} = 1\}$. (iii) The adversary selects a loss vector $\{\ell_{t,i}\}_{i \in [N]}$. (iv) The learner incurs expected loss $\hat{\ell}_t = \E_{i \sim \pi_t} [\ell_{t,i}]$. \adanormalhedge\ is a \emph{parameter-free} online learning algorithm that when applied on the sleeping experts problem (and with appropriate initialization) obtains regret $\reg_t(i) = O ( \sqrt{T_i \log (NT_i)})$,
where $T_i = \sum_{\tau \in [t]} I_{\tau, i}$. 

The second tool that we use is \emph{No-Regret vs. Best-Response dynamics (NRBR)}~\citep{haghtalab2023unifying}. NRBR are a form of no-regret dynamics between two players, where one of the players must also best-respond on average. Essentially, at each round $t \in [T]$, the forecasting algorithm with the calibration rate of \Cref{thm:exist-adaptive-calibration} outputs a randomized forecast $\vp_t \in \calF_P$, 
by simulating an interaction between two players described below. For the first player, we construct a \emph{sleeping experts} problem instance, where the set of experts is $\calG =\{g_{(s,i,j,\sigma)}:s\in[T],i\in\actionA,j\in \actionP,\sigma\in\{\pm1\}\}$, i.e., we create a different expert for each round, each principal-agent action pair, and each $\sigma$ (the use of which will be made clear in the next paragraph). For each $g_{(s,i,j,\sigma)}\in\cG$ and $t\in[T]$, we define the loss, sleeping/awake indicator, and instantaneous regret respectively as:
\begin{align*}
    \ell_{t,g_{(s,i,j,\sigma)}}\triangleq&L_{g_{(s,i,j,\sigma)}}(\h_t,\bp_t)=w_i(\bp_t)\cdot\sigma\cdot\left(h_{t,j}-p_{t,j}\right);\numberthis{\label{eq:sleep-expert-loss}}\\
    I_{t,g_{(s,i,j,\sigma)}}\triangleq&\indicator{t\ge s}\numberthis{\label{eq:sleep-expert-indicator}};\\
    r_{t,g}\triangleq&I_{t,g}\cdot (\ell_{t,g}-\hat{\ell}_t).
\end{align*}
where by $h_{t,j}, p_{t,j}$ we denote the $j$-th coordinate of $\h_t$ and $\vp_t$ respectively. 

Note that we defined the losses for our newly constructed sleeping experts' instance as above to make sure that there is a direct correspondence with the calibration error. 
Intuitively, the loss at round $t$ for expert $g_{(s,i,j,\sigma)}$ is the difference between the agent's prediction $\vp_t$ and the true event $\h_t$ at the $j$-th coordinate, multiplied by the sign $\sigma$ and the weight $w_i(\bp_t)$ --- which, specialized to the setting of games, is the best response action that prediction $\bp_t$ activates. 
Similar ideas for calibration (albeit not for the notion of adaptivity {we consider}) have been used in \citep{lee2022online,haghtalab2023unifying}. We describe next the player interaction in NRBR.
For adaptive calibration, we introduce the sleeping/awake indicator $I_{t,g_{(s,i,j,\sigma)}}$ to guarantee that each expert $g_{(s,i,j,\sigma)}$ becomes awake at round $s$ and stays awake until the end of the interaction. 

We now describe the player interaction in NRBR.

\paragraph{Player 1.} Runs $\adanormalhedge$ on expert set $\calG$ with a pre-specified prior $\pi_0(g_{(s,i,j,\sigma)})\propto\frac{1}{s^2}$  over $\cG$ and feedback specified in \cref{eq:sleep-expert-loss,eq:sleep-expert-indicator}. At each round $t$, Player 1 computes distribution $\pi_t\in \Delta(A_t(\cG))$, where $A_t(\cG)$ denotes the set of active experts $g_{(s,i,j,\sigma)}\in\cG$ with $I_{t,g_{(s,i,j,\sigma)}}=1$ (i.e., experts that satisfies $s\le t$).

\paragraph{Player 2.} Best responds to $\pi_t$ by selecting $Q_t\in \Delta(\calF_P)$ that satisfies:
    \begin{align}\label{eq:minmax}
        \max_{\h_t\in\calH_P}\Ex_{g\sim \pi_t \atop\vp_t \sim Q_t}\left[\ell_{t,g}\right]
        =\max_{\h_t\in\calH_P}\Ex_{g\sim \pi_t \atop\vp_t \sim Q_t}\left[L_{g}(\h_t,\bp_t)\right]
        \le \eps.
    \end{align}
After simulating the game above, the algorithm outputs forecast $\vp_t \sim Q_t$. The existence of such a distribution $Q_t$ is justified by the min-max theorem (\citep[Fact 4.1]{haghtalab2023unifying} or \citep[Theorem 5]{foster2021forecast}). In the Appendix, we also give an explicit formula for $Q_t$ in the special case of $\numP\!=\!2$. {When $\Pi_0$ is continuous, player 2 can select a deterministic $\bp_t$ that achieves Equation~\eqref{eq:minmax} with $\eps = 0$. This stronger property is justified by the outgoing fixed-point theorem~\citep[Theorem 4]{foster2021forecast}}. 
Note that this algorithm inherits its parameter-free property directly from $\adanormalhedge$.
We are now ready to provide the proof for \Cref{thm:exist-adaptive-calibration}.

\proof{Proof of \Cref{thm:exist-adaptive-calibration}.}

Fix an instance of the NRBR game outlined above.

Running $\adanormalhedge$ on the instance with $\calG$ that we specified above, with prior $\pi_0(g_{(s,i,j,\sigma)})\propto\frac{1}{s^2}$ \citep[Section 5.1]{luo2015achieving} guarantees that $\forall g_{(s,i,j,\sigma)}\in\calG$, the regret with respect to $g_{(s,i,j,\sigma)}$ is bounded by:
\begin{equation}\label{eq:reg-AdaNormalHedge-SE}
\reg_t\left(g_{(s,i,j,\sigma)}\right)=\sum_{\tau \in [t]} r_{t,g_{(s,i,\sigma)}}\le \widetilde{O}\left(\sqrt{(t-s)\log(\numA\numP T)}\right).
\end{equation}
where $\tilde{O}(\cdot)$ hides lower order poly-logarithmic terms. 

We proceed by translating the calibration error defined in \Cref{def:calibration-adaptive} to the sleeping experts instance that we defined above. 

We have that $\forall i\in[\numA]$ and $1\le s<t\le T$, the calibration error can be written as:
\begin{align*}
&\calerr_i(\h_{s:t},\bp_{s:t})=\frac{1}{t-s}\max_{j\in\actionP}\max_{\sigma\in\{\pm1\}}
    \sum_{\tau=s}^t I_{\tau,g_{(s,i,j,\sigma)}} \cdot \ell_{\tau,g_{(s,i,j,\sigma)}} &\tag{\Cref{def:calibration-adaptive}}\\
    & = \frac{1}{t-s}\max_{j \in \calA_P} \max_{\sigma \in \{\pm 1\}}\underbrace{\sum_{\tau=s}^t I_{\tau,g_{(s,i,j,\sigma)}} \cdot \left( \ell_{\tau,g_{(s,i,j,\sigma)}} - \E_{g \sim \pi_\tau}\left[ \ell_{\tau, g} \right] \right)}_{\reg_t\left(g_{(s,i,j,\sigma)}\right)} \\
    & \qquad \qquad \qquad + \frac{1}{t-s}\max_{j \in \calA_P} 
    \max_{\sigma \in \{\pm 1\}}\sum_{\tau=s}^t I_{\tau,g_{(s,i,j,\sigma)}} \cdot \E_{g \sim \pi_\tau}\left[ \ell_{\tau, g} \right] \\
    &= \frac{1}{t-s}\max_{j\in\actionP}\max_{\sigma\in\{\pm1\}}\reg_t\left(g_{(s,i,j,\sigma)}\right) + \frac{1}{t-s} 
    \sum_{\tau = s}^t\E_{g \sim \pi_\tau} \left[ L_{g}(\h_\tau, \vp_\tau)\right],
    \numberthis{\label{eq:CallErr-1}}
    \end{align*}
    where for the first derivation we add and subtract $\sum_\tau \hat{\ell}_\tau$ and use that because of the NRBR dynamics: $\hat{\ell}_\tau = \E_{g \sim P_\tau} \left[ \ell_{\tau, g} \right]$, and for the last derivation we have used the definition of $\ell_{\tau,g} = L_g (\h_\tau, \vp_\tau)$ (\Cref{eq:sleep-expert-loss}), as well as the fact that $I_{\tau,g_{(s,i,j,\sigma)}}=1$ when $\tau\ge s$, for all $j\in\numP$ and $\sigma\in\{\pm1\}$. 
    We have thus decomposed the calibration error into two terms: the regret of \textbf{Player 1} from running $\adanormalhedge$ on the sleeping experts instance, and the loss of \textbf{Player 2} from selecting the final forecast $\vp_\tau\sim Q_\tau$. In the following, we upper bound the second term from \textbf{Player 2} using the fact that $Q_\tau$ is obtained by best responding to $\pi_\tau$ (see \Cref{eq:minmax}). We have with probability at least $1-\delta$, for each $j\in\actionP$, $\sigma\in\{\pm1\}$, and $1\le s\le t\le T$,
    \begin{align*}
        &\ \frac{1}{t-s} 
        \sum_{\tau = s}^t\E_{g \sim \pi_\tau} \left[ L_{g}(\h_\tau, \vp_\tau)\right] \\
    & = \frac{1}{t-s}\sum_{\tau = s}^t \E_{g \sim \pi_\tau \atop\bp\sim Q_\tau } \left[ L_g (\h_\tau, \vp)\right]+ \frac{1}{t-s}\sum_{\tau = s}^t \E_{g \sim \pi_\tau} \left[ L_g(\h_\tau, \vp_\tau) - \E_{\vp \in Q_\tau} \left[ L_g(\h_\tau, \vp) \right]  \right] \\
        &\leq \frac{1}{t-s} \sum_{\tau = s}^t \max_{\h_\tau \in \Delta(\calA_P)} \E_{g \sim \pi_\tau \atop\bp\sim Q_\tau } \left[ L_g (\h_\tau, \vp)\right] + \frac{1}{t-s}\sum_{\tau = s}^t \E_{g \sim \pi_\tau} \left[ L_g(\h_\tau, \vp_\tau) - \E_{\vp \in Q_\tau} \left[ L_g(\h_\tau, \vp) \right]  \right] \\
        &\leq \eps + O\left(\sqrt{\frac{\log (T/\delta)}{t-s}}\right)
    \end{align*}
    where the first inequality is by the property of $\h_\tau$ being the best strategy for the principal, and the last one uses the fact that $\max_{\h_\tau \in \Delta(\calA_P)} \E_{g \sim \pi_\tau \atop\bp\sim Q_\tau } \left[ L_g (\h_\tau, \vp)\right] \leq \eps$ from the NRBR \Cref{eq:minmax} and a martingale concentration bound on the second term. 

Plugging the upper bound for $Q$ back to Equation~\eqref{eq:CallErr-1} and using the regret bound for AdaNormalHedge (\Cref{eq:reg-AdaNormalHedge-SE}) we get: 
\begin{align*}
\calerr_i(\h_{s:t},\bp_{s:t}) &\leq \widetilde{O} \left( \sqrt{\frac{\log ( \numA \numP T)}{t-s}} \right) + O \left( \sqrt{\frac{\log (T/\delta)}{t-s}}\right) + \eps \leq r_t(\delta) + \eps.\Halmos
\end{align*}

\endproof

\section{Continuous Games}\label[section]{sec:continuous}

In this section, we generalize our results for the case of \emph{continuous} Stackelberg games. To streamline presentation, this section highlight how our results generalize to continuous games; the technical details (including all proofs) can be found in Appendix~\ref{app:continuous}.

\paragraph{Continuous Stackelberg Games.} We use again $\calA_P$ and $\calA_A$ to denote the principal and the agent action spaces, respectively. Both $\calA_A, \calA_P$ are convex, compact sets where $\actionP\subset\br^{\numP}$ and $\actionA\subset\br^{\numA}$. The utilities of the principal and the agent are given by continuous functions $U_P:\actionP\times\actionA\to\br_+$ and $U_A:\actionP\times\actionA\to\br_+$. 
In this setting, we assume that both the principal and the agent can only play deterministic strategies, i.e., $\calH_P=\actionP$.
For $x\in\actionP$, let $\BR(x)$ be the best-response function that is implicitly defined as $\nabla_2 U_A(x,\BR(x))=0$. Our continuous games satisfy Assumption~\ref{assump:continuous}: (i)-(iii) are standard assumptions used in previous works (e.g.,~\citep{fiez2019convergence}), but (iv) cannot be derived from (i) and (ii) without further assumptions on the correlation between $x,y$. Nevertheless, (iv) (and the conditions under which it holds) has been justified in settings such as strategic classification~\citep{dong2018strategic,zrnic2021leads}.
\begin{assumption}
\label{assump:continuous}
Utility functions $U_P$, $U_A$, and the domain $\actionP$ satisfy the following:
    \squishlist
        \item[(i)] For all $x\in\calA_P,y\in\actionA$, $U_P(x,y)$ is $L_1$-Lipschitz and concave in $x$, $L_2$-Lipschitz in $y$, and bounded by $W_P$ in $\ell_2$ norm.
        \item[(ii)]  The best-response function $\BR:\actionP\to\actionA$ is $L_{\BR}$-Lipschitz.
        \item[(iii)] Regularity of the feasible set $\calA_P=\calH_P=\calF_P$:
        \squishlist
            \item The diameter is bounded: $\texttt{\upshape diam}({\calF_P})=\sup_{\h,\h'\in\calF_P}\|\h-\h'\|_2\le D_P$.
            \item $B(0,r)\subseteq \actionP\subseteq B(0,R)$.
        \squishend
        \item[(iv)] The function $U_P(\h,\BR(\h))$ is concave with respect to $\h$, and has Lipschitz constant $L_{U}$. 
    \squishend
\end{assumption}

The main result of this section is to show that even in \emph{continuous} CSGs, we can approximate asymptotically $\Vst$ for the principal's utility, and that no better utility is actually achieved. 

\begin{restatable}{theorem}{maincontinuous}\label{thm:main-continuous}
    For continuous CSGs satisfying Assumption~\ref{assump:continuous}, for all $\eps_0 > 0$, there exists a finite binning $\Pi_0$ such that if the agent is $(0, \Pi_0)$ - adaptively calibrated and the principal runs an appropriately parametrized instance of $\lazygdwog$ (Algorithm~\ref{algo:gd}) then: 
    \begin{align*}
        \lim_{\Phi\to\infty\atop \epochlen\to\infty}\frac{1}{\Phi\epochlen}\sum_{\phi\in[\Phi]}\sum_{i\in[\epochlen]}U_P(\h_\phi,y_{\phi,i})\ge V^\star-\eps_0.
    \end{align*}
    Moreover, for any sequence of the principal's actions $\h_{[1:T]}$, it holds that: 
    \begin{align*}
    \lim_{T\to\infty}
    \frac{1}{T}\sum_{t\in[T]} U_P(\h_t,y_t)\le V^\star+\eps_0.
\end{align*}
\end{restatable}
Recall that we can define $(\eps,\Pi)$-adaptive calibration with $\eps = 0$ in continuous CSGs because of the continuous case in Theorem~\ref{thm:exist-adaptive-calibration}.

We outline next how $\lazygdwog$ works.
\begin{algorithm}[t]
    \caption{Lazy Gradient Descent without a Gradient (\lazygdwog)}
    \label[algo]{algo:gd}
    \DontPrintSemicolon
    \LinesNumbered
    \SetAlgoNoLine
    \SetAlgoNoEnd
    Initialize $\h_0=0$.\;
    \For{epoch $\phi\ge0$: 
    }{
    Sample $S_\phi$ uniformly at random from the unit sphere $\mathbb{S}^{\numP-1}$.\;
    Play $\h_\phi=\x_\phi+\delta_\phi S_\phi$ for $\epochlen$ rounds. \tcc*{avg feedback gets close to $\BR(\h_t)$}
    Observe agent's responses $y_{\phi,1},\cdots,y_{\phi,\epochlen}$.\;
    Update action $\x_{\phi+1}\gets \text{Proj}_{B_2(\actionP,-\delta_\phi)}({\x_\phi+\gamma_\phi \frac{\numP}{\delta_\phi}  S_\phi
    U_P(\h_\phi, \frac{1}{\epochlen}\sum_{i\in[\epochlen]}y_{\phi,i})}).$ 
    }
    \end{algorithm}
$\lazygdwog$ is a variant of the gradient descent without a gradient algorithm (\textsc{GDwoG}) of \citet{Flaxman2004OnlineCO}. The main new component of the algorithm is that it separates the time horizon into epochs and for each epoch it runs an update of the \textsc{GDwoG} algorithm. During all the rounds that comprise an epoch ($M$ in total), $\lazygdwog$ presents the same (appropriately smoothed-out) strategy to the agent and observes the $M$ different responses from the agent. The intuition behind repeating the same strategy for $M$ rounds is that the principal wants to give the opportunity to the agent to recalibrate for a better forecast, i.e., $\lim_{M \to \infty}\frac{1}{\epochlen}\Big|\{i\in[\epochlen]:\|\bp_i-\h\|\ge\eps_0\}\Big| = 0$. The remainder of the proof for \Cref{thm:main-continuous} centers on showing that when the calibrated forecasts converge to $\h_t$, the principal's utility converges to the utility they would have gotten if the agent was perfectly best responding to $\h_t$.

\section{Extension to Non-Adaptively Calibrated Agents}
\label{sec:non-adaptive}

In this section, we extend our results to agents whose forecasts are calibrated in a \emph{non-adaptive} sense. Recall that adaptive calibration (\Cref{def:calibration-adaptive}) requires the agent's forecasts to achieve vanishing calibration error on every subinterval $[s,t]\subseteq[1,T]$, with $\calerr_i(\h_{s:t},\bp_{s:t})$ scaling as a function of the interval length $(t-s)$. 
This adaptivity ensures that the agent's forecasts can adjust promptly to changes in the principal's strategy, and thus enables the principal to test different strategies on different sub-intervals and thereby learn the (near-)optimal strategy at a faster rate, as we did in \Cref{sec:learning-algos}.

In contrast, non-adaptive calibration only requires shrinking calibration error on prefixes of the interaction. Formally, we say that an agent is $\eps$-\emph{nonadaptively} calibrated with rate $r_\delta(\cdot)\in o(1)$ if,
with probability at least $1-\delta$, for every $i\in [\numA]$ and all $1\le t\le T$, we have
\[
\calerr_i(\h_{1:t},\bp_{1:t}) \le r_\delta(t)+\eps.
\]
While non-adaptive calibration ensures that the agent's forecasts become increasingly accurate in the long run, it provides no guarantees on subintervals that do not start at time step $1$. As a result, the principal cannot expect the agent's forecasts to adapt promptly to newly deployed strategies, making it more challenging to learn from such agents.

In \Cref{subsec:nonadaptive-reduction}, we show that the principal's exploration algorithm (Algorithm~\ref{algo:principal-main}) can be adapted to interact with nonadaptively calibrated agents via a reduction that incurs only a \emph{polynomial} overhead in the number of required interactions. In \Cref{subsec:separation-adaptive-nonadaptive}, we discuss whether such a polynomial overhead is necessary. While we do not establish a formal separation result, we present a simple illustrative instance demonstrating that nonadaptive calibration may fail to provide timely information about best responses on intermediate subintervals. This example highlights the challenges posed by nonadaptive calibration and illustrates why stronger adaptive guarantees lead to faster learning rates.

\subsection{Principal's Algorithm for Non-Adaptively Calibrated Agents}
\label{subsec:nonadaptive-reduction}

In this section, we derive a reduction that leverages our results for adaptively calibrated agents to interact with non-adaptively calibrated agents. The key idea is to design alternative time schedules for the principal's learning algorithm, which already has a blockwise structure that is well suited for such rescaling. Specifically, the principal's algorithm in Algorithm~\ref{algo:principal-main} can be viewed as a sequence of blocks: in each block, the principal plays a fixed strategy in every round of that block and updates only at block boundaries. The strategy for the next block is chosen adaptively based on the most frequent action that the agent played during each of the previous blocks. Our reduction modifies the length of each block, while keeping the principal's update rule across blocks unchanged. Now we formally state the reduction.

\begin{lemma}[Time-rescheduling for blockwise majority-based procedures]
    \label{lem:time-rescheduling}
    Consider a blockwise algorithm $\Alg$ that contains $M$ blocks $I_1, I_2, \dots, I_M$, and succeeds whenever the calibration error in each block is at most $\gamma$, i.e.,
    \begin{align}
        \label{eq:blockwise-condition}
    \max_{j\in[\numA]} \calerr_j\!\left(\h_{I_i},\bp_{I_i}\right)\le \gamma,
    \qquad
    \forall i\in[M].
    \end{align}
    Suppose the agent is \emph{nonadaptively} calibrated with nonincreasing rate $r_\delta(\cdot)$, 
    then there exists a time-rescheduled version of $\Alg$, denoted $\Alg^{\mathrm{NA}}$, that runs the same decision rule as $\Alg$ at block boundaries, but uses block lengths $L_i \triangleq r_\delta^{-1}\!\left(\frac{\gamma}{2i}\right)$ for each block $i\in[M]$, and succeeds with probability at least $1-\delta$.
    \end{lemma}

    \begin{remark}
        For comparison, if the agent is \emph{adaptively} calibrated with the same rate $r_\delta(\cdot)$, then \Cref{eq:blockwise-condition} can be satisfied by setting block length to be $r_\delta^{-1}(\gamma)$, making the total round complexity $T^{\mathrm{A}} = M\cdot r_\delta^{-1}(\gamma)$. In contrast, $\Alg^{\mathrm{NA}}$ has round complexity $T^{\mathrm{NA}} = \sum_{i=1}^M r_\delta^{-1}\!\left(\frac{\gamma}{2i}\right)$. Under the special case where $r_\delta(t)=O(t^{-\beta})$ for some $\beta>0$, we have $T^{\mathrm{A}} = O\!\left(M\gamma^{-1/\beta}\right)$, whereas $T^{\mathrm{NA}} = O\!\left(M^{1+1/\beta}\;\gamma^{-1/\beta}\right)=O\left(M^{1/\beta}\;T^{\mathrm{A}}\right)$.
    \end{remark}

    \proof{Proof of \Cref{lem:time-rescheduling}.}
The time-rescheduled algorithm $\Alg^{\mathrm{NA}}$ is defined by keeping the same block-boundary update rule as $\Alg$ (based on the most frequent agent action in each completed block), and modifying only the block lengths to
$L_i \triangleq r_\delta^{-1}\left(\frac{\gamma}{2i}\right)$.
Let $T_0\triangleq 0$, $T_i\triangleq \sum_{k=1}^i L_k$, and $I_i\triangleq \{T_{i-1}+1,\dots,T_i\}$.

We work on the event $\mathcal{E}$ (of probability at least $1-\delta$) on which the nonadaptive calibration guarantee holds simultaneously for all prefixes and all bins, i.e.,
$\calerr_j\left(\h_{1:t},\bp_{1:t}\right)\le r_\delta(t)$ for all $t\le T$ and all $j\in[\numA]$.
For any block $I_i=(T_{i-1},T_i]$ of length $|I_i|=L_i$, by triangle inequality and the nonadaptive calibration guarantee on event $\mathcal{E}$, we have
\begin{align*}
L_i\cdot \calerr_j\left(\h_{I_i},\bp_{I_i}\right)
={}& \left\|
\sum_{\tau=T_{i-1}+1}^{T_i} w_j(\bp_\tau)\left(\h_\tau-\bp_\tau\right)
\right\|_{\infty}\\
\le{}& \left\|
    \sum_{\tau=1}^{T_i} w_j(\bp_\tau)\left(\h_\tau-\bp_\tau\right)
    \right\|_{\infty}+\left\|
        \sum_{\tau=1}^{T_{i-1}} w_j(\bp_\tau)\left(\h_\tau-\bp_\tau\right)
        \right\|_{\infty}\\
={}& T_i\cdot \calerr_j\left(\h_{1:T_i},\bp_{1:T_i}\right)
+
T_{i-1}\cdot \calerr_j\left(\h_{1:T_{i-1}},\bp_{1:T_{i-1}}\right)\\
\le{}& T_i\cdot r_\delta(T_i)+ T_{i-1}\cdot r_\delta(T_{i-1}).
\end{align*}

Since $r_\delta$ is nonincreasing and $T_i\ge L_i$, we have
$r_\delta(T_i)\le r_\delta(L_i)\le \frac{\gamma}{2i}$ by the definition of $L_i$.
Moreover, the sequence $(L_i)_{i\ge 1}$ is nondecreasing, which implies $T_i=\sum_{k=1}^i L_k\le iL_i$. Combining the above inequalities gives
\[
T_i\cdot r_\delta(T_i)\le (iL_i)\cdot \frac{\gamma}{2i}=\frac{\gamma}{2}\cdot L_i.
\]
Similarly, we can show that $T_{i-1}\cdot r_\delta(T_{i-1})\le \frac{\gamma}{2}\cdot L_{i-1}\le \frac{\gamma}{2}\cdot L_i$.
As a result, we have
\[
    \calerr_j\left(\h_{I_i},\bp_{I_i}\right)\le\frac{T_i\cdot r_\delta(T_i)+ T_{i-1}\cdot r_\delta(T_{i-1})}{L_i}\le\gamma.
\]
Thus \eqref{eq:blockwise-condition} holds on every block under the new time schedule, and by the assumed correctness of $\Alg$ under \eqref{eq:blockwise-condition}, $\Alg^{\mathrm{NA}}$ succeeds on the event $\mathcal{E}$. Since $\Pr(\mathcal{E})\ge 1-\delta$, the success probability of $\Alg^{\mathrm{NA}}$ is at least $1-\delta$.
\Halmos
\endproof

Finally, we apply the reduction in \Cref{lem:time-rescheduling} to the principal's exploration algorithm (Algorithm~\ref{algo:principal-main}). Note that Algorithm~\ref{algo:principal-main} consists of three phases (Initialization, Optimization, and Robustness Check), each of which interacts with the agent through calls to the approximate membership oracle (Algorithm~\ref{algo:conservative-approx-MEM}), which admits a blockwise structure with each \emph{epoch} corresponding to a block in the sense of \Cref{lem:time-rescheduling}. Since the three phases use different error parameters when invoking the approximate membership oracle (corresponding to different values of the parameter $\gamma$ parameter in \Cref{eq:blockwise-condition}), we apply the time-rescheduling reduction to each of these phases separately. The resulting complexity of the time-rescheduled algorithm is characterized by the following corollary.
    
    \begin{corollary}[Time-rescheduled Algorithm~\ref{algo:principal-main} for non-adaptively calibrated agents]
        \label{cor:explore-nonadaptive}
        Suppose the conditions of \Cref{lemma:principal-main-complexity} hold, and the agent is \emph{nonadaptively} calibrated with rate $r_\delta(\cdot)=O(t^{-\beta})$ for some $\beta>0$.
        Then there exists a time-rescheduled version of Algorithm~\ref{algo:principal-main} that terminates within
        \begin{equation}
        \label{eq:principal-main-nonadaptive-rounds}
        O\!\left(
        \left(\frac{\sqrt{\numP}}{\vol(\eta/2)}\log^2 T\right)^{1+\frac{1}{\beta}}
        \cdot
        r_\delta^{-1}\!\left(
        \min\left\{\frac{\eta}{4\numA\numP},\frac{\epsrobust}{2\numA\sqrt{\numP}}\right\}
        \right)
        \right)
        \end{equation}
        rounds and, with probability at least $1-\delta-T^{-1}$, returns a pair $(\hat{\h},\hat{y})$ satisfying
        \textbf{(P1)} and \textbf{(P2)} from \Cref{lemma:principal-main-complexity}.
        \end{corollary}
        
 \begin{remark}[Comparison with no-regret agents]
    The time-rescheduling argument in \Cref{lem:time-rescheduling} relies on the fact that calibration error is defined as the norm of cumulative bias vector and therefore satisfies the triangle inequality. This allows guarantees on prefixes to be converted into guarantees on other sub-intervals. In contrast, external regret and swap regret are \emph{one-sided} quantities: they require the cumulative payoff not to exceed that of a comparator by too much, but place no restriction on how negative the difference can be. As a result, regret may be small on prefixes while being large on specific subintervals, which prevents us from obtaining blockwise guarantees through the same rescaling argument. Thus, the technique we used for calibration does not directly extend to regret-based notions.

    In addition, when combined with the exponential lower bound for learning against non-adaptively no-regret agents in \citet{brown2023learning}, our result highlights a separation between learning from a non-adaptively calibrated agent and non-adaptively no-regret agent. In the former setting,  polynomial-time learning algorithms are achievable via time rescheduling, whereas in the latter setting, there exists algorithm-independent exponential lower bounds for learning against non-adaptively no-external-regret agents.
\end{remark}

\subsection{Separation Between Adaptive and Non-Adaptive Agents}
\label{subsec:separation-adaptive-nonadaptive}
In this section, we provide a simple illustrative instance demonstrating that nonadaptive calibration may fail to provide timely information about best responses on intermediate subintervals. While not a formal separation result, this example highlights the challenges posed by nonadaptive calibration and illustrates why adaptivity leads to faster learning rates.
\begin{proposition}[Nonadaptive calibration can be uninformative on subintervals]
    \label{prop:example-separation}
    Fix $\beta\in(0,1]$ and a constant $C\ge \tfrac{1}{2}$.
    Let $r(t)\triangleq C\,t^{-\beta}$.
    There exist a repeated CSG with $m=2$ principal actions and $k=2$ agent actions, together with a sequence of principal strategies $\{\h_t\}_{t\ge 1}$ and agent forecasts $\{\bp_t\}_{t\ge 1}$, such that:
    
    \begin{enumerate}
        \item 
        For every $t\ge 1$, the nonadaptive calibration error for the agent's forecasts satisfies
        \[
            \max_{i\in\calA_A}\calerr_i(\h_{1:t},\bp_{1:t}) \;\le\; r(t).
        \]
        \item 
        For a subinterval of length $\Theta(s^{1-\beta})$, the adaptive calibration error satisfies
        \[
            \max_{i\in\calA_A}\calerr_i(\h_{s:t},\bp_{s:t}) \;\ge\; \Omega(1).
        \]
        \item The agent's action on the subinterval is different from the best response to the principal's strategy on the subinterval.
    \end{enumerate}
    \end{proposition}
    
    \proof{Proof of \Cref{prop:example-separation}.}
    We prove the proposition by constructing a simple $2\times 2$ game. 
    Let $A_P=\{a,b\}$ and $A_A=\{c,d\}$.
    Define the agent utility by
    \[
    U_A(a,c)=1,\qquad U_A(b,c)=-1,\qquad U_A(\cdot,d)=0,
    \]
    and assume deterministic tie-breaking in favor of action $d$.
    Then, for any prediction $\vp\in\calF_P$, it is not hard to see that
    $\BR(\vp) = \{d\}$ if $\vp_a > \tfrac12$, and $\BR(\vp) = \{c\}$ if $\vp_a \le \tfrac12$.
    
    Now we construct the principal strategy sequence and the agent forecasts.
    Fix any integer $s\ge 2$ and let $L \;\triangleq\; \lfloor s^{1-\beta}\rfloor$.
    Define the principal strategy sequence by
    \[
    \h_t \triangleq
    \begin{cases}
    (\tfrac12,\tfrac12) & \text{if } 1\le t \le s,\\
    (1,0) & \text{if } s< t \le s+L,\\
    (\tfrac12,\tfrac12) & \text{if } t > s+L.
    \end{cases}
    \]
    On the other hand, the agent's forecasts are constant on the entire horizon, i.e., 
    \[
    \bp_t \equiv \left(\tfrac12,\tfrac12\right)\qquad\forall t\ge 1.
    \]
    Since $\bp_t$ never changes, the agent always plays action $d$. 
    This shows that the agent's action during $(s,s+L]$ is different from the best response to the principal's strategy, which is action $c$. This proves item~(3). Item~(2) also follows immediately as the adaptive calibration error on the subinterval $(s,s+L]$ is at least $\frac{1}{2}$.
    
    We now bound the non-adaptive calibration error for the agent's forecasts.
    Since action $c$ is never played, it suffices to bound the calibration error for action $d$. For $t\le s$, $\h_\tau=\bp_\tau$ for all $\tau\le t$, so the error is $0$.
    For $t>s$, we have
    \begin{align*}
    \calerr_d(\h_{1:t},\bp_{1:t})
    =\frac{\min(t-s,L)}{t}\cdot\left\|(1,0)-(\tfrac12,\tfrac12)\right\|_\infty\le\frac{L}{2s}=\frac{s^{-\beta}}{2}\le \frac{t^{-\beta}}{2}.
    \end{align*}
    This proves item~(1) with $r(t)=C\cdot t^{-\beta}$ for constant $C\ge \tfrac12$. The proof is complete.
    \Halmos
    \endproof
    
    In this example, if the agent were \emph{adaptively} calibrated with the same rate $r(t)=C\cdot t^{-\beta}$, then for sufficiently large $s$, the adaptive calibration guarantee would force the agent to play action $c$ on at least a majority of time steps in the subinterval $(s,s+L]$. This would allow the principal to gain information about the agent's utility function. In contrast, nonadaptive calibration does not exclude the trivial agent strategy of always forecasting $\bp_t\equiv (\tfrac12,\tfrac12)$ and responding with action $d$ on the entire horizon, effectively providing no useful information about the agent's utility function.

\section{Discussion and Future Directions}\label{sec:conclusion}

In this paper, we introduced and studied Calibrated Stackelberg Games (CSGs), a generalization of the classic Stackelberg framework in which agents no longer have perfect knowledge of the principal's actions, but instead best respond to calibrated forecasts. Our results demonstrate that despite this informational asymmetry, the principal's optimal utility converges to the classical Stackelberg value $V^{\star}$, both in discrete and continuous settings. This is a somewhat surprising finding: even when both players (i.e., the principal and the agent) have significantly less information than in standard models, their repeated play converges to the outcome of a Stackelberg equilibrium. On the agent side, we introduced the notion of adaptive calibration in games, showing that it arises naturally from online learning algorithms and providing a constructive approach for efficiently generating adaptively calibrated forecasts. Taken together, these contributions advance both the theory of Stackelberg interactions and the broader understanding of calibration as a behavioral foundation in strategic learning settings.

There are several exciting avenues for future research, some of which we highlight below.

\paragraph{Knowledge of calibration error $r_\delta(\cdot)$.} In our work, the principal's learning algorithm (Algorithm~\ref{algo:principal-main}) requires access to the agent's calibration rate $r_\delta(\cdot)$ or an upper bound of the rate. This is because ---although the actual forecasting algorithm that the agent uses is unknown to the principal--- \emph{some} information regarding how $\vp_t$'s relate to $\h_t$'s is necessary to leverage the fact that agents are calibrated. But we think that in some specific settings (e.g., strategic classification) there may actually exist \emph{extra} information regarding the forecasts (compared to just knowing $r_\delta(\cdot)$) that can be leveraged to design learning algorithms for the principal with faster convergence rates. For example, in strategic classification, there may exist correlations between agent features that could be leveraged to reveal (an upper bound of) $r_\delta(\cdot)$. 

\paragraph{$\ell_\infty$ calibration.} Although the results in this paper are all stated in terms of the $\ell_\infty$-calibration error (i.e., the \emph{maximum} error over binning functions), a lot of the existing calibration literature focuses on $\ell_1$-calibration error~\citep{foster1997calibrated,foster1998asymptotic} (i.e., the \emph{sum} of errors over binning functions). It is an interesting problem whether we can get $\ell_1$-adaptive calibration error bounds without a polynomial dependency in the number of binning functions, where obtaining such bounds lead to polynomial improvements on the dependency of $\numP$ (the number of agent's actions). In the case of continuous calibration, it is an open problem to obtain uniform (adaptive) calibration error bounds for parametric or nonparametric continuous binning function classes. Resolving this open problem could lead to a better rate for the learning direction of \Cref{thm:main-continuous}, as the current result uses naive $\ell_\infty$-to-$\ell_1$ conversion of calibration error that leads to linear dependency on the number of binning functions, which turns out to be exponential in the dimension of the principal's action space.
See Remark~\ref{remark:l1-calibration} for more details.

\section*{Acknowledgments}
This work was supported in part by the National Science Foundation under grant CCF-2145898, by the Office of Naval Research under grant N00014-24-1-2159, a C3.AI Digital Transformation Institute grant, an Alfred P. Sloan fellowship, a Schmidt Science AI2050 fellowship, and an Amazon Research Award.

\pagebreak
\bibliographystyle{plainnat}
\bibliography{refs-new}

\begin{thebibliography}{69}
\providecommand{\natexlab}[1]{#1}
\providecommand{\url}[1]{\texttt{#1}}
\expandafter\ifx\csname urlstyle\endcsname\relax
  \providecommand{\doi}[1]{doi: #1}\else
  \providecommand{\doi}{doi: \begingroup \urlstyle{rm}\Url}\fi

\bibitem[Abernethy et~al.(2019)Abernethy, Cummings, Kumar, Taggart, and
  Morgenstern]{abernethy2019learning}
Jacob~D Abernethy, Rachel Cummings, Bhuvesh Kumar, Sam Taggart, and Jamie~H
  Morgenstern.
\newblock Learning auctions with robust incentive guarantees.
\newblock \emph{Advances in Neural Information Processing Systems}, 32, 2019.

\bibitem[Ahmadi et~al.(2021)Ahmadi, Beyhaghi, Blum, and
  Naggita]{ahmadi2021strategic}
Saba Ahmadi, Hedyeh Beyhaghi, Avrim Blum, and Keziah Naggita.
\newblock The strategic perceptron.
\newblock In \emph{Proceedings of the 22nd ACM Conference on Economics and
  Computation (EC)}, pages 6--25, 2021.

\bibitem[Ahmadi et~al.(2023)Ahmadi, Blum, and Yang]{ahmadi2023fundamental}
Saba Ahmadi, Avrim Blum, and Kunhe Yang.
\newblock Fundamental bounds on online strategic classification.
\newblock In \emph{Proceedings of the 24th ACM Conference on Economics and
  Computation (EC)}, pages 22--58, 2023.

\bibitem[Amin et~al.(2013)Amin, Rostamizadeh, and Syed]{amin2013learning}
Kareem Amin, Afshin Rostamizadeh, and Umar Syed.
\newblock Learning prices for repeated auctions with strategic buyers.
\newblock In \emph{Advances in Neural Information Processing Systems},
  volume~26, pages 1169--1177, 2013.

\bibitem[An et~al.(2012)An, Kempe, Kiekintveld, Shieh, Singh, Tambe, and
  Vorobeychik]{an2012security}
Bo~An, David Kempe, Christopher Kiekintveld, Eric Shieh, Satinder Singh, Milind
  Tambe, and Yevgeniy Vorobeychik.
\newblock Security games with limited surveillance.
\newblock In \emph{Proceedings of the AAAI Conference on Artificial
  Intelligence (AAAI)}, volume~26, pages 1241--1248, 2012.

\bibitem[Ananthakrishnan et~al.(2024)Ananthakrishnan, Haghtalab, Podimata, and
  Yang]{ananthakrishnan2024knowledge}
Nivasini Ananthakrishnan, Nika Haghtalab, Chara Podimata, and Kunhe Yang.
\newblock Is knowledge power? on the (im) possibility of learning from
  strategic interactions.
\newblock \emph{Advances in Neural Information Processing Systems},
  37:\penalty0 23852--23880, 2024.

\bibitem[Arunachaleswaran et~al.(2024)Arunachaleswaran, Collina, and
  Schneider]{arunachaleswaran2024pareto}
Eshwar~Ram Arunachaleswaran, Natalie Collina, and Jon Schneider.
\newblock Pareto-optimal algorithms for learning in games.
\newblock In \emph{Proceedings of the 25th ACM Conference on Economics and
  Computation}, pages 490--510, 2024.

\bibitem[Arunachaleswaran et~al.(2025)Arunachaleswaran, Collina, and
  Schneider]{arunachaleswaran2025learning}
Eshwar~Ram Arunachaleswaran, Natalie Collina, and Jon Schneider.
\newblock Learning to play against unknown opponents.
\newblock In \emph{Proceedings of the 26th ACM Conference on Economics and
  Computation}, pages 478--504, 2025.

\bibitem[Balcan et~al.(2015)Balcan, Blum, Haghtalab, and
  Procaccia]{balcan2015commitment}
Maria-Florina Balcan, Avrim Blum, Nika Haghtalab, and Ariel~D Procaccia.
\newblock Commitment without regrets: Online learning in {Stackelberg} security
  games.
\newblock In \emph{Proceedings of the 16th ACM Conference on Economics and
  Computation (EC)}, pages 61--78, 2015.

\bibitem[Blum(1997)]{blum1997empirical}
Avrim Blum.
\newblock Empirical support for winnow and weighted-majority algorithms:
  Results on a calendar scheduling domain.
\newblock \emph{Machine Learning}, 26:\penalty0 5--23, 1997.

\bibitem[Blum and Lykouris(2020)]{blum2020advancing}
Avrim Blum and Thodoris Lykouris.
\newblock Advancing subgroup fairness via sleeping experts.
\newblock In \emph{Innovations in Theoretical Computer Science Conference
  (ITCS)}, volume 151, pages 55:1--55:24, 2020.

\bibitem[Blum and Mansour(2007)]{blum2007external}
Avrim Blum and Yishay Mansour.
\newblock From external to internal regret.
\newblock \emph{Journal of Machine Learning Research}, 8\penalty0 (6), 2007.

\bibitem[Blum et~al.(2014{\natexlab{a}})Blum, Haghtalab, and
  Procaccia]{blum2014lazy}
Avrim Blum, Nika Haghtalab, and Ariel Procaccia.
\newblock Lazy defenders are almost optimal against diligent attackers.
\newblock In \emph{Proceedings of the AAAI Conference on Artificial
  Intelligence (AAAI)}, volume~28, pages 573--579, 2014{\natexlab{a}}.

\bibitem[Blum et~al.(2014{\natexlab{b}})Blum, Haghtalab, and
  Procaccia]{blum2014learning}
Avrim Blum, Nika Haghtalab, and Ariel Procaccia.
\newblock Learning optimal commitment to overcome insecurity.
\newblock In \emph{Advances in Neural Information Processing Systems},
  volume~27, pages 1826--1834, 2014{\natexlab{b}}.

\bibitem[Blum et~al.(2015)Blum, Haghtalab, and Procaccia]{blum2015learning}
Avrim Blum, Nika Haghtalab, and Ariel~D Procaccia.
\newblock Learning to play stackelberg security games.
\newblock 2015.

\bibitem[Braverman et~al.(2018)Braverman, Mao, Schneider, and
  Weinberg]{braverman2018selling}
Mark Braverman, Jieming Mao, Jon Schneider, and Matt Weinberg.
\newblock Selling to a no-regret buyer.
\newblock In \emph{Proceedings of the 19th ACM Conference on Economics and
  Computation (EC)}, pages 523--538, 2018.

\bibitem[Brown et~al.(2023)Brown, Schneider, and Vodrahalli]{brown2023learning}
William Brown, Jon Schneider, and Kiran Vodrahalli.
\newblock Is learning in games good for the learners?
\newblock \emph{Advances in Neural Information Processing Systems},
  36:\penalty0 54228--54249, 2023.

\bibitem[Camara et~al.(2020)Camara, Hartline, and
  Johnsen]{camara2020mechanisms}
Modibo~K Camara, Jason~D Hartline, and Aleck Johnsen.
\newblock Mechanisms for a no-regret agent: Beyond the common prior.
\newblock In \emph{Proceedings of the 61st Annual Symposium on Foundations of
  Computer Science (FOCS)}, pages 259--270. IEEE, 2020.

\bibitem[Chen et~al.(2020)Chen, Liu, and Podimata]{chen2020learning}
Yiling Chen, Yang Liu, and Chara Podimata.
\newblock Learning strategy-aware linear classifiers.
\newblock In \emph{Advances in Neural Information Processing Systems},
  volume~33, pages 15265--15276, 2020.

\bibitem[Collina et~al.(2023)Collina, Arunachaleswaran, and
  Kearns]{arunachaleswaran2022efficient}
Natalie Collina, Eshwar~Ram Arunachaleswaran, and Michael Kearns.
\newblock Efficient stackelberg strategies for finitely repeated games.
\newblock In \emph{Proceedings of the 2023 International Conference on
  Autonomous Agents and Multi-Agent Systems (AAMAS)}, AAMAS '23, page
  643–651, 2023.

\bibitem[Collina et~al.(2024)Collina, Roth, and Shao]{collina2024efficient}
Natalie Collina, Aaron Roth, and Han Shao.
\newblock Efficient prior-free mechanisms for no-regret agents.
\newblock In \emph{Proceedings of the 25th ACM Conference on Economics and
  Computation}, pages 511--541, 2024.

\bibitem[Conitzer and Sandholm(2006)]{conitzer2006computing}
Vincent Conitzer and Tuomas Sandholm.
\newblock Computing the optimal strategy to commit to.
\newblock In \emph{Proceedings of the 7th ACM Conference on Electronic Commerce
  (EC)}, pages 82--90, 2006.

\bibitem[Dagan et~al.(2023)Dagan, Daskalakis, Fishelson, and
  Golowich]{dagan2023external}
Yuval Dagan, Constantinos Daskalakis, Maxwell Fishelson, and Noah Golowich.
\newblock From external to swap regret 2.0: An efficient reduction and
  oblivious adversary for large action spaces.
\newblock \emph{arXiv preprint arXiv:2310.19786}, 2023.

\bibitem[Daniely et~al.(2015)Daniely, Gonen, and
  Shalev-Shwartz]{daniely2015strongly}
Amit Daniely, Alon Gonen, and Shai Shalev-Shwartz.
\newblock Strongly adaptive online learning.
\newblock In \emph{International Conference on Machine Learning (ICML)}, pages
  1405--1411. PMLR, 2015.

\bibitem[Dawid(1982)]{dawid1982well}
A~Philip Dawid.
\newblock The well-calibrated bayesian.
\newblock \emph{Journal of the American Statistical Association}, 77\penalty0
  (379):\penalty0 605--610, 1982.

\bibitem[Deng et~al.(2019)Deng, Schneider, and Sivan]{deng2019strategizing}
Yuan Deng, Jon Schneider, and Balasubramanian Sivan.
\newblock Strategizing against no-regret learners.
\newblock In \emph{Advances in Neural Information Processing Systems},
  volume~32, pages 1579--1587, 2019.

\bibitem[Dong et~al.(2018)Dong, Roth, Schutzman, Waggoner, and
  Wu]{dong2018strategic}
Jinshuo Dong, Aaron Roth, Zachary Schutzman, Bo~Waggoner, and Zhiwei~Steven Wu.
\newblock Strategic classification from revealed preferences.
\newblock In \emph{Proceedings of the 19th ACM Conference on Economics and
  Computation (EC)}, pages 55--70, 2018.

\bibitem[Fang et~al.(2015)Fang, Nguyen, Ford, Sintov, and Tambe]{GreenSGs}
Fei Fang, Thanh Nguyen, Benjamin Ford, Nicole Sintov, and Milind Tambe.
\newblock Introduction to green security games.
\newblock In \emph{International Joint Conference on Artificial Intelligence
  (IJCAI)}, 2015.

\bibitem[Feige and Schechtman(2002)]{feige2002optimality}
Uriel Feige and Gideon Schechtman.
\newblock On the optimality of the random hyperplane rounding technique for max
  cut.
\newblock \emph{Random Structures \& Algorithms}, 20\penalty0 (3):\penalty0
  403--440, 2002.

\bibitem[Fiez et~al.(2019)Fiez, Chasnov, and Ratliff]{fiez2019convergence}
Tanner Fiez, Benjamin Chasnov, and Lillian~J Ratliff.
\newblock Convergence of learning dynamics in stackelberg games.
\newblock \emph{arXiv preprint arXiv:1906.01217}, 2019.

\bibitem[Fiez et~al.(2020)Fiez, Chasnov, and Ratliff]{fiez2020implicit}
Tanner Fiez, Benjamin Chasnov, and Lillian Ratliff.
\newblock Implicit learning dynamics in {Stackelberg} games: Equilibria
  characterization, convergence analysis, and empirical study.
\newblock In \emph{International Conference on Machine Learning (ICML)}, pages
  3133--3144. PMLR, 2020.

\bibitem[Flaxman et~al.(2004)Flaxman, Kalai, and McMahan]{Flaxman2004OnlineCO}
Abraham~D. Flaxman, Adam~Tauman Kalai, and H.~B. McMahan.
\newblock Online convex optimization in the bandit setting: gradient descent
  without a gradient.
\newblock In \emph{Proceedings of the 15th Annual ACM-SIAM Symposium on
  Discrete Algorithms (SODA)}, 2004.

\bibitem[Foster and Hart(2018)]{foster2018smooth}
Dean~P Foster and Sergiu Hart.
\newblock Smooth calibration, leaky forecasts, finite recall, and {Nash}
  dynamics.
\newblock \emph{Games and Economic Behavior}, 109:\penalty0 271--293, 2018.

\bibitem[Foster and Hart(2021)]{foster2021forecast}
Dean~P Foster and Sergiu Hart.
\newblock Forecast hedging and calibration.
\newblock \emph{Journal of Political Economy}, 129\penalty0 (12):\penalty0
  3447--3490, 2021.

\bibitem[Foster and Vohra(1997)]{foster1997calibrated}
Dean~P Foster and Rakesh~V Vohra.
\newblock Calibrated learning and correlated equilibrium.
\newblock \emph{Games and Economic Behavior}, 21\penalty0 (1-2):\penalty0 40,
  1997.

\bibitem[Foster and Vohra(1998)]{foster1998asymptotic}
Dean~P Foster and Rakesh~V Vohra.
\newblock Asymptotic calibration.
\newblock \emph{Biometrika}, 85\penalty0 (2):\penalty0 379--390, 1998.

\bibitem[Freund et~al.(1997)Freund, Schapire, Singer, and
  Warmuth]{freund1997using}
Yoav Freund, Robert~E Schapire, Yoram Singer, and Manfred~K Warmuth.
\newblock Using and combining predictors that specialize.
\newblock In \emph{Proceedings of the 39th Annual ACM Symposium on Theory of
  Computing (STOC)}, pages 334--343, 1997.

\bibitem[Gan et~al.(2025)Gan, Han, Wu, and Xu]{gan2025robust}
Jiarui Gan, Minbiao Han, Jibang Wu, and Haifeng Xu.
\newblock Robust stackelberg equilibria.
\newblock \emph{Mathematical Programming}, pages 1--41, 2025.

\bibitem[Haghtalab et~al.(2022)Haghtalab, Lykouris, Nietert, and
  Wei]{haghtalab2022learning}
Nika Haghtalab, Thodoris Lykouris, Sloan Nietert, and Alexander Wei.
\newblock Learning in {Stackelberg} games with non-myopic agents.
\newblock In \emph{Proceedings of the 23rd ACM Conference on Economics and
  Computation (EC)}, pages 917--918, 2022.

\bibitem[Haghtalab et~al.(2023)Haghtalab, Jordan, and
  Zhao]{haghtalab2023unifying}
Nika Haghtalab, Michael~I Jordan, and Eric Zhao.
\newblock A unifying perspective on multi-calibration: Unleashing game dynamics
  for multi-objective learning.
\newblock \emph{arXiv preprint arXiv:2302.10863}, 2023.

\bibitem[Hardt et~al.(2016)Hardt, Megiddo, Papadimitriou, and
  Wootters]{hardt2016strategic}
Moritz Hardt, Nimrod Megiddo, Christos Papadimitriou, and Mary Wootters.
\newblock Strategic classification.
\newblock In \emph{Innovations in Theoretical Computer Science Conference
  (ITCS)}, pages 111--122, 2016.

\bibitem[Hart(2022)]{hart2022calibrated}
Sergiu Hart.
\newblock Calibrated forecasts: The minimax proof.
\newblock \emph{arXiv preprint arXiv:2209.05863}, 2022.

\bibitem[Hart and Mas-Colell(2000)]{hart2000simple}
Sergiu Hart and Andreu Mas-Colell.
\newblock A simple adaptive procedure leading to correlated equilibrium.
\newblock \emph{Econometrica}, 68\penalty0 (5):\penalty0 1127--1150, 2000.

\bibitem[Hu and Wu(2024)]{hu2024predict}
Lunjia Hu and Yifan Wu.
\newblock Predict to minimize swap regret for all payoff-bounded tasks.
\newblock In \emph{2024 IEEE 65th Annual Symposium on Foundations of Computer
  Science (FOCS)}, pages 244--263. IEEE, 2024.

\bibitem[Jun et~al.(2017)Jun, Orabona, Wright, and Willett]{jun2017improved}
Kwang-Sung Jun, Francesco Orabona, Stephen Wright, and Rebecca Willett.
\newblock Improved strongly adaptive online learning using coin betting.
\newblock In \emph{Artificial Intelligence and Statistics (AISTATS)}, pages
  943--951. PMLR, 2017.

\bibitem[Kakade and Foster(2008)]{kakade2008deterministic}
Sham~M Kakade and Dean~P Foster.
\newblock Deterministic calibration and {Nash} equilibrium.
\newblock \emph{Journal of Computer and System Sciences}, 74\penalty0
  (1):\penalty0 115--130, 2008.

\bibitem[Kalai and Vempala(2006)]{kalai2006simulated}
Adam~Tauman Kalai and Santosh Vempala.
\newblock Simulated annealing for convex optimization.
\newblock \emph{Mathematics of Operations Research}, 31\penalty0 (2):\penalty0
  253--266, 2006.

\bibitem[Kiekintveld et~al.(2013)Kiekintveld, Islam, and
  Kreinovich]{kiekintveld2013security}
Christopher Kiekintveld, Towhidul Islam, and Vladik Kreinovich.
\newblock Security games with interval uncertainty.
\newblock In \emph{Proceedings of the 2013 international conference on
  Autonomous agents and multi-agent systems}, pages 231--238, 2013.

\bibitem[Kleinberg et~al.(2023)Kleinberg, Leme, Schneider, and
  Teng]{kleinberg2023u}
Bobby Kleinberg, Renato~Paes Leme, Jon Schneider, and Yifeng Teng.
\newblock U-calibration: Forecasting for an unknown agent.
\newblock In \emph{The Thirty Sixth Annual Conference on Learning Theory},
  pages 5143--5145. PMLR, 2023.

\bibitem[Kroer et~al.(2018)Kroer, Farina, and Sandholm]{kroer2018robust}
Christian Kroer, Gabriele Farina, and Tuomas Sandholm.
\newblock Robust stackelberg equilibria in extensive-form games and extension
  to limited lookahead.
\newblock In \emph{Proceedings of the AAAI Conference on Artificial
  Intelligence (AAAI)}, volume~32, pages 1130--1137, 2018.

\bibitem[Lee et~al.(2022)Lee, Noarov, Pai, and Roth]{lee2022online}
Daniel Lee, Georgy Noarov, Mallesh Pai, and Aaron Roth.
\newblock Online minimax multiobjective optimization: Multicalibeating and
  other applications.
\newblock In \emph{Advances in Neural Information Processing Systems},
  volume~35, pages 29051--29063, 2022.

\bibitem[Lee et~al.(2018)Lee, Sidford, and Vempala]{lee2018efficient}
Yin~Tat Lee, Aaron Sidford, and Santosh~S Vempala.
\newblock Efficient convex optimization with membership oracles.
\newblock In \emph{Conference on Learning Theory (COLT)}, pages 1292--1294.
  PMLR, 2018.

\bibitem[Letchford et~al.(2009)Letchford, Conitzer, and
  Munagala]{letchford2009learning}
Joshua Letchford, Vincent Conitzer, and Kamesh Munagala.
\newblock Learning and approximating the optimal strategy to commit to.
\newblock In \emph{Algorithmic Game Theory}, pages 250--262. Springer, 2009.

\bibitem[Luo and Schapire(2015)]{luo2015achieving}
Haipeng Luo and Robert~E Schapire.
\newblock Achieving all with no parameters: Adanormalhedge.
\newblock In \emph{Conference on Learning Theory (COLT)}, pages 1286--1304.
  PMLR, 2015.

\bibitem[Luo et~al.(2024)Luo, Senapati, and Sharan]{luo2024optimal}
Haipeng Luo, Spandan Senapati, and Vatsal Sharan.
\newblock Optimal multiclass u-calibration error and beyond.
\newblock \emph{Advances in Neural Information Processing Systems},
  37:\penalty0 7521--7551, 2024.

\bibitem[Mansour et~al.(2022)Mansour, Mohri, Schneider, and
  Sivan]{mansour2022strategizing}
Yishay Mansour, Mehryar Mohri, Jon Schneider, and Balasubramanian Sivan.
\newblock Strategizing against learners in {Bayesian} games.
\newblock In \emph{Conference on Learning Theory (COLT)}, pages 5221--5252.
  PMLR, 2022.

\bibitem[Muthukumar and Sahai(2019)]{muthukumar2019robust}
Vidya Muthukumar and Anant Sahai.
\newblock Robust commitments and partial reputation.
\newblock In \emph{Proceedings of the 20th ACM Conference on Economics and
  Computation (EC)}, pages 637--638, 2019.

\bibitem[Noarov et~al.(2023)Noarov, Ramalingam, Roth, and Xie]{noarov2023high}
Georgy Noarov, Ramya Ramalingam, Aaron Roth, and Stephan Xie.
\newblock High-dimensional prediction for sequential decision making.
\newblock \emph{arXiv preprint arXiv:2310.17651}, 2023.

\bibitem[Noarov et~al.(2025)Noarov, Ramalingam, Roth, and Xie]{noarovhigh}
Georgy Noarov, Ramya Ramalingam, Aaron Roth, and Stephan Xie.
\newblock High-dimensional prediction for sequential decision making.
\newblock In \emph{Forty-second International Conference on Machine Learning},
  2025.

\bibitem[Peng(2025)]{peng2025high}
Binghui Peng.
\newblock High dimensional online calibration in polynomial time.
\newblock \emph{arXiv preprint arXiv:2504.09096}, 2025.

\bibitem[Peng and Rubinstein(2024)]{peng2024fast}
Binghui Peng and Aviad Rubinstein.
\newblock Fast swap regret minimization and applications to approximate
  correlated equilibria.
\newblock In \emph{Proceedings of the 56th Annual ACM Symposium on Theory of
  Computing}, pages 1223--1234, 2024.

\bibitem[Peng et~al.(2019)Peng, Shen, Tang, and Zuo]{peng2019learning}
Binghui Peng, Weiran Shen, Pingzhong Tang, and Song Zuo.
\newblock Learning optimal strategies to commit to.
\newblock In \emph{Proceedings of the AAAI Conference on Artificial
  Intelligence (AAAI)}, volume~33, pages 2149--2156, 2019.

\bibitem[Pita et~al.(2010)Pita, Jain, Tambe, Ord{\'o}nez, and
  Kraus]{pita2010robust}
James Pita, Manish Jain, Milind Tambe, Fernando Ord{\'o}nez, and Sarit Kraus.
\newblock Robust solutions to {Stackelberg} games: Addressing bounded
  rationality and limited observations in human cognition.
\newblock \emph{Artificial Intelligence}, 174\penalty0 (15):\penalty0
  1142--1171, 2010.

\bibitem[Roth and Shi(2024)]{roth2024forecasting}
Aaron Roth and Mirah Shi.
\newblock Forecasting for swap regret for all downstream agents.
\newblock In \emph{Proceedings of the 25th ACM Conference on Economics and
  Computation}, pages 466--488, 2024.

\bibitem[Tambe(2011)]{Tambe12}
Milind Tambe.
\newblock \emph{Security and game theory: algorithms, deployed systems, lessons
  learned}.
\newblock Cambridge university press, 2011.

\bibitem[Xu et~al.(2016)Xu, Tran-Thanh, and Jennings]{xu2016playing}
Haifeng Xu, Long Tran-Thanh, and Nicholas~R Jennings.
\newblock Playing repeated security games with no prior knowledge.
\newblock In \emph{Proceedings of the 2016 International Conference on
  Autonomous Agents and Multi-Agent Systems (AAMAS)}, pages 104--112, 2016.

\bibitem[Xu et~al.(2021)Xu, Bondi, Fang, Perrault, Wang, and Tambe]{xu2021dual}
Lily Xu, Elizabeth Bondi, Fei Fang, Andrew Perrault, Kai Wang, and Milind
  Tambe.
\newblock Dual-mandate patrols: Multi-armed bandits for green security.
\newblock In \emph{Proc.~35th AAAI Conference on Artificial Intelligence
  (AAAI)}, 2021.

\bibitem[Zrnic et~al.(2021)Zrnic, Mazumdar, Sastry, and Jordan]{zrnic2021leads}
Tijana Zrnic, Eric Mazumdar, Shankar Sastry, and Michael Jordan.
\newblock Who leads and who follows in strategic classification?
\newblock In \emph{Advances in Neural Information Processing Systems},
  volume~34, pages 15257--15269, 2021.

\bibitem[Zuo and Tang(2015)]{zuo2015optimal}
Song Zuo and Pingzhong Tang.
\newblock Optimal machine strategies to commit to in two-person repeated games.
\newblock In \emph{Proceedings of the AAAI Conference on Artificial
  Intelligence (AAAI)}, volume~29, pages 1071--1077, 2015.

\end{thebibliography}

\newpage
\appendix
{\bf \Huge Appendix}

\section{Calibrated Forecasts Standard Definition}\label{app:calibration-std}

We give below the standard definition for asymptotic calibration of \citet{foster1998asymptotic} for a sequence of binary outcomes, i.e., $\h_t \in A=\{0,1\}, \forall t \in [T]$. The forecasts $\vp_t$ take values in $C=[0,1]$. Let $X$ denote the adaptive adversary generating the events' sequence (which is of infinite size), where the $T$ {first} events are $\h_1, \dots, \h_T$.
\begin{definition}
    A forecasting procedure $\sigma$ is \emph{asymptotically calibrated} if and only if for any adaptive adversary $X$ that generates the sequence $\h_1,\cdots,\h_T\in A$ and the forecasting algorithm $\sigma$ that generates (possibly random) forecasts $\vp_1,\cdots,\vp_T\in C$ on the same sequence, we have that the calibration score $C_T(X,\sigma)$ goes to $0$ as $T \to \infty$:
    \begin{align}
        C_T(X,\sigma)=\sum_{\vp\in\calF_P}
        \frac{n_T(\vp;\h,\sigma)}{T}
        \left|\rho_T(\vp;\h,\sigma)-\vp\right|
        \label{eq:def-cal-standard}
    \end{align}
    where $n_T(\vp;\h,\sigma)\triangleq\sum_{t \in [T]} \indicator{\vp_t=\vp}$ is the number of times that $\sigma$ predicts $\vp$ and $\rho_T(\vp;\h,\sigma)\triangleq\frac{\sum_{ t \in [T]} \h_t\indicator{\vp_t=\vp}}{n_T(\vp;\h,\sigma)}$ be the fraction (empirical probability) of these times that the actual event was $1$. 
\end{definition}

Note that in Eq.~\eqref{eq:def-cal-standard}, while $\calF_P$ contains an infinite number of distinct $\bp$'s (hence an infinite number of summands), for every finite $T$, there is only a finite number of $\bp$ where $n_T(\vp;\h,\sigma)$ is nonzero. Therefore, $C_T$ is well-defined and finite.

Equivalently, the above definition states that for the infinite binning~\cite{foster2021forecast} $\Pi=\big\{w_{x}(\bp):x\in C\big\}$ where $w_x(\bp)=\indicator{\bp=x}$, the calibration score can be equivalently expressed as
    \begin{align*}
        C_T(X,\sigma)\triangleq \sum_{w_{x}\in \Pi}\frac{n_T(x)}{T}\left\|\overline{\h}_{T}(x)-\overline{\bp}_{T}(x)\right\|,
    \end{align*}
    where  $n_{T}(x)\triangleq\sum_{t=1}^T w_x(\bp_t)$ is the number of times that forecast $\bp_t$ falls into bin $x$, $\bar{\bp}_{T}(x) \triangleq\sum_{t=1}^T \frac{w_x(\bp_t)}{n_{T}(x)}\cdot \bp_t$ is the average forecast that activates bin $x$, which is equal to $\sum_{t=1}^T \frac{w_x(\bp_t)}{n_{T}(x)}\cdot x=x$ because $w_x(\bp_t)$ is nonzero if and only if $\bp_t=x$, and $\bar{\h}_{T}(x) \triangleq\sum_{t=1}^T \frac{w_x(\bp_t)}{n_{T}(x)}\cdot {\h_{t}}$ is the average outcome corresponding to bin $x$. It follows that the score $C_T$ is a sum of the calibration errors during interval $[1:T]$ for all bins (with $\calerr$ defined in \Cref{def:calibration-adaptive}).
    \begin{align*}
        C_T(X,\sigma)=\sum_{w_x\in\Pi}\calerr_x(\h_{1:T},\bp_{1:T}).
    \end{align*}
\section{Calibrated Forecasts Lead to No Swap Regret}\label{app:swap-regret}

In this section, we show the connection between no-swap-regret agents and adaptively calibrated ones. As a reminder, no-swap-regret agents (translated to our setting and notation for the ease of exposition) are defined as follows.

\begin{definition}[Agent's swap regret~\citep{blum2007external}]
    For a sequence of principal's strategies $\h_1,\cdots,\h_T\in\cH_P$ and agent's actions $y_1,\cdots,y_T\in\actionA$, the swap regret is defined as
    \begin{align*}
        \swapreg(\h_{1:T},y_{1:T})=\max_{\pi:\actionA\to\actionA}\sum_{t \in [T]} U_A(\h_t, \pi(y_t))-\sum_{t \in [T]} U_A(\h_t, y_t).
    \end{align*}
We say that an agent is a \emph{no-swap-regret} agent, if for the sequence of actions $\{y_t\}_{t \in [T]}$ that they are playing it holds that $\swapreg(\h_{1:T},y_{1:T}) = o(T)$.
\end{definition}

We next show that calibrated forecasts lead to no swap regret actions for the agent. 

\begin{lemma}[Calibrated forecasts lead to no swap regret]
\label{lemma:no-swap-regret}
    If the agent is $(\eps, \Pi)$-adaptively calibrated, then the agent's swap regret on the sequence $\h_{1:T}$ is bounded by the calibration error as follows:
    \begin{itemize}
        \item If the agent breaks ties deterministically, then with probability $\ge1-\delta,$
        \begin{align*}
            \swapreg(\h_{1:T},y_{1:T})\le 2U_{\max}\numP \numA T\left(r_\delta(T)+\eps\right)\in o(T).
        \end{align*}
        \item If the agent breaks ties randomly, then with probability $\ge1-2\delta$,
        \begin{align*}
            \swapreg(\h_{1:T},y_{1:T})\le U_{\max}\left(O\left(\sqrt{T\numA \log\left(\frac{\numA}{\delta}\right)}\right)+2\numP\numA T\left(r_\delta(T)+\eps\right)\right)\in o(T).
        \end{align*}
    \end{itemize}
    where $\Umax = \max_{\h \in \calA_P} \max_{y \in \calA_A} U_A(\h, y)$ is the maximum utility the agent can obtain (without constraining the agent to play best responses).
\end{lemma}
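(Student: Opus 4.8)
The idea is to express the swap regret in terms of the adaptive calibration error of the binning $\Pi$ used in the CSG, plus, in the randomized tie-breaking case, a lower-order martingale fluctuation. I would fix an arbitrary swap map $\pi:\actionA\to\actionA$ and, in the \emph{deterministic} tie-breaking case, use that $y_t$ is a deterministic function of $\bp_t$ with exactly one active bin per round, so that $\indicator{y_t=i}=w_i(\bp_t)$ and
\[\sum_{t}\big(U_A(\h_t,\pi(y_t))-U_A(\h_t,y_t)\big)=\sum_{i\in\actionA}\sum_{t}w_i(\bp_t)\big(U_A(\h_t,\pi(i))-U_A(\h_t,i)\big).\]
By linearity of $U_A(\cdot,j)$ in the principal's mixed strategy, each inner sum equals $n_T(i)\big(U_A(\bar\h_T(i),\pi(i))-U_A(\bar\h_T(i),i)\big)$, where $n_T(i)$ and $\bar\h_T(i)$ are the effective bin-$i$ activation count and averaged principal strategy as in \Cref{def:calibration-adaptive}.

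The heart of the argument is a ``pass through the averaged forecast'' step. Since $w_i(\bp_t)>0$ forces $\bp_t\in\polytope_i$, and $\polytope_i$ is convex (it is cut out by the linear inequalities $U_A(\cdot,i)\ge U_A(\cdot,j)$), the average $\bar\bp_T(i)$ also lies in $\polytope_i$, so $i\in\BR(\bar\bp_T(i))$ and $U_A(\bar\bp_T(i),\pi(i))-U_A(\bar\bp_T(i),i)\le 0$. I would telescope $U_A(\bar\h_T(i),\cdot)$ to $U_A(\bar\bp_T(i),\cdot)$ and back, discard the non-positive middle term, and bound each of the two remaining ``bridge'' differences by $\ip{U_A(\cdot,j)}{\bar\h_T(i)-\bar\bp_T(i)}\le \numP\Umax\|\bar\h_T(i)-\bar\bp_T(i)\|_\infty$ (Hölder, using the $\ell_1$ norm of an entrywise-$\le\Umax$ vector over $\numP$ principal actions). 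The $(\eps,\Pi)$-adaptive calibration guarantee applied to the single interval $[1,T]$ then bounds $\|\bar\h_T(i)-\bar\bp_T(i)\|_\infty\le T(r_\delta(T)+\eps)/n_T(i)$ on an event of probability $\ge1-\delta$; multiplying back by $n_T(i)$ cancels the denominator, and summing over the $\numA$ actions $i$ gives the first bullet, $\swapreg\le2\Umax\numP\numA T(r_\delta(T)+\eps)$.

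For \emph{randomized} tie-breaking, the same algebra applies after replacing $\indicator{y_t=i}$ by its conditional mean $w_i(\bp_t)=\indicator{i\in\BR(\bp_t)}/|\BR(\bp_t)|$. For each fixed $\pi$, the process $\sum_{\tau\le t}\big(U_A(\h_\tau,\pi(y_\tau))-\sum_i w_i(\bp_\tau)U_A(\h_\tau,\pi(i))\big)$ is a martingale with increments $O(\Umax)$, so Azuma--Hoeffding gives an $O(\Umax\sqrt{T\log(1/\delta')})$ deviation with probability $1-\delta'$; a union bound over the $\numA^{\numA}$ swap maps with $\delta'=\delta/\numA^{\numA}$ produces the uniform bound $O(\Umax\sqrt{T\numA\log(\numA/\delta)})$ (the identity map, needed for the $-\sum_t U_A(\h_t,y_t)$ term, is one of these). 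Adding this fluctuation to the deterministic-style estimate of the first bullet, and intersecting the calibration event with the concentration event, yields the second bullet with failure probability $\le2\delta$.

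The step I expect to be the main obstacle is obtaining the stated $\sqrt{\numA}$ (rather than $\numA$) dependence in the concentration term: this forces me to apply the concentration inequality to the \emph{aggregated} per-$\pi$ martingale $\sum_\tau\big(U_A(\h_\tau,\pi(y_\tau))-\sum_i w_i(\bp_\tau)U_A(\h_\tau,\pi(i))\big)$ and union-bound over swap maps, rather than to the $\numA^2$ coordinatewise martingales $\sum_\tau(\indicator{y_\tau=i}-w_i(\bp_\tau))U_A(\h_\tau,j)$, whose naive aggregation would lose a factor of $\numA$. A secondary point needing care is the convexity and closedness of $\polytope_i$ under averaging, so that $\bar\bp_T(i)$ still best-responds to $i$, together with the harmless off-by-one between the normalization $t-s$ in \Cref{def:calibration-adaptive} and $T$ when $s=1$.
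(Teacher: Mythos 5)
Your proposal is correct and follows essentially the same route as the paper's proof: the same bin-wise decomposition $\indicator{y_t=i}=w_i(\bp_t)$, the same add-and-subtract of $\bar{\bp}_T(i)$ with the middle term killed because the convex average $\bar{\bp}_T(i)$ stays in $\polytope_i$, the same H\"older bound of the two bridge terms against the adaptive calibration error on $[1,T]$, and the same Azuma--Hoeffding plus union bound over the $\numA^{\numA}$ swap maps in the randomized case. The only (harmless) divergence is that you route the $-\sum_t U_A(\h_t,y_t)$ term through the identity swap map in the concentration step, whereas the paper asserts it as an exact per-round identity; your handling is, if anything, the more careful of the two.
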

\proof{Proof of Lemma~\ref{lemma:no-swap-regret}.} We first present the proof for the case that the agents break ties deterministically. To simplify notation, we use $n_T(i) := n_{[0:T]}(i)$, $\bar{\vp}_T := \bar{\vp}_{[0:T]}(i)$, and $\bar{\h}_T(i) := \bar{\h}_{[0:T]}(i)$.

Fix a $\pi:\actionA\to\actionA$. Then, with probability at least $1 - \delta$, we have that:
\begin{align*}
        &\sum_{t=1}^T U_A(\h_t, \pi(y_t))-\sum_{t=1}^T U_A(\h_t, y_t)\\
       &=\sum_{i\in\actionA}\sum_{t=1}^T \indicator{y_t=i} \left(U_A(\h_t, \pi(i))- U_A(\h_t, i)\right) &\tag{rewriting $y_t$ as the exact action}\\
        &\stepa{=}\sum_{i\in\actionA}\sum_{t=1}^T w_i(\bp_t) \left(
        \left\langle \h_t, U_A(\cdot,\pi(i))\right\rangle 
        -\left\langle\h_t, U_A(\cdot,i)\right\rangle\right)&\numberthis{\label{eq:bef-main-proof}}\\
        &=\sum_{i\in\actionA} n_T(i) \left(
        \left\langle \bar{\h}_T(i), U_A(\cdot,\pi(i))\right\rangle 
        -\left\langle\bar{\h}_T(i), U_A(\cdot,i)\right\rangle\right)\\
        &=\sum_{i\in\actionA}n_T(i) \Bigg({\left\langle 
        \bar{\h}_T(i)-\bar{\bp}_T(i), U_A(\cdot,\pi(i))\right\rangle}
        +\left\langle \bar{\bp}_T(i), U_A(\cdot,\pi(i))-U_A(\cdot,i)
        \right\rangle\\
        &\qquad\qquad\qquad\qquad\qquad\qquad
        +{\left\langle \bar{\bp}_T(i)-\bar{\h}_T(i), U_A(\cdot,i)\right\rangle}\Bigg)\\
        &\stepb{\le} \sum_{i\in\actionA}n_T(i) \left\|\bar{\bp}_T(i)-\bar{\h}_T(i)\right\|_{\infty}\left(\|U_A(\cdot,\pi(i))\|_1+\|U_A(\cdot,i)\|_1\right)\\
        &=2U_{\max}\numP \cdot \sum_{i\in\actionA} T\cdot\calerr_i(\h_{1:T},\bp_{1:T}) &\tag{Def.~\ref{def:calibration-adaptive}}\\
        & \le 2U_{\max}\numP \numA T\left(r_\delta(T)+\eps\right).
    \end{align*}
    In the above equations, step (a) is due to the fact that agents best respond with a deterministic tie-breaking rule: $y_t=i$ if and only if $i\in\BR(\bp_t)$ and $i\succ j,\forall j\neq i$, which is equivalent to $w_i(\bp_t)=1$. We have also used $U_A(\cdot,i)$ to denote the $\numP$-dimensional vector where the $j$th entry is the utility $U_A(j,i)$.
    Step (b) is because the second term 
    \[\left\langle \bar{\bp}_T(i), U_A(\cdot,\pi(i))-U_A(\cdot,i) \right\rangle=U_A(\bar{\bp}_T(i),\pi(i))-U_A(\bar{\bp}_T(i),i)\] is non-positive since each $\bp_t$ with $w_i(\bp_t)=1$ belongs to the best response polytope $\polytope_i$, so does their average: $\bar{\bp}_t(i)\in \polytope_i\iff i\in \BR(\bar{\bp}_t(i))$.

        Since the above inequality holds for any $\pi$, it also holds after taking the maximum over all $\pi:\actionA\to\actionA$. Therefore, we have the same bound for the agent's swap regret.

    Next, we move to the case when the agent breaks ties randomly. For a fixed $\pi$, we have that at every time step $t$, 
    \begin{align*}
        \mathbb{E}_{t-1}\left[
            U_A(\h_t,\pi(y_t))
        \right]
        =\frac{\sum_{i\in\BR(\bp_t)}U_A(\h_t,\pi(i))}{|\BR(\bp_t)|}
        =\sum_{i\in\actionA} w_i(\bp_t) U_A(\h_t,\pi(i)).
    \end{align*}
    Therefore, by Azuma-Hoeffding's inequality, w.p. $\ge 1-\delta'$, we have
    \begin{align}\label{eq:azuma-pi}
        \sum_{t=1}^T U_A(\h_t,\pi(y_t)) \le \sum_{t=1}^T w_i(\bp_t)\sum_{i\in\actionA} U_A(\h_t,\pi(i)) + O\left(\sqrt{ T\log\left(\frac{1}{\delta'}\right) }\right).
    \end{align}
    Since all actions in $\BR(\bp_t)$ have the same utility for the agents, we also have
    \begin{align*}
        U_A(\h_t,y_t)=\frac{\sum_{i\in\BR(\bp_t)} U_A(\h_t,i)}{|\BR(\bp_t)|}
        =\sum_{i\in\actionA} w_i(\bp_t) U_A(\h_t,i).
    \end{align*}
    Therefore, using Equations~\eqref{eq:azuma-pi} and~\eqref{eq:azuma-br}, we have that with probability at least $1-\delta'$,
    \begin{align*}
        \sum_{t=1}^T U_A(\h_t, \pi(y_t))-\sum_{t=1}^T U_A(\h_t, y_t)
        &\le \sum_{i\in\actionA}\sum_{t=1}^T w_i(\bp_t) \left(U_A(\h_t,\pi(i))-U_A(\h_t,i)\right) \\
        &\quad \quad \quad +O\left(\sqrt{ T\log\left(\frac{1}{\delta'}\right) }\right) \numberthis{\label{eq:azuma-br}}.
    \end{align*}
    We can use the same arguments as above (from Equation~\eqref{eq:bef-main-proof} onwards) to bound the first term on the right hand side by $2U_{\max}\numP \numA T\left(r_\delta(T)+\eps\right)$ with probability $1-\delta$. Finally, setting $\delta'=\delta/M$ where $M=\numA^{\numA}$ is the number of possible swap functions, and applying the union bound, we conclude that with probability $\ge1-2\delta$, the swap regret is bounded by
    \begin{align*}
        \swapreg(\h_{1:T},y_{1:T})\le U_{\max}\left({O} \left(\sqrt{T \left(\numA \log\numA+\log\left(\frac{1}{\delta}\right)\right)}\right)+2\numP\numA T\left(r_\delta(T)+\eps\right)\right).    \end{align*}
        The proof is complete.
\Halmos
\endproof

\section{Supplementary Material for Section~\ref{sec:learning-algos}}\label{app:learning-algos}

\subsection{Approximate membership oracle to conservative polytopes}
\label{sec:approx-mem}

In this section, we formally present the algorithm (Algorithm~\ref{algo:conservative-approx-MEM}) for constructing an approximate membership oracle to the conservative best response polytope for each of the agent's action. The sample complexity of the oracle will be presented in \Cref{lem:apx-memb-oracle}.

\begin{algorithm}[htbp]
    \caption{$\approxmem$: Approximate membership oracle for conservative polytopes}
    \label{algo:conservative-approx-MEM}
    \DontPrintSemicolon
    \LinesNumbered
    \SetAlgoNoLine
    \SetAlgoNoEnd
    \KwIn{queried strategy $\h$, approximation factor $\eps_1$, conservatism factor $\eps_2$, failure probability $\eps_3$;
    If True/False mode, queried action $y\in\actionA$.}
    \KwOut{With probability $1-\eps_3$, return \textsc{True} if $\h\in \ball_2(\polytope_y^{-\eps_2},-\eps_1)$ and \textsc{False} if $\h\notin \ball_2(\polytope_y^{-\eps_2},{-\frac{\eps_1}{2\sqrt{\numP}}})$.}
    \textbf{Parameters: }Number of epochs $\ntrials=10\sqrt{\numP}\log(\frac{1}{\eps_3})$, radius $R=\eps_1$, 
    number of samples per epoch $\nrepeat=r_\delta^{-1}(\frac{\eps_2}{\numA\sqrt{\numP}})$.\;

    \For{epoch $\phi \in [\Phi]$}{ 
        Sample a point $\h_\phi$ such that $\|\h_\phi-\h\|_2=R$.
        \tcc*{Sample point $\h_{\phi}$ close to $\h$}
        \textbf{If} {$\h_\phi\not\in\cH_P$} \textbf{then} {\textsc{Return} \textsc{False} \tcc*{If $\h_\phi$ is not a feasible principal strategy, return False.}}
        \Else{Play strategy $\h_\phi$ for $\nrepeat$ rounds.\;
        $y_\phi$ $\gets$ most frequent best-response action from agent during the $\nrepeat$ rounds.\;
        {\bf if} $y_\phi\neq y$ {\bf then} \textsc{Return False} \tcc*{
        $\h_\phi$ is too close to $\polytope_{y_\phi}$}
        }}
        \textsc{Return} \textsc{True}\tcp{For membership, output $y_\Phi$ if all $\{y_\phi\}_{\phi\in[\ntrials]}$ agree, and $\perp$ otherwise}
        
    \end{algorithm}

In \Cref{lem:apx-memb-oracle}, we show that the parameters $\ntrials,\epscal,R$ can be tuned to achieve a wide range of parameters $(\eps_1,\eps_2,\eps_3) $.

\begin{lemma}[Approximate membership oracle]
    \label[lemma]{lem:apx-memb-oracle}
    Suppose $\eps_1,\eps_2$ are given parameters 
    {that satisfy $\eps_2\le \frac{\eps_1}{\sqrt{\numP}}$.
    Let the agent be $(\eps,\Pi)$-adaptively calibrated with rate $r_\delta(\cdot)$ and infinitesimal $\eps$.}
    
    Then with parameters $\ntrials,R,\nrepeat$ that satisfy
    {$R=\eps_1$, 
    
    $\ntrials=100\sqrt{\numP}
    \cdot\log(\frac{1}{\eps_3})$,
    and $\nrepeat=r_\delta^{-1}(\frac{\eps_2}{\numA\sqrt{\numP}})$,} then conditioned on the success event of agent's calibration algorithm,
    Algorithm~\ref{algo:conservative-approx-MEM} (\approxmem) returns an $\eps_1$-approximate membership oracle to $\polytope_y^{-\eps_2}=B_2(\polytope_y,-\eps_2)$ with probability $1-\eps_3$, using no more than $N_{\eps_1,\eps_2,\eps_3}=O\left(\ntrials \cdot \nrepeat\right)={O}\left(\sqrt{\numP}\cdot r_\delta^{-1}(\frac{\eps_2}{\numA\sqrt{\numP}})\log(\frac{1}{\eps_3})\right)$ rounds of interactions with the agent. 
    
    Specifically, with probability $1-\eps_3$, $\approxmem$ either returns $\textsc{True}$ which asserts that $\h\in B_2(\polytope_y^{-\eps_2},{-\frac{\eps_1}{\sqrt{\numP}}})$, or returns $\textsc{False}$ which asserts that $\h\notin B_2(\polytope_y^{-\eps_2},-\eps_1)$.
\end{lemma}

\proof{Proof of \Cref{lem:apx-memb-oracle}.}
   Before we delve into the specifics of the proof, we introduce some notation. In order to make sure that $\h_\phi$ is such that $\|\h - \h_\phi\| = R$, we do the following: $\h_\phi \gets \h + R \vcs_\phi$, where $\vcs_\phi$ is sampled uniformly at random from the equator $\mathbb{S}\cap\mathbb{H}$, where $\bs=\{\vs\in\br^{\numP}:\|\vs\|_2^2=1\}$ is the unit sphere and $\mathbb{H}=\{\vs\in\br^{\numP}:\langle \vs,\mathbf{1}\rangle =0\}$ is an equatorial hyperplane ($\mathbf{1}\triangleq(1,\cdots,1)\in\br^{m}$). Note that this is because we want $\h_\phi$ to remain a valid probability distribution, i.e., that $\langle\h_\phi,\mathbf{1}\rangle = 1$ and $\h_\phi\ge0$ coordinate-wise; indeed, since we already have $\langle \h,\mathbf{1}\rangle = 1$, we need to make sure that (1) $\langle \vcs_\phi,\mathbf{1}\rangle = 0$, which is guaranteed by $\vcs_\phi\in\mathbb{H}$; (2) $\h_\phi\ge0$, which is guaranteed by returning $\textsc{False}$ whenever $\h_\phi\notin\calH_P$.
    
   For the rest of the proof, we condition on the following success event $\cE$, {which is the event that the agent has bounded calibration error rate $r_\delta(t-s)$ in every window of time $[s,t]$}:
   \[\mathcal{E}\triangleq
   \Big\{
    \forall [s,t]\subseteq [1,T],\ \forall i\in\actionA,\;
    \calerr_i\left(\h_{s:t},\bp_{s:t}\right)\le r_\delta(t-s)
   \Big\}.
   \]

    We first show that conditioned on $\cE$, we have $\h_\phi\in B_2(\polytope_{y_\phi},\eps_2)$ for all $\phi\in[\ntrials]$. 
    Let $\nrepeat_{y_\phi}$ be the number of times that agent plays $y_\phi$ during the $\nrepeat$ repeats, then we have $\nrepeat_{y_\phi}\ge \nrepeat/\numA$ because $y_\phi$ is the most frequently played action. 
    
    Recall that in Algorithm~\ref{algo:conservative-approx-MEM}, {we have chosen the number of samples in each epoch $\nrepeat$ to be such that $r_\delta(\nrepeat)+\eps=\frac{\eps_2}{\numA\sqrt{\numP}}$.}
    Then, the calibration error bound in \Cref{def:calibration-adaptive} guarantees that
    \begin{align}
        &\frac{\nrepeat_{y_{\phi}}}{\nrepeat}\left\|\Bar{\bp}(y_{\phi})-\h_{\phi}\right\|_\infty
        =\calerr_{y_{\phi}}(\h_{\phi,1:\nrepeat},\vp_{\phi,1:\nrepeat})
        \le r_\delta(\nrepeat)+\eps
        =\frac{\eps_2}{\numA\sqrt{\numP}},
        \nonumber\\
        \Rightarrow\; &\left\|\Bar{\bp}(y_{\phi})-\h_{\phi}\right\|_2\le\sqrt{\numP}\left\|\Bar{\bp}(y_{\phi})-\h_{\phi}\right\|_{\infty}
        \le \sqrt{\numP}\numA \cdot \frac{\eps_2}{\numA\sqrt{\numP}}=\eps_2.
        \label{eq:hphi-ball}
    \end{align}
where the first inequality in equation~\eqref{eq:hphi-ball} is because of the norm property $\|x\|_2 \leq \sqrt{d} \|x \|_\infty$ for a vector $x \in \bbR^d$. Since $\Bar{\bp}(y_{\phi})\in\polytope_{y_\phi}$ because the agent always best responds to forecasts, we obtain $\h_\phi\in B_2(\polytope_{y_\phi},\eps_2).$
    
    We then prove the following two claims:
    \squishlist
    \item[\Cone] If $\h\in B_2(\polytope_y^{-\eps_2},-\eps_1)$, then $\approxmem$ returns \textsc{True} with probability $1$.
    \item[\Ctwo] If $\h\notin B_2(\polytope_y^{-\eps_2},-\frac{\eps_1}{\sqrt{\numP}})$, then $\approxmem$ returns \textsc{False} with probability $\ge1-\eps_3$.
    \squishend
    Indeed, if the following two claims hold, then we have established that $\approxmem$ asserts one of two cases correctly with probability $\ge1-\eps_3$ conditioned on $\cE$.

    \textbf{Proof of \Cone.} Suppose $\h\in B_2(\polytope_y^{-\eps_2},-\eps_1)$, which equivalently implies that $\h\in B_2(\polytope_y,-(\eps_1+\eps_2))$. Therefore, the distance between $\h$ and any other polytope $\polytope_{y'}\ (y'\neq y)$ must be larger than $\eps_1+\eps_2$. By triangle inequality, for any other strategy $\h'$ in a different polytope $\polytope_{y'}$ (i.e., $\h'\in\polytope_{y'}$), we have
    \[
        \forall\phi\in[\Phi],\quad\|\h_\phi-\h'\|_{2}\ge\|\h-\h'\|_2-\|\h-\h_{\phi}\|_2
        >\eps_1+\eps_2-R=\eps_2.
    \]

    Since this holds for all $\h'\in\polytope_{y'}$, it implies $\h_{\phi}\not\in B_2(P_{y'},\eps_2)$ whenever $y'\neq y$.
    Together with the fact that $\h_\phi\in B_2(\polytope_{y_\phi},\eps_2),$ we must have $y_\phi=y$ for all epochs $\phi\in[\Phi]$. Therefore, $\approxmem$ always returns $\textsc{True}$.

    \textbf{Proof of \Ctwo.} Suppose $\h\notin B_2(\polytope_y^{-\eps_2},{-\frac{\eps_1}{\sqrt{\numP}}})$.
    We first analyze the probability of returning \textsc{False} for a fixed epoch $\phi\in[\Phi]$ by showing that $\h_\phi\not\in B_2(\polytope_y,\eps_2)$ with high probability.

    If $\h\in B_2(\polytope_y,\eps_2)$, then by triangle inequality, the distance between $\h$ and the boundary of $B_2(\polytope_y,\eps_2)$ must be no more than $2\eps_2+\frac{\eps_1}{\sqrt{\numP}}\le\frac{3\eps_1}{\sqrt{\numP}}$.
    Since $\h_\phi$ is uniformly sampled from the sphere of radius $R=\eps_1$ around $\h$,
    by convexity of $B_2(\polytope_y,\eps_2)$ and the rotation invariance property of a unit sphere,
    we have
    \[
        \Pr[\h_\phi\notin B_2(\polytope_y,\eps_2)]
        \ge \Pr\left[\langle R\vcs_\phi, \mathbf{v}\rangle\ge \frac{3\eps_1}{\sqrt{\numP}}\right]
        =\Pr\left[\langle \vcs_\phi, \mathbf{e}_1\rangle\ge\frac{3}{\sqrt{\numP}}\right],
    \]
    where $\mathbf{v}$ is a unit vector pointing in the direction of (one of) the projection from $\h$ to the boundary of $B_2(\polytope_y,\eps_2)$, and $\mathbf{e}_1=(1,0,\cdots,0)\in\br^{\numP}$.
    According to \citep[Lemma 9]{feige2002optimality}, we can further lower bound the probability by
    \[
    \Pr[\h_\phi\notin B_2(\polytope_y,\eps_2)]\ge 
    \frac{1}{2\sqrt{\numP}}\left(1-\left(\frac{3}{\sqrt{\numP}}\right)^2\right)^\frac{\numP-1}{2}\ge\Omega\left(\frac{1}{\sqrt{\numP}}\right).
    \]
    Finally, to make the probability that no epoch returns \textsc{False} (failure of $\approxmem$) at most $\eps_3$, we need
    \[
    (1-\Pr[\h_\phi\notin B_2(\polytope_y,\eps_2)])^{\Phi}\le\eps_3
    \quad\iff\quad
    \Phi\ge\frac{\log(\frac{1}{\eps_3})}{\Pr[\h_\phi\notin B_2(\polytope_y,\eps_2)]},
    \]
    which is satisfied by our choice of parameter $\Phi=\Theta\left(\sqrt{\numP}\log(\frac{1}{\eps_3})\right)$.
    Therefore, whenever $\h\notin B_2(\polytope_y^{-\eps_2},-\frac{\eps_1}{\sqrt{\numP}})$, we have proved that $\approxmem$ returns \textsc{False} with probability $\ge1-\eps_3$.
\Halmos
\endproof

\subsection{Proof of the Guarantee of Algorithm~\ref{algo:principal-main}}
\label{app:proof-principal-main-complexity}

\lemmaPrincipalMainComplexity*

This section is devoted to the proof of \Cref{lemma:principal-main-complexity}.
We prove the three main components of Algorithm~\ref{algo:principal-main} separately: 
\begin{itemize}
    \item In \Cref{lemma:initialization-set}, we analyze the \textbf{Initialization Phase} of Algorithm~\ref{algo:principal-main}, showing that the constructed initialization set $\calI$ contains a pair $(\h_0,y_0)$ such that $y_0=\yst$  is the agent's strategy in the Stackelberg equilibrium, and $\h_0$ is well-centered in the polytope $\polytope_{\yst}$. This strategy serves as a good initial point for the optimization phase and will subsequently guarantee the existence of a near-optimal solution in the optimization phase.
    \item In \Cref{lemma:optimization-phase}, we analyze the \textbf{Optimization Phase} of Algorithm~\ref{algo:principal-main} by invoking a lemma from \citet{haghtalab2022learning} for optimizing affine functions using membership oracles, showing that when the initialization point is well-centered in the polytope for $\yst$, the optimization phase will find a strategy that is both near-optimal and robustly within the target polytope.
    \item In \Cref{lemma:robustness-check}, we analyze the \textbf{Robustness Check Step} of Algorithm~\ref{algo:principal-main}. The goal of this step is to verify that the strategy found by the optimization phase is robustly within the corresponding best-response polytope. This step is necessary since the robustness guarantee of the optimization phase is conditioned on the initialization point being $\frac{\eta}{2}$-robust in target polytope, however, since the membership oracle $\approxmem$ is only approximate, the initialization set may also contain points that are $\frac{\eta}{2\sqrt{\numP}}$-robust but not $\frac{\eta}{2}$-robust. This discrepancy can cause the optimization phase to produce a strategy that lacks sufficient robustness. Therefore, the robustness check step is designed to filter out these insufficiently robust points.
\end{itemize}

\begin{lemma}[Initialization Phase of Algorithm~\ref{algo:principal-main}]
    \label[lemma]{lemma:initialization-set}
    Under the regularity assumption in Assumption~\ref{assumption:regularity},
    with probability at least $1-\frac{1}{5T}$, the initialization set $\calI$ contains $(\h_0,y_0)$ where $y_0=\yst$ is the optimal target, and $\h_0$ is $\frac{\eta}{2}$-centered in $\polytope_{\yst}$, i.e., $\h_0\in B_2(\polytope_{\yst},-\frac{\eta}{2})$.

    The total number of samples required for the initialization phase is $O\left(
        \frac{\sqrt{\numP}}{\vol(\eta/2)}
        \cdot r_\delta^{-1}(\frac{\eta}{4\numA{\numP}})\log^2 T
    \right)$.

\end{lemma}
\proof{Proof of \Cref{lemma:initialization-set}}.
    By regularity assumption (Assumption~\ref{assumption:regularity}), there exists a well-centered strategy $\dot{\h}\in\polytope_{\yst}$, s.t. $B_2(\dot{\h},\eta)\in\polytope_{\yst}$. Therefore, for all strategies $\h'\in B_2(\dot{\h},\frac{\eta}{2})$, we have $\h'\in B_2(\polytope_{\yst},-\frac{\eta}{2})$, i.e., $\h'$ lies robustly inside $\polytope_{\yst}$. Since the set of all such $\h'$ takes up nontrivial volume $\vol(\eta/2)$ in $\calH_P$, we know that $O(V^{-1}\log T)$ uniform samples are guaranteed to hit one with probability $1-\frac{1}{2T}$. In the rest of the proof, we will show that with probability $1-\frac{1}{10T}$, once such an $\h'\in B_2(\polytope_{\yst},-\frac{\eta}{2})$ is found, $\approxmem$ will return the correct membership $\yst$ and add $(\h',\yst)$ to the initialization set $\calI$. Together, they imply that $\calI$ must contain one such pair with probability at least $1-\frac{1}{5T}$.
    
    On the success event of $\approxmem(\eps_1=\frac{\eta}{4},\eps_2=\frac{\eta}{4\sqrt{\numP}},\eps_3=\frac{1}{10T^2})$ --- which holds with probability $1-\frac{1}{10T}$ after a union bound over no more than $T$ queries --- \Cref{lem:apx-memb-oracle} guarantees that $\approxmem(\h_i)$ must fall into one of the two cases for each queried strategy $\h_i$:
    \begin{enumerate}
        \item It returns $\perp$, which asserts that $\h_i\not\in B_2(\polytope_y, -\frac{\eta}{4}(1+\frac{1}{\sqrt{\numP}}))$ for any $y\in\BR(\h_i)$;
        \item It returns $y_i\neq\perp$, which asserts that $\h_i\in B_2(\polytope_{y_i},-\frac{\eta}{2\sqrt{\numP}}))$.
    \end{enumerate}

    We will now show that if a queried strategy $\h_i$ is in $B_2(\polytope_{\yst},-\frac{\eta}{2})$, then $\approxmem$ must falls into the second case, and successfully return membership $\yst$. Indeed, for the sake of contradiction, if $\approxmem$ falls into the first case, it must be that $\h_i\not\in B_2(\polytope_{\yst},-\frac{\eta}{4}(1+\frac{1}{\sqrt{\numP}}))$, which implies that $\h_i\not\in B_2(\polytope_{\yst},-\frac{\eta}{2})$, a contradiction. Therefore, $\approxmem$ must return membership $\yst$, because $\yst$ is the only action that satisfies $\h_i\in B_2(\polytope_{\yst},-\frac{\eta}{2\sqrt{\numP}})$.

    We have thus proved that with probability $1-\frac{1}{5T}$, the initialization set $\calI$ contains $(\h_0,y_0)$ where $y_0=\yst$ is the optimal target, and $\h_0\in B_2(\polytope_{\yst},-\frac{\eta}{2})$. It remains to bound the sample complexity of the initialization phase.
According to \Cref{lem:apx-memb-oracle}, each call to $\approxmem$ takes $O\left(\sqrt{\numP} r_\delta^{-1}(\frac{\eta}{4\numA{\numP}})\log T\right)$ samples, and the initialization phase calls $\approxmem$ for $\Ninit=O(\frac{\log T}{\vol(\eta/2)})$ times, the total sample complexity is $O\left(\frac{\sqrt{\numP}}{\vol(\eta/2)} r_\delta^{-1}(\frac{\eta}{4\numA{\numP}})\log^2 T\right)$, as desired.
\Halmos
\endproof

\begin{lemma}[Lemma F.4 in \citep{haghtalab2022learning} Repurposed for our setting]
    \label{lemma:optimization-phase}
    Fix an action $y\in\actionA$ and a well-centered initial point $\h_0\in B_2(\polytope_y,-\frac{\eta}{2})$. Suppose the oracle $\approxmem(\eps_1,\eps_2,\eps_3)$ satisfies $\epsrobust\sqrt{\numP}\le \eps_2\le\frac{\eps_1}{\sqrt{\numP}}\le C\cdot\left(\frac{\eta\cdot\epsopt}{m}\right)^{13} $, where $C$ is a universal constant.
    Then the \textsc{MembershipOpt} algorithm (Algorithms 12 and 13 from \citep{haghtalab2022learning} with subroutine \textsc{ConservativeBestResponse} replaced by our $\approxmem$) terminates in $O\left(\numP\log\left(\frac{\numP}{\eta\cdot\epsopt}\right)\right)$ oracle calls to $\approxmem$, and returns a strategy $\hat{\h}$ that satisfies
    \begin{enumerate}
        \item \textbf{Optimality:} $U_P(\hat{\h},y)\ge\max_{\hst\in \polytope_y}U_P(\hst,y)-\epsopt$;
        \item \textbf{Robustness:} $\hat{\h}\in B_2(\polytope_y,-\epsrobust\sqrt{\numP})$.
    \end{enumerate}

\end{lemma}

\begin{lemma}[Robustness check]
    \label[lemma]{lemma:robustness-check}
    With probability at least $1-\frac{1}{10T}$, the following holds simultaneously for all pairs $(\tilde{\h}_i,y_i)$ returned by $\textsc{MembershipOpt}$: if $\tilde{\h}_i\in B_2(\polytope_{y_i},-\epsrobust\sqrt{\numP})$, then the robustness check step must return \textsc{True}. On the other hand, if the robustness check step returns \textsc{True}, then $\tilde{\h}_i$ satisfies $\tilde{\h}_i\in B_2(\polytope_{y_i},-\epsrobust)$.
\end{lemma}
\proof{Proof of \Cref{lemma:robustness-check}.}
By \Cref{lem:apx-memb-oracle}, $\approxmem$ either:
\begin{enumerate}
    \item Returns \textsc{True}, certifying that $\tilde{\h}_i\in B_2(\polytope_{y_i},-\epsrobust)$ with probability $1-\frac{1}{10T^2}$, or
    \item Returns \textsc{False}, certifying that $\tilde{\h}_i\notin B_2(\polytope_{y_i},-\epsrobust\frac{\sqrt{\numP}+1}{2})$.
\end{enumerate}

For the first claim: if $\tilde{\h}_i\in B_2(\polytope_{y_i},-\epsrobust\sqrt{\numP})$, then case 2 cannot occur since $B_2(\polytope_{y_i},-\epsrobust\sqrt{\numP})\subseteq B_2(\polytope_{y_i},-\epsrobust\frac{\sqrt{\numP}+1}{2})$, so $\approxmem$ must return \textsc{True}. For the second claim: if $\approxmem$ returns \textsc{True}, then case 1 applies and we have $\tilde{\h}_i\in B_2(\polytope_{y_i},-\epsrobust)$ with probability at least $1-\frac{1}{10T^2}$.  Taking a union bound over no more than $T$ pairs, we have the desired claim.
\Halmos
\endproof

Finally, we put the above lemmas together to prove \Cref{lemma:principal-main-complexity}.

\proof{Proof of \Cref{lemma:principal-main-complexity}.}
We begin by establishing the two guarantees for the output pair $(\hat{\h},\hat{y})$. The robustness guarantee $\Ptwo$ is a direct consequence of \Cref{lemma:robustness-check}. For the optimality guarantee $\Pone$, \Cref{lemma:initialization-set} guarantees that with probability at least $1-\frac{1}{5T}$, the initialization set $\calI$ includes a strategy $\h_0$ that lies within $B_2(\polytope_{\yst},-\frac{\eta}{2})$. By \Cref{lemma:optimization-phase}, when this strategy is used as the initial point for the optimization phase, it ensures that the solution set $\calS$ contains a strategy $\hat{\h}_0$ within $B_2(\polytope_{\yst},-\epsrobust\sqrt{\numP})$. This strategy achieves near-optimal utility $U_P(\hat{\h}_0,\yst)\ge\max_{\hst\in \polytope_{\yst}}U_P(\hst,\yst)-\epsopt=\Vst-\epsopt$ and passes the robustness check as per \Cref{lemma:robustness-check}. Consequently, since $(\hat{\h},\hat{y})$ is selected as the pair with the highest utility in $\calS$, it follows that:
\[
U_P(\hat{\h},\hat{y})\ge U_P(\hat{\h}_0,\yst)\ge\Vst-\epsopt.
\]
The total failure probability of all the membership calls across the optimization phase and the robustness check step is at most $T\cdot \frac{1}{10T^2}=\frac{1}{10T}$. Together with the failure probability of the initialization phase, Algorithm~\ref{algo:principal-main} succeeds with probability at least $1-T^{-1}$.

Now we calculate the sample complexity. For each pair $(\h_i,y_i)$, where $i\in[\Ninit]$, the number of rounds needed is:
\begin{itemize}
    \item The initialization phase makes one oracle call to $\approxmem(\eps_1=\frac{\eta}{4},\eps_2=\frac{\eta}{4\sqrt{\numP}},\eps_3=\frac{1}{10T^2})$, which takes $O\left(\sqrt{\numP} \cdot r_\delta^{-1}\left(\frac{\eta}{4\numA{\numP}}\right)\log T\right)$ rounds;
    \item By \Cref{lemma:optimization-phase}, the optimization phase makes $O\left(\numP\log\left(\frac{\numP}{\eta\cdot\epsopt}\right)\right)$ oracle calls to $\approxmem(\eps_1=C\sqrt{\numP}\left(\frac{\eta\cdot\epsopt}{m}\right)^{13},\eps_2=\epsrobust\sqrt{\numP},\eps_3=\frac{1}{10T^2})$, where each oracle call can be implemented in $O\left(\sqrt{\numP} \cdot r_\delta^{-1}\left(\frac{\epsrobust}{\numA}\right)\log T\right)$ rounds;
    \item The robustness check step makes one oracle call to $\approxmem(\eps_1=\frac{\epsrobust\sqrt{\numP}}{2},\eps_2=\frac{\epsrobust}{2},\eps_3=\frac{1}{10T^2})$, which takes $O\left(\sqrt{\numP} \cdot r_\delta^{-1}\left(\frac{\epsrobust}{2\numA\sqrt{\numP}}\right)\log T\right)$ rounds.
\end{itemize}
 
 Therefore, the total sample complexity is:
 \begin{align*}
  & \textstyle O\left(\Ninit\cdot\sqrt{\numP}\log(T) \cdot \left( r_\delta^{-1}\left(\frac{\eta}{4\numA{\numP}}\right)+ r_\delta^{-1}\left(\frac{\epsrobust}{2\numA\sqrt{\numP}}\right)\right)\right)
    \\
    =& O\left(\frac{\sqrt{\numP}}{\vol(\eta/2)} r_\delta^{-1}\left(\min\{\frac{\eta}{4\numA{\numP}},\frac{\epsrobust}{2\numA\sqrt{\numP}}\}\right)\log^2(T)\right).
 \end{align*}
 The proof is complete.
\Halmos
\endproof

\subsection{Proof of \Cref{thm:ETC-regr}}
\label{app:proof-etcreg}

In this section, we prove the main theorem (\Cref{thm:ETC-regr}) in \Cref{sec:learning-algos}.

\thmETCregr*

\begin{remark}[Adaptive regret versus calibration]
    Our primary focus lies on calibration due to its characterization of agents' beliefs and the fact that it provides both upper and lower bounds to the principal's utility. This is particularly useful for the learning direction, as denoted by the lower bounds in \Cref{thm:ETC-regr}. However, a different form of adaptive guarantee would suffice here: one concerning (external) regret. Nevertheless, we do not focus on regret as a characterization as it doesn't offer the same upper bound guarantees --- in fact, the principal could potentially extract more utility than $\Vst$. Additionally, regret-based assumptions tend to overly emphasize the agent's optimization techniques rather than maintaining a consistent belief about the action being executed.
\end{remark}

\proof{Proof of \Cref{thm:ETC-regr}.}
According to the round complexity of the exploration phase from \Cref{lemma:principal-main-complexity}, with prob.\ $\ge 1-T^{-1}$, the principal's cumulative regret can be bounded as
\begin{align*}
  &\Vst T-\Ex\Big[\sum_{t=1}^T U_P(h_t,y_t)\Big]\\
\;\lesssim\;&
{\frac{\sqrt{\numP}}{\vol(\eta/2)}\,r_\delta^{-1}\!\big(\tfrac{\epsrobust}{2\numA\sqrt{\numP}}\big)\log T}+\epsopt T
+\frac{\numA\sqrt{\numP}\,T\,r_\delta(T)}{\epsrobust};\\
\intertext{
With $r_\delta(t)=\Theta(t^{-1/\beta})$ we have $r_\delta(T)=\Theta(T^{-1/\beta})$ and 
$r_\delta^{-1}(x)=\Theta(x^{-\beta})$. The regret bound can be simplified as}
\lesssim\;&\frac{\numA^{\beta}\,\numP^{(\beta+1)/2}}{\vol(\eta/2)}\,\epsrobust^{-\beta} \log T 
+\epsopt T + \frac{k m^{1/2} T^{1-{1}/{\beta}}}{\epsrobust};\\
\intertext{To satisfy the constraint from \Cref{lemma:principal-main-complexity}, we set
$\epsrobust = \frac{C}{\sqrt{\numP}}\!\left(\frac{\eta\,\epsopt}{\numP}\right)^{13}$ and get}
\lesssim\;&
\frac{k^{\beta}\,m^{(\beta+1)/2}}{\vol(\eta/2)}\,
\Big(\frac{\sqrt{m}}{C}\Big)^{\beta}
\Big(\frac{m}{\eta\,\epsopt}\Big)^{13\beta}\!\log T
\;+\;\epsopt T
\;+\;\frac{k\,\sqrt{m}\,T^{1-1/\beta}}{\frac{C}{\sqrt{m}}\left(\frac{\eta\,\epsopt}{m}\right)^{13}}\\[2pt]
\lesssim\;&
\frac{k^{\beta}m^{14\beta+\frac12}}{\vol(\eta/2) \eta^{13\beta}}\;\epsopt^{-13\beta}\,\log T
\;+\;\epsopt T
\;+\;\numA\,m^{14}\,\eta^{-13}\,T^{1-1/\beta}\;\epsopt^{-13}.
\end{align*}

Balancing the last two terms yields 
\[
\epsopt^\star \;=\; \big(\numA\,m^{14}\,\eta^{-13}\,T^{-1/\beta}\big)^{1/14}
=\numA^{\frac{1}{14}}\,m\,\eta^{-\frac{13}{14}}\,T^{-\frac{1}{14\beta}}.
\]
At this choice, the regret bound becomes
\begin{align*}
    \lesssim\;& \numA^{\frac{1}{14}}\,m\,\eta^{-\frac{13}{14}}\,T^{\,1-\tfrac{1}{14\beta}}
\;+\;\frac{k^{\frac{\beta}{14}}\,m^{\beta+\frac{1}{2}} }{\vol(\eta/2)}\eta^{-\frac{13\beta}{14}}\,T^{\frac{13}{14}}\log T.
\end{align*}
The proof completes by taking the average regret.
\Halmos
\endproof

\section{Supplementary Material for \Cref{sec:algos-adaptive-calibration}}\label[section]{app:adaptive-calibration}

\subsection{Background on Sleeping Experts and \textsc{AdaNormalHedge}}

We start the exposition of this part by introducing the sleeping experts problem \citep{blum2007external,freund1997using}. For each expert $i\in[N]$ and round $t\in[T]$, let $\ell_{t,i}\in[0,1]$ be the loss of expert $i$, and let $I_{t,i}$ be an indicator that takes value $I_{t,i}=1$ if expert $i$ is active at round $t$ and $I_{t,i}=0$ if asleep. The interaction protocol at each round $t$ goes as follows: The indicators $\left(I_{t,i}\right)_{i\in[N]}$ are revealed to the learner. The learner selects a probability distribution $\pi_t\in\Delta([N])$ that is supported only on the set of active experts $A_t\triangleq\{i:I_{t,i}=1\}$. The adversary selects a loss vector $\left(\ell_{t,i}\right)_{i\in[N]}$. The learner then suffers expected loss $\hat{\ell}_t=\Ex_{i\sim \pi_t}\left[\ell_{t,i}\right]$. 
The regret with respect to each expert $i$ only accounts for the rounds when $i$ is awake, which, together with the fact that $\pi_t$ is only supported on active experts, implies that
\begin{equation}
    \reg_T(i) = \sum_{t \in [T]} I_{t,i} \left(\hat{\ell}_t-\ell_{t,i}\right)\quad\Rightarrow\quad \reg_T=\max_{i}\reg_T(i)\label{eq:sleeping-regr}
\end{equation}
One of the algorithms that can be used to provide sublinear regret for the sleeping experts problem is $\textsc{AdaNormalHedge}$~\citep{luo2015achieving}. $\textsc{AdaNormalHedge}$ is a powerful, parameter-free algorithm which provides regret bounds in terms of the cumulative magnitude of the \emph{instantaneous} regrets, defined as: $r_{t,i} = \hat{\ell}_{t} - \ell_{t,i}$ for all experts $i \in [N]$. {As its name suggests, $\textsc{AdaNormalHedge}$ uses the well-known algorithm $\textsc{Hedge}$ as a backbone; $\textsc{Hedge}$ maintains a probability distribution over experts at each round $t$ and draws an expert from said distribution. After the expert's loss is revealed, the probability distribution for the next round $t+1$ is updated using a multiplicative weights argument. For bandit feedback (i.e., when only the chosen expert's loss is revealed to the learner), the multiplicative weights update rule uses an inverse propensity scoring estimator for each expert's loss in place of their real loss. The new element that \textsc{AdaNormalHedge} brings to the table is a way of defining the weights at each round $t$; specifically, the weights are updated proportionally to the sum of instantaneous regret for each expert until round $t$. This allows the learner to obtain finer control over the total regret without needing extra parameters to tune the algorithm at each round.} The exact regret guarantee that \textsc{AdaNormalHedge} obtains is stated formally below.

\begin{lemma}[\textsc{AdaNormalHedge}~\citep{luo2015achieving}]
    Let $r_{t,i}=I_{t,i}\left(\hat{\ell}_t-\ell_{t,i}\right)$ be the instantaneous regret of any active expert $i\in A_t$ at round $t$, and $c_{t,i}=|r_{t,i}|$. Then,  \textsc{AdaNormalHedge} with prior $q\in\Delta([N])$ selects experts according to the following distribution
    \begin{align*}
        \pi_{t,i} &\propto q_i I_{t,i} w(R_{t-1,i},C_{t-1,i}), \text{ where }\\
        R_{t-1,i} &= \sum_{\tau \in [t-1]} r_{\tau,i},\ C_{t-1,i}=\sum_{\tau \in [t-1]} c_{\tau,i}, \\
        w(R,C) &= \frac{1}{2}\left(\Phi(R+1,C+1)-\Phi(R-1,C+1)\right), \\
        \Phi(R,C) &= \exp\left(\frac{\max\{0,R\}^2}{3C}\right)
    \end{align*}
    The regret of $\textsc{AdaNormalHedge}$ against any distribution over experts $u\in\Delta([N])$ is bounded by
    \begin{align*}
        \reg_T(u)\le O\left(\sqrt{\langle u, C_{T}\rangle \cdot\left(D_{\text{KL}}(u\|q)+\log\log T+\log\log N)\right)}\right).
    \end{align*}
    where by $D_{KL}(u \| q)$ we denote the KL-divergence between distributions $u$ and $q$.
\end{lemma}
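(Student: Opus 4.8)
This is the regret guarantee for \textsc{AdaNormalHedge} of \citet{luo2015achieving}, so the plan is to reconstruct its potential-function proof, specialized to the sleeping-experts protocol. Write $[R]_+=\max\{0,R\}$, let $r_{t,i}=I_{t,i}(\hat\ell_t-\ell_{t,i})\in[-1,1]$ and $c_{t,i}=|r_{t,i}|$, with running sums $R_{t,i}=\sum_{\tau\le t}r_{\tau,i}$ and $C_{t,i}=\sum_{\tau\le t}c_{\tau,i}$, and note $\reg_T(u)=\langle u,R_T\rangle$ and $\langle u,C_T\rangle\le\sum_i u_i\sum_\tau I_{\tau,i}$. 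Introduce the scalar potential
\[
\Phi_t\;\triangleq\;\sum_{i\in[N]}q_i\,\Phi\!\left(R_{t,i},C_{t,i}\right),\qquad \Phi(R,C)=\exp\!\Big(\tfrac{[R]_+^2}{3C}\Big),
\]
with the conventions $\Phi(0,0)=1$ (so $\Phi_0=\sum_i q_i=1$). The proof has three movements: (i) show $\Phi_t\le\Phi_{t-1}$, hence $\Phi_T\le1$; (ii) convert $\sum_i q_i\Phi(R_{T,i},C_{T,i})\le1$ into the stated bound; (iii) the bookkeeping that makes the whole thing parameter-free.

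For (i), the crucial elementary fact — verified by a direct calculus argument on $\Phi$ — is that $w(R,C)=\tfrac12\big(\Phi(R+1,C+1)-\Phi(R-1,C+1)\big)$ behaves as a ``discrete subgradient'' of $\Phi$ in the $R$-direction: for every $r\in[-1,1]$ with $c=|r|$,
\[
\Phi(R+r,\,C+c)\;\le\;\Phi(R,C)+r\cdot w(R,C).
\]
Granting this, summing over experts against the prior $q$ and telescoping the per-round update gives
\[
\Phi_t-\Phi_{t-1}\;\le\;\sum_{i\in[N]}q_i\,r_{t,i}\,w\!\left(R_{t-1,i},C_{t-1,i}\right)\;=\;\sum_{i\in A_t}q_i I_{t,i}w_{t,i}\,(\hat\ell_t-\ell_{t,i}),
\]
where $w_{t,i}=w(R_{t-1,i},C_{t-1,i})$. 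Since $\pi_{t,i}=q_iI_{t,i}w_{t,i}/Z_t$ with $Z_t=\sum_j q_jI_{t,j}w_{t,j}$, the right-hand side equals $Z_t\sum_i\pi_{t,i}(\hat\ell_t-\ell_{t,i})=Z_t(\hat\ell_t-\hat\ell_t)=0$, using exactly that $\hat\ell_t=\E_{i\sim\pi_t}[\ell_{t,i}]$ and that $\pi_t$ is supported on $A_t$. Telescoping from $\Phi_0=1$ yields $\sum_i q_i\exp\big([R_{T,i}]_+^2/(3C_{T,i})\big)\le1$.

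For (ii), I would extract the comparator bound by convex duality. Using $\tfrac{a^2}{b}\ge 2\lambda a-\lambda^2 b$ with $a=[R_{T,i}]_+\ge R_{T,i}$ and $b=3C_{T,i}$ gives $\Phi(R_{T,i},C_{T,i})\ge\exp\!\big(2\lambda R_{T,i}-3\lambda^2C_{T,i}\big)$ for any $\lambda\ge0$; combining with the Donsker--Varadhan inequality $\log\sum_i q_ie^{x_i}\ge\langle u,x\rangle-D_{\mathrm{KL}}(u\|q)$ and $\sum_i q_i\Phi(\cdot)\le1$ yields $0\ge 2\lambda\langle u,R_T\rangle-3\lambda^2\langle u,C_T\rangle-D_{\mathrm{KL}}(u\|q)$, and optimizing $\lambda$ gives $\reg_T(u)=\langle u,R_T\rangle\le\sqrt{3\langle u,C_T\rangle\,D_{\mathrm{KL}}(u\|q)}$. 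The extra $\log\log T$ and $\log\log N$ terms are the price of (iii): the optimal $\lambda$ (equivalently $\langle u,C_T\rangle$) is unknown in advance, so one effectively aggregates the argument over a geometric grid of candidate scales — $O(\log T)$ relevant values, with a second layer absorbing the dependence on $N$ — paying a $\log\log$ overhead for the implicit union/aggregation; these slack terms are exactly what the constant $3$ and the $C+1$ shift in the definition of $\Phi$ and $w$ are tuned to accommodate.

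The main obstacle is step (i): establishing $\Phi(R+r,C+c)\le\Phi(R,C)+r\,w(R,C)$ for this specific $\Phi$ and $w$. This is not conceptually deep but is the technically delicate part — one checks it in the regimes $R\le-1$, $-1<R\le1$, and $R>1$, controls behaviour near $C\approx0$, and must cope with the fact that the potential is updated by $C+c$ while the weight uses $C+1$, a mismatch that runs the ``wrong way'' for a naive convexity argument and is overcome only by exploiting the precise $\exp\!\big([R]_+^2/(3C)\big)$ form. Once this one-step inequality and the parameter-free scale aggregation are in place, the remaining telescoping and duality steps sketched above are routine.
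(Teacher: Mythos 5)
The paper does not actually prove this lemma: it is imported verbatim from \citet{luo2015achieving}, so there is no internal proof to compare against, and your reconstruction must be measured against the external one. Your architecture is the right one --- a prior-weighted potential $\sum_i q_i\Phi(R_{t,i},C_{t,i})$, a discrete-derivative weight $w$ chosen so that the first-order term vanishes because $\pi_{t,i}\propto q_iI_{t,i}w_{t,i}$ and $\hat\ell_t=\E_{i\sim\pi_t}[\ell_{t,i}]$, then Donsker--Varadhan plus optimization over $\lambda$ to extract the comparator bound. Step (ii) is correct once the right bound on the final potential is plugged in.

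The genuine gap is in step (i), and it is not merely ``technically delicate'': the one-step inequality $\Phi(R+r,C+c)\le\Phi(R,C)+r\,w(R,C)$ is \emph{false} as stated. Take $R=0$ and $r=1$: the left side is $\exp\big(\tfrac{1}{3(C+1)}\big)$ while the right side is $1+\tfrac12\big(\exp\big(\tfrac{1}{3(C+1)}\big)-1\big)$, and the former strictly exceeds the latter for every finite $C$ (the same failure occurs for $r=-1$). The correct lemma in \citet{luo2015achieving} carries an additive error term of order $1/(C+1)$; consequently the potential is \emph{not} a supermartingale bounded by $1$ but grows to $B_T=O(\log T)$, and it is precisely $\ln B_T=O(\log\log T)$ that produces the $\log\log T$ in the regret bound. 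This also means your step (iii) misdiagnoses the source of the $\log\log$ terms: AdaNormalHedge involves no aggregation over a geometric grid of scales --- the optimization over $\lambda$ happens entirely inside the analysis and costs nothing --- and if your step (i) were true, the bound would be the strictly stronger $\sqrt{3\langle u,C_T\rangle D_{\mathrm{KL}}(u\|q)}$ with no $\log\log$ terms at all, which is not what Luo and Schapire prove. To repair the argument you must state the one-step inequality \emph{with} its error term and track the accumulated error through the telescoping.
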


\textsc{AdaNormalHedge} can be used to obtain adaptive regret bounds by creating a sleeping expert $(i,s)$ for each $i\in[N],s\in[T]$ that has the same loss as expert $i$ but is only awake after $s$.

\begin{corollary}
    Running \textsc{AdaNormalHedge} for the sleeping expert setting with prior {$q_{(i,s)}\propto\frac{1}{s^2}$} gives regret
    \begin{align*}
        \reg_t((i,s))\le O\left(\sqrt{(t-s)\left(\log (Ns)+\log\log T\right)}\right),
    \end{align*}
    where $T_i=\sum_{t=1}^T I_{t,i}$ is the total number of rounds in which $i$ is active.
\end{corollary}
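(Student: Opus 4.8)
The plan is to instantiate the \textsc{AdaNormalHedge} regret bound from the preceding lemma with the comparator $u$ taken to be the point mass $\delta_{(i,s)}$ on the sleeping expert $(i,s)$, and then to control the two problem-dependent quantities that appear in that bound: the cumulative second-moment term $\langle u, C_t\rangle$ and the divergence $D_{\mathrm{KL}}(u\|q)$.

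First I would unpack the definitions. For $u=\delta_{(i,s)}$ the left-hand side $\reg_t(u)$ is exactly $\reg_t((i,s))$ as in \eqref{eq:sleeping-regr}, and the inner product becomes $\langle u, C_t\rangle = C_{t,(i,s)} = \sum_{\tau\le t}|r_{\tau,(i,s)}|$. Since $\hat{\ell}_\tau,\ell_{\tau,(i,s)}\in[0,1]$ and $r_{\tau,(i,s)} = I_{\tau,(i,s)}(\hat{\ell}_\tau-\ell_{\tau,(i,s)})$, each summand obeys $|r_{\tau,(i,s)}|\le I_{\tau,(i,s)} = \indicator{\tau\ge s}$, so $C_{t,(i,s)}\le t-s+1 = O(t-s)$. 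Because \textsc{AdaNormalHedge} is anytime and parameter-free, its guarantee holds with $t$ in place of $T$, so this bound on $\langle u, C_t\rangle$ is what enters the regret estimate at round $t$.

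Next I would evaluate $D_{\mathrm{KL}}(\delta_{(i,s)}\|q) = \log(1/q_{(i,s)})$. With $q_{(i,s)}\propto 1/s^2$ over the expert set $\{(i',s'):i'\in[N],\,s'\in[T]\}$, the normalizer is $\sum_{i'\in[N]}\sum_{s'\ge 1} 1/s'^2 = N\pi^2/6 = O(N)$, hence $q_{(i,s)} = \Theta(1/(Ns^2))$ and $D_{\mathrm{KL}}(\delta_{(i,s)}\|q) = O(\log(Ns^2)) = O(\log(Ns))$; the finiteness of $\sum_{s'}1/s'^2$ is precisely why the $1/s^2$ weighting is chosen (a $1/s$ weighting would only give a $\log T$-scale normalizer). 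Substituting $\langle u, C_t\rangle = O(t-s)$ and $D_{\mathrm{KL}} = O(\log(Ns))$ into the lemma, and folding $\log\log N = O(\log N) = O(\log(Ns))$ (together with any gap between the $\log\log$ term in the lemma and the $\log\log T$ in the statement) into the divergence term, yields $\reg_t((i,s)) = O(\sqrt{(t-s)(\log(Ns)+\log\log T)})$, which is the claim.

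I do not anticipate a genuine obstacle here; the only points needing care are (i) invoking the anytime form of the \textsc{AdaNormalHedge} guarantee so that the bound is valid at every round $t\le T$ rather than only at $T$, and (ii) the routine bookkeeping that absorbs $\log\log N$ and the $NT$ total number of experts into the stated $O(\cdot)$.
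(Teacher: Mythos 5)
Your proposal is correct and is exactly the intended derivation: the paper states this corollary without a separate proof (citing Section 5.1 of Luo and Schapire), and the argument is precisely the instantiation you give — take $u=\delta_{(i,s)}$, bound $\langle u, C_t\rangle\le t-s+1$ via $|r_{\tau,(i,s)}|\le I_{\tau,(i,s)}$, and compute $D_{\mathrm{KL}}(\delta_{(i,s)}\|q)=O(\log(Ns))$ from the $1/s^2$ prior with its $O(N)$ normalizer. The only nits are cosmetic: $\langle u,C_t\rangle$ is a cumulative first moment of $|r_{\tau,(i,s)}|$ rather than a second moment, and $t-s+1=O(t-s)$ fails only in the degenerate case $t=s$, where the regret is trivially $O(1)$ anyway.
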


\subsection{Formula for Computing $Q_t$ when $m=2$}

To obtain the explicit formula for $Q_t$, we first discretize the space of forecasts $\calF_P = [0,1]$ (since we focus on the case where $\numP = 2$) to form set $\calF_P^{\eps} = \{0, \eps, 2 \eps, \dots, 1- \eps, 1\}$. Then, we have that for each $\hat{\vp} \in \calF_P^\eps$: 
\begin{align*}
    \E_{g \sim \pi_t} \left[ \ell_{t,g}\right] &= \E_{g \sim \pi_t} \left[ L_g \left( \h_t, \vp \right)\right] = \sum_{g \in A_t(\mathcal{G})} \pi_{t,g} w_i(\vp) \sigma \left( \h_t - \vp\right) \\ 
    &= (\h_t - \vp) \underbrace{\sum_{i \in \calA_A}  w_i(\vp) \sum_{s\le t} \left( \pi_{t, g(s, i, +1)} - \pi_{t, g(s, i, -1)}\right)}_{Z_{\vp}} \numberthis{\label{eq:def-Zp}}
\end{align*}
{where we have omitted index $j$ from the sleeping expert $g$ since because $m = 2$, we can focus on a single action $j$.} {We assume WLOG that no forecast $\bp\in\calF_P^\eps$ falls exactly on the boundary of best response polytopes, so there is no tie-breaking needed.} From Equation~\eqref{eq:def-Zp}, we have that: 
\begin{equation}\label{eq:maxx}
    \max_{\h_t \in \calH_P} \E_{g \sim \pi_t} \left[ \ell_{t,g} \right] =\max\{Z_\vp , 0 \} - \vp Z_\vp
\end{equation}
where the equation also uses the fact that for $m = 2$, $\max_{\h_t \in \calH_P} {\h_t} = 1$.

In the final step, we map $\vp$ to the discretized grid of $\calF_P^\eps$. Let $j\eps, (j+1) \eps$ be two adjacent discretized points and $q \in [0,1]$ such that: $q Z_{j \eps} + (1 - q) Z_{(j+1) \eps} 0$. Then, setting $q_{t,j\eps} = q$ and $q_{t,(j+1)\eps} = 1-q$ and using Equation~\eqref{eq:maxx} gives that $$\max_{\h_t \in \calH_P}\Ex_{g\sim \pi_t \atop\vp\sim Q_t}\left[\ell_{t,g}\right]\le \eps.$$

\section{Supplementary Material for \Cref{sec:continuous}}\label[section]{app:continuous}

\subsection{Proof of \Cref{thm:main-continuous}}
\label{app:continuous-main-proof}

\maincontinuous*
\subsubsection{Proof of Lower Bound}
\begin{proof}
Before delving into the proof of the lower bound, we first introduce some notations. Let $C(\h)\triangleq U_P(\h,\BR(\h))$.
Let $\Vst_\delta\triangleq\max_{\h\in B_2(\actionP,-\delta)}C(\h)$ be the optimal utility restricted in the smaller strategy set $B_2(\actionP,-\delta)$.
We use $\overline{y}_\phi\triangleq \frac{1}{M}\sum_{s\in[M]}y_{\phi,s}$ to denote the average feedback that $\lazygdwog$ uses to update the strategies.

We first consider any fixed $\eps>0$. Combining the guarantees of \Cref{lemma:feedback-close,lemma:utility-close}, we conclude that there exists a finite binning $\Pi_0$ and $M_{\eps}<\infty$, such that if the agent is $(0,\Pi_0)$-adaptively calibrated, then $\forall M\ge M_{\eps}$, the following two inequalities are satisfied at the same time:

    \begin{align}
        &\sup_{\phi\in[\Phi]}\left\|\overline{y}_{\phi}-\BR(\h_\phi)\right\|_2\le\eps&
        \text{(by \Cref{lemma:feedback-close})}\label{eq:guarantee-feedback}\\
        &\sup_{\phi\in\Phi}\frac{1}{M}\sum_{s\in[M]}U_P(\h_\phi,y_{\phi,s})\ge C(\h_\phi)-\eps;&\text{(by \Cref{lemma:utility-close})}\label{eq:guarantee-utility}
    \end{align}

    Set the parameters according to $\gamma_\phi=\gamma_0\numP^{-\frac{1}{2}}\phi^{-\frac{3}{4}}$ and $\delta_\phi\equiv\delta=\delta_0 \numP^{\frac{1}{2}}\Phi^{-\frac{1}{4}}$ in Algorithm~\ref{algo:gd}, then similar arguments to \citep[Theorem 3.1]{zrnic2021leads} guarantee that
    \begin{align*}
        \Vst_\delta-\frac{1}{\Phi}\sum_{\phi\in[\Phi]}\E[C(\h_\phi)]
        \le&
        \left(\frac{D_P^2}{2\gamma_0}+\frac{2W_P^2}{\delta_0^2}\right)\sqrt{\numP} \Phi^{-\frac{1}{4}}+ L_{\BR}D_P \frac{1}{\Phi}\sum_{\phi\in[\Phi]}\|\overline{y}_\phi-\BR(\h_\phi)\|_2\\
        \stepa{\le}& \left(\frac{D_P^2}{2\gamma_0}+\frac{2W_P^2}{\delta_0^2}\right)\sqrt{\numP} \Phi^{-\frac{1}{4}}+ L_{\BR}D_P\cdot\eps,
    \end{align*}
    where (a) is from \Cref{eq:guarantee-feedback}. 
    
    Now we upper bound the difference between $\Vst$ and $\Vst_\delta=\max_{\h\in B_2(\actionP,-\delta)}C(\h)$, then
    we have
    \[
        \Vst-\Vst_\delta\le
        \max_{\h^\star\in \actionP}\min_{\h'\in B_2(\actionP,-\delta)} C(\h^\star)-C(\h')
        \le L_U\max_{\h^\star\in \actionP}\min_{\h'\in B_2(\actionP,-\delta)} \|\h^\star-\h'\|_2 \le L_U\delta,
    \]
    where the second inequality follows from Assumption~\ref{assump:continuous} that $C(\h)$ is $L_U$-Lipschitz.

    The next step is to upper bound the difference between the actual average utility and $\frac{1}{\Phi}\sum_{\phi\in[\Phi]}\E[C(\h_\phi)]$. From \Cref{eq:guarantee-utility}, we have
    \[
        \frac{1}{\Phi}\sum_{\phi\in[\Phi]}\E[C(\h_\phi)]-\frac{1}{\Phi\epochlen}\sum_{\phi\in[\Phi]}\sum_{i\in[\epochlen]}U_P(\h_\phi,y_{\phi,i})\le\eps.
    \]
    Finally, putting the above inequalities together, we obtain
    \begin{align*}
        &\Vst-\frac{1}{\Phi\epochlen}\sum_{\phi\in[\Phi]}\sum_{i\in[\epochlen]}U_P(\h_\phi,y_{\phi,i})\\\le &
        \left(\Vst-\Vst_\delta\right)
        +\left(\Vst_\delta-\frac{1}{\Phi}\sum_{\phi\in[\Phi]}\E[C(\h_\phi)]\right)\\
        &\qquad\qquad+\left(\frac{1}{\Phi}\sum_{\phi\in[\Phi]}\E[C(\h_\phi)]-\frac{1}{\Phi\epochlen}\sum_{\phi\in[\Phi]}\sum_{i\in[\epochlen]}U_P(\h_\phi,y_{\phi,i})\right)\\
        \le& L_U \delta_0 \numP^{\frac{1}{2}}\Phi^{-\frac{1}{4}}+ \left(\frac{D_P^2}{2\gamma_0}+\frac{2W_P^2}{\delta_0^2}\right)\sqrt{\numP} \Phi^{-\frac{1}{4}}+ L_{\BR}D_P\cdot\eps +\eps.
    \end{align*}
    Taking the limit of $\Phi\to\infty$, the above inequalities imply 
    \[
        \lim_{\Phi\to\infty\atop \epochlen\to\infty}\frac{1}{\Phi\epochlen}\sum_{\phi\in[\Phi]}\sum_{i\in[\epochlen]}U_P(\h_\phi,y_{\phi,i})\ge V^\star-\eps\left(L_{\BR}D_P+1\right).
    \]
    Since the above arguments hold for all $\eps>0$, taking $\eps=\frac{\eps_0}{L_{\BR}D_P+1}$ proves the theorem.
\end{proof}

\subsubsection{Proof of Upper Bound}

\begin{proof}
For a fixed $\eps>0$, let $D_{\eps}=\{x_1,\cdots,x_I\}$ be an $\eps$-grid of $\calF_P$ under $\ell_2$ distance, and let $\Pi_0$ be the continuous binning specified by \Cref{eq:continuous-binning-construction}. We have:
\begin{align*}
    \sum_{t\in[T]} U_P(\h_t,y_t)=&
    \sum_{i\in[I]}\sum_{t\in[T]} w_i(\bp_t) U_P(\h_t,\BR(\bp_t))\\
    \stepa{\le}& \sum_{i\in[I]}\sum_{t\in[T]} w_i(\bp_t) \left(U_P(\h_t,\BR(x_i)) +L_2\cdot L_{\BR}\underbrace{\|\bp_t-x_i\|_2}_{\le2\eps}\right)\\
    \stepb{\le}& \sum_{i\in[I]}\left(\sum_{t\in[T]} w_i(\bp_t)\right) U_P\big(\frac{\sum_{t\in[T]} w_i(\bp_t) \h_t}{\sum_{t\in[T]} w_i(\bp_t)},\BR(x_i)\big)+2L_2 L_{\BR}\eps T\\
    \stepc{=}&\sum_{i\in[I]} n_T(i) U_P(\overline{\h}_T(i),\BR(x_i))+2L_2 L_{\BR}\eps T\\
    \stepd{\le}&\sum_{i\in[I]} n_T(i) \Big(U_P(\overline{\bp}_T(i),\BR(x_i))
    +L_1\left\|\overline{\bp}_T(i)-\overline{\h}_T(i)\right\|_2\Big)
    +2L_2 L_{\BR}\eps T\\
    =& \underbrace{\sum_{i\in[I]} n_T(i) U_P(\overline{\bp}_T(i),\BR(x_i))}_{(A)}
    +\underbrace{L_1\sum_{i\in[I]}n_T(i)\left\|\overline{\bp}_T(i)-\overline{\h}_T(i)\right\|_2}_{(B)}
    +2L_2 L_{\BR}\eps T\numberthis\label{eq:tmp-middle}
\end{align*}

In the above inequalities that lead to \eqref{eq:tmp-middle}, step (a) is because $U_P$ is $L_2$-Lipschitz in the second argument and $\BR(\cdot)$ is $L_{\BR}$-Lipschitz, and the fact that $w_i(\bp_t)>0$ only when $\|\bp_t-x_i\|_2<2\eps$. In step (b), we used Jensen's inequality because $U_P$ in concave in the first argument. Step (c) follows from the definition of $n_T(i)$ and $\overline{\h}_T(i)$ in \Cref{def:calibration-adaptive}. The last inequality (d) uses the fact that $U_P$ is $L_1$-Lipschitz in the first argument to decompose $U_P(\overline{\bp}_T(i),\BR(x_i))$ into calibration error (i.e., term (B)) and $U_P(\overline{\bp}_T(i),\BR(x_i))$ where the strategy that the agent best responds to is close to the principal's strategy (i.e., term (A)).

We can further bound $(A)$ and $(B)$ in Equation~\eqref{eq:tmp-middle} respectively as follows:
\[
(A)\le \sum_{i\in[I]} n_T(i) \left(U_P(x_i,\BR(x_i))+L_1\|x_i-\overline{\bp}_T(i)\|_2\right)\le V^\star T+ L_1 (2\eps) T,
\]
and
\[
(B)\le L_1 T\sum_{i\in[I]}\calerr_{i}(\h_{1:T},\bp_{1:T})
\le L_1 |D_\eps| r_\delta(T) T\quad \text{w.p. }\ge1-\delta. 
\]
Therefore, putting the above bounds together, we obtain that with probability $\ge1-\delta,$
\begin{align*}
    \frac{1}{T}\sum_{t\in[T]} U_P(\h_t,y_t) \le \Vst + (L_1 |D_\eps|)r_\delta(T)
    +2(L_1+L_2 L_{\BR})\eps.
\end{align*}
Since the above derivation holds for any $\eps>0$, it suffices to take $\eps$ such that $2(L_1+L_2 L_{\BR})\eps=\eps_0$. Finally, since $|D_\eps|<\infty$ and $r_\delta(T)=o(1)$, taking the limit of $T\to\infty$ proves the upper bound:
\[
    \lim_{T\to\infty}\frac{1}{T}\sum_{t\in[T]} U_P(\h_t,y_t) \le \Vst+\eps_0.
\]
\end{proof}

\subsection{Key lemma: asymptotically correct forecast}
\label{app:continuous-key-lemma}

In this section, we state and prove the key lemma for establishing \Cref{thm:main-continuous}. Intuitively, this lemma states that for any strategy $\h\in\actionP$, as long as the principal repeatedly plays $\h$ for enough rounds, the fraction of times where the agent's forecast is close to $\h$ will converge to 1.
\begin{lemma}
\label[lemma]{lemma:continuous-repeated}
    For any $\eps_0>0$, there exists a finite binning $\Pi_0$, such that if the principal repeatedly plays any $\h\in\actionP$ for $\epochlen$ rounds and the agent's forecasts $\bp_{1:\epochlen}$ are $(0, \Pi_0)$- adaptively calibrated, then:
    \begin{align}
\lim_{\epochlen\to\infty}\quad\frac{1}{\epochlen}\Big|\{s\in[\epochlen]:\|\bp_s-\h\|_2\ge\eps_0\}\Big|=0
        \label{eq:convergence-only}
    \end{align}
    In particular, if the calibration error (defined in \Cref{def:calibration-adaptive}) has rate $r(\cdot)\in o(1)$ with respect to $\Pi_0$, then 
    \begin{align}
        \frac{1}{\epochlen}\Big|\{s\in[\epochlen]:\|\bp_s-\h\|_2\ge\eps_0\}\Big|
        \le\frac{8\sqrt{\numP}|\Pi_0|^2}{\eps_0}r(\epochlen).
        \label{eq:rate}
    \end{align}
\end{lemma}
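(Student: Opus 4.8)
\textbf{Proof plan for \Cref{lemma:continuous-repeated}.}
The plan is to build $\Pi_0$ as a \emph{continuous partition of unity adapted to a fine grid}, and then exploit the fact that the principal repeats a single strategy $\h$: over a block in which the outcome is constantly $\h$, the calibration condition forces every bin's \emph{average} forecast to be essentially $\h$, and if the bins are small this can only happen when almost all forecasts are close to $\h$. Concretely, fix $\eps_0>0$, set the grid parameter $\nu \triangleq \eps_0/8$, let $D_\nu=\{x_1,\dots,x_I\}\subseteq\calF_P$ be a $\nu$-grid of the compact set $\calF_P$, and let $\Pi_0=\{w_i\}_{i\in[I]}$ be the continuous partition of unity subordinate to $\{B_2(x_i,2\nu)\}_{i}$ supplied by \Cref{eq:continuous-binning-construction}, so that $\sum_i w_i\equiv1$ and $\supp(w_i)\subseteq B_2(x_i,2\nu)$, with $|\Pi_0|=I<\infty$. (Continuity of $\Pi_0$ is what makes $(0,\Pi_0)$-adaptively calibrated forecasts available at all, via the continuous case of \Cref{thm:exist-adaptive-calibration}; the proof of the lemma itself uses only the partition-of-unity and bounded-support properties.)

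With this choice I would apply \Cref{def:calibration-adaptive} to the whole block $[1,\epochlen]$. Since $\h_\tau=\h$ for every $\tau$, one gets $\bar\h_{[1,\epochlen]}(i)=\h$ \emph{exactly} for each bin $i$, so the calibration inequality collapses to
\[
\frac{n_\epochlen(i)}{\epochlen}\,\bigl\|\bar\bp_\epochlen(i)-\h\bigr\|_\infty\ \le\ r(\epochlen)\qquad\text{for all }i\in[I].
\]
The second ingredient is the geometric remark that $\bar\bp_\epochlen(i)=\sum_{\tau}\tfrac{w_i(\bp_\tau)}{n_\epochlen(i)}\bp_\tau$ is a convex combination of forecasts lying in $\supp(w_i)\subseteq B_2(x_i,2\nu)$, hence $\bar\bp_\epochlen(i)\in B_2(x_i,2\nu)$. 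Now let $\mathcal{B}=\{s:\|\bp_s-\h\|_2\ge\eps_0\}$ and let $\mathcal{F}$ be the set of bins active at some round of $\mathcal{B}$. For $i\in\mathcal{F}$, witnessed by $s\in\mathcal{B}$ with $w_i(\bp_s)>0$, the triangle inequality gives
\[
\bigl\|\bar\bp_\epochlen(i)-\h\bigr\|_2\ \ge\ \|\bp_s-\h\|_2-\|\bp_s-x_i\|_2-\|x_i-\bar\bp_\epochlen(i)\|_2\ \ge\ \eps_0-4\nu\ =\ \tfrac{\eps_0}{2},
\]
so $\|\bar\bp_\epochlen(i)-\h\|_\infty\ge\eps_0/(2\sqrt{\numP})$, and feeding this back into the collapsed calibration bound yields $n_\epochlen(i)\le 2\sqrt{\numP}\,\epochlen\,r(\epochlen)/\eps_0$ for every $i\in\mathcal F$.

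The last step is aggregation: using $\sum_i w_i(\bp_s)=1$ and that a bad round activates only bins in $\mathcal F$,
\[
|\mathcal B|=\sum_{s\in\mathcal B}\sum_{i}w_i(\bp_s)=\sum_{s\in\mathcal B}\sum_{i\in\mathcal F}w_i(\bp_s)\le\sum_{i\in\mathcal F}n_\epochlen(i)\le |\Pi_0|\cdot\frac{2\sqrt{\numP}\,\epochlen\,r(\epochlen)}{\eps_0},
\]
and dividing by $\epochlen$ (and crudely using $|\Pi_0|\le|\Pi_0|^2$) gives \eqref{eq:rate}; since any $(0,\Pi_0)$-adaptively calibrated procedure has rate $r(\cdot)\in o(1)$ by definition, letting $\epochlen\to\infty$ gives \eqref{eq:convergence-only}. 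The only genuinely delicate point is the geometric bookkeeping of the second paragraph — choosing the support radius of the $w_i$ small enough relative to $\eps_0$ that the bin average of a bin touched by a far-away forecast is still bounded away from $\h$, together with the convexity fact $\bar\bp_\epochlen(i)\in\conv(\supp w_i)$; everything else (including the harmless $\epochlen$-versus-$(\epochlen-1)$ discrepancy in the denominator of $\calerr$, which is absorbed into constants) is routine.
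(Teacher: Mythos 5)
Your proposal is correct and follows the same strategy as the paper's proof: build $\Pi_0$ as a normalized-tent-function partition of unity on a fine grid of $\calF_P$, observe that when the outcome is constantly $\h$ the calibration condition forces $\tfrac{n_\epochlen(i)}{\epochlen}\|\bar\bp_\epochlen(i)-\h\|_\infty\le r(\epochlen)$ for every bin, and argue that any bin touched by a forecast far from $\h$ must therefore have small effective count. The one place where you genuinely diverge is the final aggregation. The paper classifies bins as ``far'' by the distance of their grid centers from $\h$, sums the calibration errors over those bins, and then converts a count of bad rounds into effective counts via the pointwise lower bound $w_i(\bp_s)\ge\tfrac{1}{2I}$ for the grid point nearest a bad forecast --- that lower bound is what introduces the second factor of $|\Pi_0|$ in \eqref{eq:rate}. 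You instead bound each activated bin's effective count individually (using the convexity fact $\bar\bp_\epochlen(i)\in B_2(x_i,2\nu)$, which the paper does not need in this form) and aggregate via $\sum_i w_i(\bp_s)=1$, which costs only one factor of $|\Pi_0|$ and yields $\tfrac{2\sqrt{\numP}|\Pi_0|}{\eps_0}r(\epochlen)$ --- strictly better than the stated bound, so it of course implies \eqref{eq:rate}. Both routes are valid; yours is slightly cleaner and sharper in the dependence on $|\Pi_0|$ (though, as \Cref{remark:l1-calibration} notes, $|\Pi_0|$ is exponential in $\numP$ either way, so the improvement is cosmetic for the downstream results).
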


\proof{Proof of \Cref{lemma:continuous-repeated}.}
    We first describe the construction of $\Pi_0$.
    For $\eps=\frac{1}{4}\eps_0$, let $D_{\eps}=\{x_1,\cdots,x_I\}$ be an $\eps$-grid of $\calF_P$ under $\ell_2$ distance, and $\Lambda(\bp;x,R)\triangleq\left(R-\|\bp-x\|_2\right)_+$ be the tent function with center $x$ and radius $R$. Consider the following binning 
    \begin{align}
        \Pi_0=\left\{w_i(\bp)\triangleq\frac{\Lambda(\bp;x_i,2\eps)}{\sum_{j\in[I]}\Lambda(\bp;x_j,2\eps)}:x_i\in D_{\eps}\right\}.
        \label{eq:continuous-binning-construction}
    \end{align}
    Clearly, $|\Pi_0|=I<\infty$ because the diameter of $\calF_P$ is bounded as stated in Assumption~\ref{assump:continuous}. We can also verify that $\Pi_0$ satisfies $\sum_{i\in[I]}w_i(\bp)=1$ for all $\bp\in\calF_P$ because $w_i(\bp)$ is defined as the normalized tent function. 

    Now we prove that $\Pi_0$ satisfies the desired property.
    Since the agent is adaptively calibrated to $\Pi_0$, we have that $\forall i\in[I]$,
    \begin{align*}
        \frac{n_{[M]}(i)}{M}\left\|\bar{\bp}_{[M]}(i)-\h\right\|_2
        \le\sqrt{\numP} \lim_{M\to\infty}\frac{n_{[M]}(i)}{M}\left\|\bar{\bp}_{[M]}(i)-\h\right\|_\infty\le\sqrt{m}r(M).
    \end{align*}
    
    Now, for $\delta=3\eps=\frac{3}{4}\eps_0$, let $D_\eps^{(\delta)}\subseteq D_\eps$ be defined as
    \begin{align}
        D_\eps^{(\delta)}=\left\{x_i\in D_\eps:\|x_i-\h\|\ge\delta\right\}.
        \label{eq:def-D-eps-delta}
    \end{align}
    Since $|D_\eps^{(\delta)}|\le|D_\eps|=I<\infty$, taking the sum of calibration error over bins in $D_\eps^{(\delta)}$, we obtain
    \begin{align}
        \sum_{x_i\in D_\eps^{(\delta)}}\frac{n_{[M]}(i)}{M}\left\|\bar{\bp}_{[M]}(i)-\h\right\|_2
        =\frac{1}{\epochlen}\sum_{x_i \in D_\eps^{(\delta)}}\left\|
        \sum_{s\in[\epochlen]}w_i(\bp_s)(\h-\bp_s)
        \right\|_2\le \sqrt{\numP}I r(\epochlen).\label{eq:tmp}
    \end{align}
    We can further lower bound \eqref{eq:tmp} and get:
    \begin{align*}
        &\frac{1}{\epochlen}\sum_{x_i \in D_\eps^{(\delta)}}\left\|
        \sum_{s\in[\epochlen]}w_i(\bp_s)(\h-\bp_s)
        \right\|_2\\
        =&\;\frac{1}{\epochlen}\sum_{x_i \in D_\eps^{(\delta)}}\left\|
        \sum_{s\in[\epochlen]}w_i(\bp_s)\Big((\h-x_i)+(x_i-\bp_s)\Big)
        \right\|_2\\
        \stepa{\ge}&\;\frac{1}{\epochlen}\sum_{x_i \in D_\eps^{(\delta)}}\left(\left\|
        \sum_{s\in[\epochlen]}w_i(\bp_s)(\h-x_i)
        \right\|_2-\left\|
        \sum_{s\in[\epochlen]}w_i(\bp_s)(x_i-\bp_s)
        \right\|_2\right)\\
        \stepb{\ge}&\;\frac{1}{\epochlen}\sum_{x_i \in D_\eps^{(\delta)}}\sum_{s\in[\epochlen]}w_i(\bp_s)\Big(\left\|
        \h-x_i
        \right\|_2-\left\|
        x_i-\bp_s
        \right\|_2\Big)\\
        \stepc{\ge}&\;\frac{1}{\epochlen}\sum_{x_i \in D_\eps^{(\delta)}} n_{[\epochlen]}(i)(\delta-2\eps)\ge\frac{\eps_0}{4\epochlen}\sum_{x_i \in D_\eps^{(\delta)}} n_{[\epochlen]}(i).
    \end{align*}
    In the above inequalities, (a) and (b) are both due to triangle inequalities, and (c) is because $\|\h-x_i\|_2\ge\delta$ from the definition of $D_\eps^{(\delta)}$ in \eqref{eq:def-D-eps-delta} and $\|x_i-\bp_s\|_2<2\eps$ whenever $w_i(\bp_s)>0\iff \Lambda(\bp_s;x_i,2\eps)>0$. Together with \eqref{eq:tmp}, the above set of inequalities imply
    \begin{align}
        \frac{1}{\epochlen}\sum_{x_i \in D_\eps^{(\delta)}} n_{[t]}(i)\le\left(\frac{4}{\eps_0}\right)
         \frac{1}{\epochlen}\sum_{x_i \in D_\eps^{(\delta)}}\left\|
        \sum_{s\in[\epochlen]}w_i(\bp_s)(\h-\bp_s)
        \right\|_2\le \frac{4\sqrt{\numP}I}{\eps_0}r(\epochlen).
        \label{ineq:effective-count}
    \end{align}
    
    On the other hand, since $D_\eps$ is an $\eps$-grid of $\calF_P$, if $\|\bp_s-\h\|_2\ge\eps_0$, there must exist $x_i\in D_\eps$ such that $\|x_i-\bp_s\|_2\le\eps$, which implies 
    \begin{align*}
        \|x_i-\h\|_2\ge \|\bp_s-\h\|_2-\|x_i-\bp_s\|_2\ge\eps_0-\eps=\frac{3}{4}\eps_0=\delta\quad\Rightarrow\quad
        x_i\in D_\eps^{(\delta)}.
    \end{align*}
    As for the weight that $w_i$ assigns to $\bp_s$, we also have
    \begin{align*}
        w_i(\bp_s)=\frac{\Lambda(\bp_s;x_i,2\eps)}{\sum_{j\in[I]}\Lambda(\bp_s;x_j,2\eps)}\ge\frac{2\eps-\eps}{I\cdot 2\eps}=\frac{1}{2I}.
    \end{align*}
    Therefore, we have
    \begin{align}
        \frac{1}{\epochlen}\Big|\{s\in[\epochlen]:\|\bp_s-\h\|\ge\eps_0\}\Big|\le \frac{1}{\epochlen}\sum_{x_i \in D_\eps^{(\delta)}}\sum_{s\in[\epochlen]} (2I) w_i(\bp_s)=\frac{2I}{t}\sum_{x_i \in D_\eps^{(\delta)}} n_{[\epochlen]}(i)\label{eq:goal}
    \end{align}
    Finally, combining inequalities \eqref{ineq:effective-count} and \eqref{eq:goal}, we conclude that
    \begin{align*}
        \frac{1}{\epochlen}\Big|\{s\in[\epochlen]:\|\bp_s-\h\|\ge\eps_0\}\Big|
        \le (2I)\lim_{\epochlen\to\infty}\frac{1}{\epochlen}\sum_{x_i \in D_\eps^{(\delta)}} n_{[\epochlen]}(i)
        \le \frac{8\sqrt{\numP}I^2}{\eps_0}r(\epochlen),
    \end{align*}
    which proves \eqref{eq:rate}. The proof is complete by taking the limit of $\epochlen\to0$, which guarantees $r(\epochlen)\to0$ and immediately implies the convergence result in \eqref{eq:convergence-only}.\Halmos
\endproof

Note that the rate in Equation~\eqref{eq:rate} does not depend on strategy $\h$. Therefore, in the context of running $\lazygdwog$ (Algorithm~\ref{algo:gd}), we can turn \Cref{lemma:continuous-repeated} into the following uniform convergence result across epochs:

\begin{proposition}
\label{prop:avg-action-uniform-convergence}
For any $\eps_0>0$, there exists a finite binning $\Pi_0$, such that $\forall\Phi>0$, if 
the principal runs $\lazygdwog$ for $\Phi$ epochs where each epoch has length $\epochlen$, and
the agent's forecasts $(\bp_{\phi,s})_{\phi\in[\Phi],s\in[\epochlen]}$ are adaptively calibrated with respect to $\Pi_0$, then we have the following \emph{uniform convergence} guarantee:
    \begin{align}
        \lim_{\epochlen\to\infty}\sup_{\phi\in[\Phi]}
        \frac{1}{\epochlen}\Big|\{s\in[\epochlen]:\|\bp_{\phi,s}-\h_\phi\|_2\ge\eps_0\}\Big|=0
        \label{eq:uniform-convergence}
    \end{align}
\end{proposition}

\begin{remark}
    \label[remark]{remark:l1-calibration}
    Note that the rate in \eqref{eq:rate} has a polynomial dependency on $|\Pi_0|$, which, due to the construction in the proof of \Cref{lemma:continuous-repeated}, ends up being exponential in $\numP$ because it is the size of a $\frac{\eps_0}{4}$ grid of the domain $\actionP$. To improve on this exponential dependency, one possible approach is to design an adaptive calibration algorithm for the agent that achieves the stronger notion of $\ell_1$ calibration, which is more common in recent literature. For example, \citet{hart2022calibrated,foster1997calibrated,foster1998asymptotic} are defined using $\ell_1$ calibration error rather than $\ell_\infty$. Another approach is to avoid using naive conversion from $\ell_\infty$ to $\ell_1$ calibration error in \eqref{eq:tmp}, which leads to a polynomial dependency on the number of bins.
    These two approaches are equivalent ways of formulating the problem, and they both lead to interesting open directions.
\end{remark}

\subsection{More auxiliary lemmas: approximate best response and closeness in utility}

In this section, we use the results in \Cref{app:continuous-key-lemma} to show that the average feedback $\frac{1}{\epochlen}\sum_{s\in[\epochlen]}y_{\phi,s}$ in epoch $\phi\in[\Phi]$ is close to the best response $\BR(\h_{\phi})$ (\Cref{lemma:feedback-close}), and that the principal's average utility in this epoch is close to $U_P(\h_\phi,\BR(\h_{\phi}))$ (\Cref{lemma:utility-close}).

\begin{lemma} For any $\eps_1>0$, there exists a finite binning $\Pi_0$ and $M_0<\infty$ such that when agent's forecasts $\bp_{1:t}$ are adaptively calibrated with respect to $\Pi_0$, then we have that $\forall M\ge M_0$,
\label[lemma]{lemma:feedback-close}
    \begin{align*}
        \sup_{\phi\in[\Phi]}\left\|\frac{1}{\epochlen}\sum_{s\in[\epochlen]}y_{\phi,s}-\BR(\h_\phi)\right\|_2\le\eps_1.
    \end{align*}
\end{lemma}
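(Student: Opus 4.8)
The plan is to reduce this statement to the asymptotically--correct--forecast guarantee of \Cref{lemma:continuous-repeated} together with its uniform--over--epochs strengthening \Cref{prop:avg-action-uniform-convergence}. The starting observation is that the agent best responds to her forecast, so $y_{\phi,s}=\BR(\bp_{\phi,s})$ for every round $s$ of every epoch $\phi$, and since $\BR$ is $L_{\BR}$-Lipschitz by \Cref{assump:continuous}(ii), a forecast that is close to the actually--played strategy $\h_\phi$ yields an action close to $\BR(\h_\phi)$: concretely $\bigl\|y_{\phi,s}-\BR(\h_\phi)\bigr\|_2\le L_{\BR}\bigl\|\bp_{\phi,s}-\h_\phi\bigr\|_2$. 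Note that $\h_\phi=\x_\phi+\delta_\phi S_\phi$ is a legal strategy (so $\BR(\h_\phi)$ is well defined) because of the projection onto $B_2(\actionP,-\delta_\phi)$ in \Cref{algo:gd}.

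First I would fix $\eps_0:=\eps_1/(2L_{\BR})$ and take $\Pi_0$ to be the finite binning that \Cref{lemma:continuous-repeated} produces for this value, with associated calibration rate $r(\cdot)\in o(1)$. During epoch $\phi$ the principal plays the same $\h_\phi\in\actionP$ for $\epochlen$ rounds, so \Cref{lemma:continuous-repeated} applies verbatim to the sub-sequence $\bp_{\phi,1:\epochlen}$, and \Cref{prop:avg-action-uniform-convergence} makes this uniform in $\phi$: writing $B_\phi:=\{s\in[\epochlen]:\|\bp_{\phi,s}-\h_\phi\|_2\ge\eps_0\}$ for the ``bad'' rounds, we get $\sup_{\phi\in[\Phi]}\frac{1}{\epochlen}|B_\phi|\le \frac{8\sqrt{\numP}\,|\Pi_0|^2}{\eps_0}\,r(\epochlen)$.

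Next I would split the empirical average of an epoch into good and bad rounds and apply the triangle inequality (equivalently, Jensen for the norm):
\[
\Bigl\|\frac{1}{\epochlen}\sum_{s\in[\epochlen]}y_{\phi,s}-\BR(\h_\phi)\Bigr\|_2
\le \frac{1}{\epochlen}\sum_{s\notin B_\phi}\bigl\|y_{\phi,s}-\BR(\h_\phi)\bigr\|_2+\frac{1}{\epochlen}\sum_{s\in B_\phi}\bigl\|y_{\phi,s}-\BR(\h_\phi)\bigr\|_2
\le L_{\BR}\eps_0+\frac{|B_\phi|}{\epochlen}\,\diam{\actionA},
\]
using the Lipschitz bound above on good rounds and the crude bound $\|y_{\phi,s}-\BR(\h_\phi)\|_2\le\diam{\actionA}$ (both points lie in the compact set $\actionA$) on bad rounds. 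The first term is $\eps_1/2$ by the choice of $\eps_0$. Since $r(\epochlen)\to0$, I can choose $M_0<\infty$ so that $\frac{8\sqrt{\numP}\,|\Pi_0|^2}{\eps_0}\,r(\epochlen)\cdot\diam{\actionA}\le\eps_1/2$ for all $\epochlen\ge M_0$; combining the two terms and taking $\sup_\phi$ gives the claim.

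The only non--routine ingredient is that the forecast--to--strategy convergence must hold \emph{simultaneously} across all epochs and with a rate quantitative enough to extract an explicit $M_0$ from ``fraction of bad rounds $\to0$'' — but this is exactly what \Cref{prop:avg-action-uniform-convergence} (via the strategy--independent rate in \Cref{lemma:continuous-repeated}) provides, so the remaining work is clean bookkeeping on top of the key lemma. A minor caveat is to keep track of the fact that the same $\Pi_0$ must serve all epochs, which is automatic since $\Pi_0$ depends only on $\eps_0$ and not on the $\h_\phi$.
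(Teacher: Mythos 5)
Your proposal is correct and follows essentially the same route as the paper: fix $\eps_0=\eps_1/(2L_{\BR})$, invoke the uniform-over-epochs convergence of forecasts to $\h_\phi$ (\Cref{prop:avg-action-uniform-convergence}, built on \Cref{lemma:continuous-repeated}), and split each epoch into good rounds (handled by Lipschitzness of $\BR$) and bad rounds (handled by a crude diameter bound whose total contribution vanishes as $\epochlen\to\infty$). The only cosmetic difference is that you bound the bad-round term by $\diam{\actionA}$ directly, whereas the paper routes it through $L_{\BR}\cdot D_P$ via the Lipschitz bound; both are valid.
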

\begin{proof}
Let $\eps_0=\frac{\eps_1}{2 L_{\BR}}$ and $\Pi_0$ be the binning that satisfies \Cref{prop:avg-action-uniform-convergence} for parameter $\eps_0$.
Therefore, we know from Equation~\eqref{eq:uniform-convergence} in \Cref{prop:avg-action-uniform-convergence} that for $\eps_2=\frac{\eps_1}{2\cdot D_P\cdot L_{\BR}}$
there exists $M_0$ such that $\forall M\ge M_0$,
\begin{align}
    \sup_{\phi\in[\Phi]}\frac{1}{M}\Big|\{s\in[M]:\|\bp_{\phi,s}-\h_\phi\|_2\ge\eps_0\}\Big|\le \eps_2.
    \label{eq:fraction-tmp}
\end{align}
Using Lipschitzness of the best response mapping $\BR(\cdot)$, we have that $\forall\phi\in[\Phi]$,
    \begin{align*}
        &\left\|\frac{1}{M}\sum_{s\in[M]}y_{\phi,s}-\BR(\h_\phi)\right\|\\
        \le& \frac{1}{M} \sum_{s\in[\epochlen]} \|y_{\phi,s}-\BR(\h_\phi)\|_2
        & \tag{Triangle inequality}
        \\
        \le& L_{\BR}\frac{1}{M}\sum_{s\in[M]}\left\|\bp_{\phi,s}-\h_\phi\right\|_2
        & \tag{$\BR(\cdot)$ is $L_{\BR}$-Lipschitz}\\
        \le& L_{\BR}\frac{1}{M}\left(
        \sum_{s\in[M]:\|\bp_{\phi,s}-\h_\phi\|\ge\eps_0}\diam{\cH_P}
        +\sum_{s\in[M]:\|\bp_{\phi,s}-\h_\phi\|<\eps_0}\eps_0
        \right)\\
        \le& L_{\BR}\frac{1}{M}\left( \eps_2 M\cdot D_P+M\cdot\eps_0  \right)
        &\tag{Eq.~\eqref{eq:fraction-tmp} \& $\diam{\cH_P}\le D_P$}
        \\
        \le& D_P\cdot L_{\BR}\cdot \eps_2+
         L_{\BR}\cdot\eps_0=\frac{\eps_1}{2}+\frac{\eps_1}{2}=\eps_1.
    \end{align*}
\end{proof}

\begin{lemma}
    \label[lemma]{lemma:utility-close}
    For any $\eps_1>0$, there exists a finite binning $\Pi_0$ and $M_0<\infty$ such that when agent's forecasts $\bp_{1:t}$ are adaptively calibrated with respect to $\Pi_0$, then we have that $\forall M\ge M_0$,
    \begin{align}
        \sup_{\phi\in[\Phi]}\left|\frac{1}{\epochlen}\sum_{s\in[\epochlen]}
        U_P(\h_\phi,y_{\phi,s})-U_P(\h_\phi,\BR(\h_\phi))\right|\le\eps_1.
    \end{align}
\end{lemma}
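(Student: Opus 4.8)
The plan is to adapt the proof of \Cref{lemma:feedback-close} almost verbatim, the only change being that the Lipschitz constant of $\BR$ gets replaced by the Lipschitz constant of the composed map $\bp\mapsto U_P(\h_\phi,\BR(\bp))$. First I would record the relevant per-round estimate. Since the agent always plays a best response to her own forecast, $y_{\phi,s}=\BR(\bp_{\phi,s})$ for all $\phi\in[\Phi]$ and $s\in[\epochlen]$, so combining the $L_2$-Lipschitzness of $U_P$ in its second argument with the $L_{\BR}$-Lipschitzness of $\BR$ (both from Assumption~\ref{assump:continuous}) gives
\[
\bigl|U_P(\h_\phi,y_{\phi,s})-U_P(\h_\phi,\BR(\h_\phi))\bigr|
\;\le\; L_2\,\bigl\|\BR(\bp_{\phi,s})-\BR(\h_\phi)\bigr\|_2
\;\le\; L_2 L_{\BR}\,\|\bp_{\phi,s}-\h_\phi\|_2 ,
\]
and in particular this quantity is always at most $L_2 L_{\BR} D_P$, because $\|\bp_{\phi,s}-\h_\phi\|_2\le D_P$ by Assumption~\ref{assump:continuous}(iii).

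Next I would fix the parameters. Set $\eps_0=\frac{\eps_1}{2L_2 L_{\BR}}$ and let $\Pi_0$ be the finite binning furnished by \Cref{prop:avg-action-uniform-convergence} for this value of $\eps_0$; then set $\eps_2=\frac{\eps_1}{2L_2 L_{\BR}D_P}$ and let $M_0<\infty$ be the threshold obtained from the limit in \Cref{prop:avg-action-uniform-convergence} (specialized to $\eps_2$), so that for all $M\ge M_0$,
\[
\sup_{\phi\in[\Phi]}\frac{1}{M}\Bigl|\{s\in[M]:\|\bp_{\phi,s}-\h_\phi\|_2\ge\eps_0\}\Bigr|\le\eps_2 .
\]
Fixing any epoch $\phi$ and splitting the sum over $s\in[M]$ into the ``bad'' rounds where $\|\bp_{\phi,s}-\h_\phi\|_2\ge\eps_0$ and the ``good'' rounds where $\|\bp_{\phi,s}-\h_\phi\|_2<\eps_0$, the triangle inequality and the per-round estimate above yield
\[
\left|\frac{1}{M}\sum_{s\in[M]}U_P(\h_\phi,y_{\phi,s})-U_P(\h_\phi,\BR(\h_\phi))\right|
\le \frac{1}{M}\Bigl(\,\sum_{\text{bad }s} L_2 L_{\BR} D_P \;+\; \sum_{\text{good }s} L_2 L_{\BR}\eps_0\Bigr)
\le \eps_2\, L_2 L_{\BR} D_P + L_2 L_{\BR}\eps_0 = \tfrac{\eps_1}{2}+\tfrac{\eps_1}{2}=\eps_1 ,
\]
where I used that there are at most $\eps_2 M$ bad rounds, uniformly over $\phi$. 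Since the bound does not depend on $\phi$, taking the supremum over $\phi\in[\Phi]$ gives the claim.

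The point that actually requires care is that the binning $\Pi_0$ and the burn-in length $M_0$ must be chosen once and for all, independently of the (random, adversarially-correlated) sequence of strategies $\h_1,\dots,\h_\Phi$ that $\lazygdwog$ produces; this is exactly why one invokes the \emph{uniform}-over-epochs statement \Cref{prop:avg-action-uniform-convergence} rather than the per-strategy \Cref{lemma:continuous-repeated}. Once that uniformity is granted there is no further obstacle — the genuine work was already done in the key lemma, and what remains is only the two-term Lipschitz split and the balancing of parameters above. If an explicit rate is wanted rather than the mere existence of $M_0$, one substitutes the quantitative bound $\frac{8\sqrt{\numP}|\Pi_0|^2}{\eps_0}\,r(M)$ from \eqref{eq:rate} for the fraction of bad rounds, turning the conclusion into $\sup_{\phi\in[\Phi]}\bigl|\frac1M\sum_{s}U_P(\h_\phi,y_{\phi,s})-U_P(\h_\phi,\BR(\h_\phi))\bigr|\le \frac{8\sqrt{\numP}|\Pi_0|^2 L_2 L_{\BR} D_P}{\eps_0}\,r(M)+L_2L_{\BR}\eps_0$.
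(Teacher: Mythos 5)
Your proposal is correct and follows essentially the same route as the paper: the paper likewise reduces the claim to the argument of \Cref{lemma:feedback-close} via the $L_2$-Lipschitzness of $U_P$ in its second argument, choosing exactly the constants $\eps_0=\frac{\eps_1}{2L_2L_{\BR}}$ and $\eps_2=\frac{\eps_1}{2D_PL_{\BR}L_2}$ and invoking the uniform-over-epochs statement of \Cref{prop:avg-action-uniform-convergence}. Your good/bad round split and the closing remark on why the uniformity over $\phi$ is the only delicate point match the paper's reasoning.
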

\begin{proof}
    The proof of this lemma is very similar to that of \Cref{lemma:feedback-close}, with a different choice of constants $\eps_0$ and $\eps_2$. Note that since $U_P$ is $L_2$-Lipschitz in the second argument, we have
    \begin{align*}
        \left|\frac{1}{\epochlen}\sum_{s\in[\epochlen]}
        U_P(\h_\phi,y_{\phi,s})-U_P(\h_\phi,\BR(\h_\phi))\right|
        \le&\frac{1}{M}\sum_{s\in[\epochlen]}\left\|U_P(\h_\phi,y_{\phi,s})-U_P(\h_\phi,\BR(\h_\phi))\right\|_2\\
        \le&L_2\cdot\frac{1}{M}\sum_{s\in[\epochlen]}\left\|y_{\phi,s}-\BR(\h_\phi)\right\|_2.
    \end{align*}
    The rest of the proof follows from \Cref{lemma:feedback-close} by choosing $\eps_0=\frac{\eps_1}{2 L_2 L_{\BR}}$ and $\eps_2=\frac{\eps_1}{2\cdot D_P\cdot L_{\BR}L_2}$.
\end{proof}

\end{document}